\crefname{hypothesis}{Hypothesis}{Hypotheses}
\newcommand{\IF}{\textbf{if}~}
\newcommand{\THEN}{~\textbf{then}~}
\crefname{fact}{fact}{facts}
\Crefname{fact}{Fact}{Facts}
\definecolor{okabe1}{HTML}{000000}
\definecolor{okabe2}{HTML}{E69F00}
\definecolor{okabe3}{HTML}{56B4E9}
\definecolor{okabe4}{HTML}{009E73}
\definecolor{okabe5}{HTML}{F0E442}
\definecolor{okabe6}{HTML}{0072B2}
\definecolor{okabe7}{HTML}{D55E00}
\definecolor{okabe8}{HTML}{CC79A7}
\newcommand{\cmark}{\textcolor{okabe4}{\ding{51}}}%
\newcommand{\xmark}{\textcolor{okabe7}{\ding{55}}}%
\renewcommand{\emph}[1]{\textit{\textbf{#1}}}
\let\epsilon\varepsilon
\newcommand{\verytiny}{\fontsize{4pt}{5pt}\selectfont}
\newcommand{\keyvalue}[2]{%
    \IfStrEq{#1}{.}{$-\infty$}{
        \IfStrEq{#1}{,}{$\infty$}{
            \ifnum#2=-1
                #1%
            \else
                \pgfmathtruncatemacro{\next}{#2+1}%
                \textbf{\textcolor{okabe4}{\StrLeft{#1}{#2}}}%
                \textbf{\textcolor{okabe7}{\StrChar{#1}{\next}}}%
                \StrGobbleLeft{#1}{\next}%
            \fi
        }%
    }%
}
\newcommand{\lcp}[3][]{%
    \StrCompare{#2}{#3}[\lcp@result]%
    \ifnum\lcp@result=0%
        \StrLen{#2}[\lcp@length]%
        \def\lcp@final{\lcp@length}%
    \else%
        \def\lcp@final{\number\numexpr\lcp@result-1\relax}%
    \fi%
    \if\relax\detokenize{#1}\relax
        \lcp@final%
    \else%
        \@ifundefined{c@#1}{%
            \newcounter{#1}%
            \setcounter{#1}{\lcp@final}%
        }{%
            \setcounter{#1}{\lcp@final}%
        }%
    \fi%
}
\newcommand{\drawReducedWidthBrace}[3]{
    \ifthenelse{\equal{#2}{above}}{
        \draw[decorate,decoration={brace,amplitude=5pt}] 
        ([xshift=5pt]#1.north west) -- ([xshift=-5pt]#1.north east) 
        node[midway,above=5pt] (#1-above) {#3};
    }{
        \draw[decorate,decoration={brace,amplitude=5pt,mirror}] 
        ([xshift=5pt]#1.south west) -- ([xshift=-5pt]#1.south east) 
        node[midway,below=5pt] (#1-below) {#3};
    }
}
\newcommand{\convexpath}[2]{
[   
    create hullnodes/.code={
        \global\edef\namelist{#1}
        \foreach [count=\counter] \nodename in \namelist {
            \global\edef\numberofnodes{\counter}
            \node at (\nodename) [draw=none,name=hullnode\counter] {};
        }
        \node at (hullnode\numberofnodes) [name=hullnode0,draw=none] {};
        \pgfmathtruncatemacro\lastnumber{\numberofnodes+1}
        \node at (hullnode1) [name=hullnode\lastnumber,draw=none] {};
    },
    create hullnodes
]
($(hullnode1)!#2!-90:(hullnode0)$)
\foreach [
    evaluate=\currentnode as \previousnode using \currentnode-1,
    evaluate=\currentnode as \nextnode using \currentnode+1
    ] \currentnode in {1,...,\numberofnodes} {
  let
    \p1 = ($(hullnode\currentnode)!#2!-90:(hullnode\previousnode)$),
    \p2 = ($(hullnode\currentnode)!#2!90:(hullnode\nextnode)$),
    \p3 = ($(\p1) - (hullnode\currentnode)$),
    \n1 = {atan2(\y3,\x3)},
    \p4 = ($(\p2) - (hullnode\currentnode)$),
    \n2 = {atan2(\y4,\x4)},
    \n{delta} = {-Mod(\n1-\n2,360)}
  in 
    {-- (\p1) arc[start angle=\n1, delta angle=\n{delta}, radius=#2] -- (\p2)}
}
-- cycle
}
\title{Zip-Tries: Simple Dynamic Data Structures for Strings}
\author{David Eppstein\thanks{Department of Computer Science, The University of California, Irvine, USA (\texttt{eppstein@uci.edu}, \texttt{ogila@uci.edu}, \texttt{goodrich@uci.edu}, \texttt{ryutok@uci.edu}).}
\and Ofek Gila\footnotemark[1]
\and Michael T. Goodrich\footnotemark[1]
\and Ryuto Kitagawa\footnotemark[1]}
\date{}
\begin{document}

\setcounter{page}{0}
\maketitle

\fancyfoot[R]{\scriptsize{Copyright \textcopyright\ 2025 by SIAM\\
Unauthorized reproduction of this article is prohibited}}
\begin{abstract}


In this paper, we introduce \emph{zip-tries}, which are simple, dynamic,
memory-efficient data structures for strings.
Zip-tries support search and update operations for $k$-length strings in
$\mathcal{O}(k+\log n)$ time in the standard RAM model or in
$\mathcal{O}(k/\alpha+\log n)$ time in the word RAM model, where $\alpha$ is the
length of the longest string that can fit in a memory word,
and $n$ is the number of strings in the trie.
Importantly, we show how zip-tries can achieve this while only requiring
$\mathcal{O}(\log{\log{n}} + \log{\log{\frac{k}{\alpha}}})$ bits of metadata per node w.h.p.%
, which is an exponential improvement over previous results for long strings.
Despite being considerably simpler and more memory efficient, we show how
zip-tries perform competitively with state-of-the-art data structures on large
datasets of long strings.

Furthermore, we provide a simple, general framework for parallelizing string
comparison operations in linked data structures, which we apply to zip-tries to
obtain \emph{parallel zip-tries}.
Parallel zip-tries are able to achieve good search and update performance in
parallel, performing such operations in $\mathcal{O}(\log{n})$ span.
We also apply our techniques to an existing external-memory string data
structure, the \textit{string B-tree}, obtaining a \emph{parallel string B-tree}
which performs search operations using $\mathcal{O}(\log_B{n})$ I/O span and
$\mathcal{O}(\frac{k}{\alpha B} + \log_B{n})$ I/O work in the parallel external
memory (PEM) model.
The parallel string B-tree can perform prefix searches using only
$\mathcal{O}(\frac{\log{n}}{\log{\log{n}}})$ span under the practical PRAM
model.

For the case of long strings that share short common prefixes, we provide
\textit{LCP-aware} variants of all our algorithms that should be quite efficient
in practice, which we justify empirically.
\end{abstract}

\clearpage
\section{Introduction}

A \emph{trie}, or \emph{digital tree},
is a classical tree structure designed for storing and performing
operations on a collection of character strings, known as a ``dictionary.'' 
Each string is represented as a path from
the root to a leaf, with each edge representing a character in the string. 
There are numerous trie variants, including the well-known
Patricia trie~\cite{PATRICIA}, 
where every non-branching path is
compressed into a single edge. 
Tries are simple structures used extensively in 
practice~\cite{gonnetExhaustiveMatchingEntire1992,chenTriebasedDataStructures1997,wonEfficientApproachSequence2006,holleyBloomFilterTrie2016},
but they are neither
well-suited for large data sets with long strings in external memory nor for use in parallel algorithms.
Thus, it would be of interest to have a string data structure that is simple and practical
while also being adaptable to external-memory and parallel settings.

While strings may be long and string dictionaries can be large,
the size of the alphabet, $\sigma,$ is often small,
with ASCII containing only 128 characters and DNA sequences containing only 4;
hence we assume $\sigma$ is a fixed constant. 
The \emph{word RAM} and \emph{practical RAM} 
models~\cite{miltersenLowerBoundsStatic1996}
take advantage of this by packing $\alpha = \bigl\lfloor w / \lceil\log_2{\sigma}\rceil\bigr\rfloor $ 
characters into a machine word, where $w$ is the word size in bits. 
The word RAM allows
AC$^0$ operations to be performed on machine words in $\mathcal{O}(1)$ time,
and the practical RAM model allows constant-time operations that are common in modern
CPU machine languages, such as bitwise-AND, bitwise-OR, 
and most significant bit (MSB), e.g., see~\cite{thorup-ac0}.
Both models allow comparisons between two strings of length $k$ to be performed in
$\mathcal{O}(k/\alpha)$ time. 
While traditional trie variants such as the
Patricia trie do not take advantage of this, more recent 
variants do~\cite{belazzouguiDynamicZfastTries2010,DBLP:journals/ieicet/TakagiISA17,tsurutaCtrieDynamicTrie2022},
by having each node represent machine words rather than individual characters,
and incorporating hash tables or dynamic predecessor/successor data
structures, such as z-fast tries~\cite{belazzouguiDynamicZfastTries2010},
to each node. 
Although these structures achieve
good theoretical bounds, they are difficult to implement and,
due to their sequential nature,
are not easily parallelizable.


Another well-known data structure, typically designed for
numerical keys or keys that are fixed-length numerical tuples,
is the skip list, which stores a sorted collection of items to support
efficient search, insertion, and deletion operations.
One of the key benefits of skip lists over similar data
structures is its inherent 
simplicity~\cite{pughSkipListsProbabilistic1990,munroDeterministicSkipLists1992},
leading to their wide use in practice.
Unfortunately, skip lists
are inefficient for strings, since comparisons between strings of
length $k$ takes time $\mathcal{O}(k)$, or $\mathcal{O}(k/\alpha)$ in
the word RAM model. Search and update operations on a skip list perform
$\mathcal{O}(\log{n})$ comparisons (w.h.p.~in randomized 
versions)~\cite{pughSkipListsProbabilistic1990,munroDeterministicSkipLists1992},
leading to a total time complexity of
$\mathcal{O}((k/\alpha) \log{n})$ when applied ``off the shelf'' to 
string keys. The logarithmic factor in this bound negates the advantages of
the word RAM model for long strings under realistic values of $\alpha$.

There has been some work adapting skip lists for strings in the external memory
model---where the goal is to minimize I/O operations when the data
structure exceeds main memory---notably with the self-adjusting
skip list (SASL) by Ciriani et al.~\cite{cirianiDataStructureSequence2007}.
A similar external memory adaptation was developed on the B-tree data
structure with the string B-tree of Ferragina and
Grossi~\cite{ferraginaStringBtreeNew1999}.
Both data structures support common trie operations such as prefix search,
predecessor/successor queries, insertions, and deletions in $\mathcal{O}(k/B
+ \log_B{n})$ I/Os, where $B$ is the block size.
Furthermore, Grossi and Italiano provide a framework for adapting
one-dimensional linked data structures to maintain multidimensional data by
storing two additional pointers and corresponding longest common prefix (LCP)
lengths to each node in order to generally
achieve search, insert, and delete operations in $\mathcal{O}(k + \log{n})$
time\footnote{The exact time complexity may vary depending on the underlying
data structure.}~\cite{10.1007/3-540-48523-6_34}.
In 2003, Crescenzi, Grossi, and Italiano showed that these techniques, when
applied to unbalanced binary search trees and AVL trees, were experimentally
competitive with existing string data structures~\cite{10.1007/3-540-44867-5_7}.

Recently there have been exciting developments in the area of randomized binary
search trees, with the introduction of the zip-tree data structure by
Tarjan, Levy, and Timmel~\cite{tarjanZipTrees2021a}, which is a simple,
practical, and efficient binary search tree algorithm that only requires
$\mathcal{O}(\log{\log{n}})$ bits of metadata per node w.h.p.
Several simple variants called zip-zip trees were introduced in 2023
showing how to improve its depth while maintaining a low metadata cost,
supporting strong history independence, biased keys, and/or partially
persistent~\cite{gilaZipzipTreesMaking2023}.
The zip-tree and its variants are notably incompatible with the existing
techniques for adapting one-dimensional data structures to support strings,
since they use at most $\mathcal{O}(\log{\log{n}})$ bits of metadata per node which
would be overshadowed by the $\mathcal{O}(\log{k})$ bits required to store LCP
length metadata for strings of moderate lengths, and since they
rely on efficient and simple `zip' and `unzip' procedures for
update operations as opposed to rotations.

\subsection*{Our Results.}
In this paper, we introduce several simple variants of zip-zip trees to support
high-dimensional variable-length data that is sorted lexically, such as
character strings.
We first introduce the \emph{zip-trie}, which supports such data while using
half as many pointers and exponentially less metadata per node than would be
required by existing methods.
The zip-trie is a simple modification of the zip-zip tree which
uses word-level parallelism to perform common trie operations such as prefix
search, range query, predecessor/successor queries, and insertions and deletions
efficiently.
Despite using far less memory than existing methods, the zip-trie is still
\textit{LCP-aware}, meaning that its time depends
on the length of the longest common prefix (LCP) between the query and the
strings in the data structure rather than the entire query length.
This property is especially useful when inserting
or searching for long strings that only share a short prefix with the strings in
the data structure.
For example, a recent collection of biosynthetic gene clusters, ABC-HuMi
\cite{hirschABCHuMiAtlasBiosynthetic2024}, contains around 19,000 genes whose
median length ($k$) is over 18,000 nucleotides, but whose median LCP length is
only 7.
The time for basic operations on this structure, for LCP length $\ell$, is
$\mathcal{O}(\ell/\alpha+\log n)$~(\Cref{theorem:zip-trie}).
We also introduce the \emph{parallel zip-trie} which achieves good theoretical
parallel bounds, both in the \emph{practical PRAM} model---a parallel extension
of the practical RAM model---and in the \emph{parallel external memory} (PEM)
model~\cite{argeFundamentalParallelAlgorithms2008}, where processors are allowed
to perform I/Os in parallel.

Both the memory improvements and the parallelization techniques we introduce
are general and can be applied to the existing framework of Grossi and Italiano
to adapt any existing one-dimensional linked data structures to maintain
multidimensional data.
We show how applying these techniques to the string B-tree of Ferragina and
Grossi~\cite{ferraginaStringBtreeNew1999} creates a span and work optimal
string dictionary data structure in the parallel external memory model, with
$\mathcal{O}(\log_B{n})$ I/O span and with $\mathcal{O}(\frac{k}{\alpha B} +
\log_B{n})$ I/O work, and also yields good results in the practical PRAM model.
These results, stated explicitly in
\Cref{thm:string-b-tree-pem,thm:string-b-tree-pram}, are to the best of our
knowledge the best known bounds for parallel trie operations in the PRAM and
PEM models.

Finally, we performed experiments comparing both our sequential and parallel
zip-trie data structures to the best previous sequential dynamic trie library,
\texttt{c-trie++} \cite{tsurutaCtrieDynamicTrie2022}, on real world datasets.
We found that our sequential zip-trie data structure, despite being
significantly simpler, outperforms the \texttt{c-trie++} in terms of trie
construction and search operations when the strings are large, and performed
competitively in other settings.
\texttt{c-trie++} does not currently provide predecessor/successor queries, prefix
search, or range queries, so we were unable to compare these operations.
Furthermore, unlike our trie variants, \texttt{c-trie++} is not easily
parallelizable.

\subsection*{Related Work.}
%
Not surprisingly, there is a wealth of additional related work, including
work on string data structures, parallel algorithms (e.g., for the PRAM and
PEM models), and external-memory algorithms.
Rather than review this material here, we include a review
of other related work in an appendix.

\section{The Bookend Search Paradigm}\label{sec:paradigm}

In this section we discuss techniques for searching for multi-dimensional
keys, such as character strings.

\subsection*{A Review of Bookend Searching.}
We begin by reviewing
a general paradigm introduced by Grossi and Italiano
for maintaining multi-dimensional keys in a data structure that would otherwise
only support one-dimensional keys~\cite{10.1007/3-540-48523-6_34}.
Naively, a data structure containing $n$ one-dimensional keys that supports access
operations in $A(n)$ time can be adapted to support $k$-dimensional keys with
access operations in $\mathcal{O}(A(n) \times k)$ time, since each individual
comparison may take up to $\mathcal{O}(k)$ time.
Grossi and Italiano realized that by exploiting the lexical ordering of the
keys, the access time could be reduced to $\mathcal{O}(A(n) + k)$.
The fundamental idea is simple---avoid unnecessary comparisons and reuse the
results of previous comparisons. 
Specifically, define a set of nodes $\pi_v$ as the set of nodes that are
traversed when searching for a key $v$, i.e., $v$'s ancestors, and let $\pi_v^-$
and $\pi_v^+$ be $v$'s immediate lexical predecessor and successor in $\pi_v$, or
$-\infty$ and $\infty$ if $v$ has no predecessor or successor, respectively.
Grossi and Italiano showed that by having each node $v$ store a pointer to
$\pi_v^-$ and $\pi_v^+$, and by storing the lengths of the longest common prefix
(LCP) between $v$ and $\pi_v^-$ and $\pi_v^+$, 
which we refer to as $\ell_{v, \pi_v^-}$ and $\ell_{v, \pi_v^+}$, respectively, unnecessary comparisons can be
avoided.
They provide a procedure called \textsc{k-Access}, which uses this metadata to
efficiently search for a search key $x$ in the data
structure~\cite{10.1007/3-540-48523-6_34}.
We call this technique \emph{bookend} searching, since it maintains 
predecessor and successor information.

\subsection*{Our \textsc{k-Compare} Bookend Search Method.}
We provide a simplified procedure we call \textsc{k-Compare} that can be
invoked whenever a comparison is needed to support bookend search,
but without each node explicitly needing to store any pointers to its ancestors.
See \Cref{fig:k-comp}.
To understand how this procedure works, consider one step in a search for a
key $x$ in a linked data structure, like a binary search tree.
Let $v$ be the node currently being compared to $x$, and assume that $x$ shares
its longest common prefix with its successor $\pi^+$ out of the nodes in $\pi =
\pi_v$.
Let's consider the following cases:
\begin{enumerate}[label=(\alph*)]
	\item If $\ell_{x, \pi^+} > \ell_{v, \pi^+}$, then $v$ differs from $\pi^+$
	earlier than $x$ does, and since $v \prec \pi^+$ ($v$ precedes $\pi^+$ lexically), it must also be true that
	$v \prec x$. Thus, the result of this comparison is $+$ with LCP length 
	$\ell_{v, \pi^+}$.

	\item If $\ell_{x, \pi^+} < \ell_{v, \pi^+}$, then $x$ differs from $\pi^+$
	earlier than $v$ does, and since $x \prec \pi^+$, it must also be true that
	$x \prec v$. Thus, the result of this comparison is $-$ with LCP length
	$\ell_{x, \pi^+}$.

	\item If $\ell_{x, \pi^+} = \ell_{v, \pi^+}$, then an actual comparison must
	be performed to not only determine the LCP length between $x$ and $v$, but
	also to determine the result of the comparison.
\end{enumerate}
We obtain not only a comparison result used for traversal, but also 
an LCP length which can be used to update $\pm$ and $\ell_{x, \pi^\pm}$ during
the course of the traversal that can be used during update operations.
The only expensive multi-dimensional LCP length computations performed start from
the result of the previous such computation, resulting in at most
$\mathcal{O}(\ell)$ additional time spent overall, where $\ell$ is the length of
the longest common prefix between the search key $x$ and the keys in the data
structure, and where $\ell$ is at most $k$.
This dependence on $\ell$ as opposed to $k$ is why we refer to this algorithm as
being \textit{LCP-aware}.


\begin{figure}[hbtp]
    \centering
	\begin{algorithmic}[]
		\Procedure{k-Compare}{$x$, $v$, $\pm$, $\ell_{x, \pi^\pm}$}
			\State \IF $\ell_{x, \pi^\pm} > \ell_{v, \pi^\pm}$ \THEN \Return
			($\pm$, $\ell_{v, \pi^\pm}$)
			\State \IF $\ell_{x, \pi^\pm} < \ell_{v, \pi^\pm}$ \THEN \Return
			($\mp$, $\ell_{x, \pi^\pm}$)
			\Statex
			\State $\ell_{x, v} \gets \ell_{x, \pi^\pm} + $ \textsc{LCP}($x[\ell_{x,
			\pi^\pm}:k]$, $v[\ell_{x, \pi^\pm}:k]$)
			\State \Return (\textsc{Compare}($w[\ell_{x, v}]$, $v[\ell_{x, v}]$),
			$\ell_{x, v}$)
		\EndProcedure
	\end{algorithmic}

    \caption{\label{fig:k-comp} A simple bookend search procedure that compares a key $x$ to
		another key $v$.
		The input variable $\pm$ is $-$ if $x$ shares a greater
		LCP with $\pi^-$ than with $\pi^+$, and is $+$ otherwise.
		This procedure returns both the comparison result and the length of the
		longest common prefix between $x$ and $v$.
		The \textsc{LCP} procedure computes the LCP length between two keys, while the \textsc{Compare} procedure compares
		two one-dimensional values, returning either $-$, $+$, or $=$.
	}
\end{figure}

\begin{fact}\label[fact]{fact:k-comp}
	A lexical comparison
	between two multi-dimensional keys $x$ and $v$,
	sharing a longest common prefix of length $\ell_{x, v}$,
	can be done using the
	\textsc{k-Compare} procedure
	without comparing characters
	if $x$ shares a longer
	prefix with $\pi^\pm$ than with $v$,
	and by comparing a substring of length $\ell_{x, v} - \ell_{x, \pi^\pm} (+1)$
	otherwise.
\end{fact}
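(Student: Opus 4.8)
The plan is to establish two things about \textsc{k-Compare}: that it returns the correct lexical comparison of $x$ with $v$ together with the correct value $\ell_{x,v}$, and that it inspects only the claimed number of characters. First I would pin down the invariant the bookend framework maintains at the moment $v$ is compared to $x$. Every node traversed so far is known, from the branch taken at it, to lie strictly to one side of $x$; hence the traversed node lexically just before $v$, namely $\pi^-$, satisfies $\pi^-\prec x$, and the traversed node just after $v$, namely $\pi^+$, satisfies $x\prec\pi^+$. Since also $\pi^-\prec v\prec\pi^+$ by definition, both $x$ and $v$ lie in the open interval $(\pi^-,\pi^+)$. Writing $r=\pi^\pm$ for whichever of the two endpoints the input flag names, the key consequence is that $x$ and $v$ lie on the \textit{same} side of $r$; moreover $\ell_{x,r}$ is available in the search state and $\ell_{v,r}$ is stored at $v$.

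The heart of the proof is a one-step lemma about two keys on the same side of a third. Take the case $x,v\prec r$ (the case $x,v\succ r$ is symmetric, with the character inequalities reversed), put $p=\ell_{x,r}$ and $q=\ell_{v,r}$, and note that $x$ and $v$ already agree on positions $0,\dots,\min(p,q)-1$ (both agreeing with $r$ there). If $p>q$, then $x$ still agrees with $r$ at position $q$, whereas $v$ first disagrees with $r$ there with $v[q]<r[q]$ (because $v\prec r$); hence $x$ and $v$ first differ at position $q$ with $v[q]<x[q]$, so $\ell_{x,v}=q=\ell_{v,r}$ and $v\prec x$ --- exactly the first return line, reached without inspecting a single character. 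If $p<q$ the mirror argument gives $\ell_{x,v}=p=\ell_{x,r}$ and $x\prec v$, matching the second return line, again with no character inspected. If $p=q=m$, the only a priori information is $\ell_{x,v}\ge m$, so the procedure must extend the shared prefix: \textsc{LCP}$(x[m:k],v[m:k])$ reveals exactly the $\ell_{x,v}-m$ additional matching characters (plus the one character witnessing the divergence), after which comparing $x[\ell_{x,v}]$ with $v[\ell_{x,v}]$ yields the order by the definition of lexicographic comparison, with \textsc{Compare} handling the boundary case where a key has been exhausted. The characters inspected in this branch total $\ell_{x,v}-m\,(+1)$.

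It then remains to recast the trichotomy $p>q$ / $p<q$ / $p=q$ in the statement's language. Applying the ultrametric property of longest common prefixes to the triple $\{x,v,r\}$ --- the smallest of the three pairwise LCP lengths is attained by at least two of them --- one gets $p>q$ exactly when $\ell_{x,r}>\ell_{x,v}$, i.e.\ exactly when $x$ shares a strictly longer prefix with $\pi^\pm$ than with $v$; this is precisely the branch in which no characters are compared. In each of the other two branches $\ell_{x,\pi^\pm}=m\le\ell_{x,v}$, so the count above equals $\ell_{x,v}-\ell_{x,\pi^\pm}\,(+1)$, which collapses to $0$ in the sub-case $p<q$ --- consistent with that branch also touching no characters. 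The three return values $(\pm,\ell_{v,\pi^\pm})$, $(\mp,\ell_{x,\pi^\pm})$, and $(\textsc{Compare}(\dots),\ell_{x,v})$ are thus precisely what the lemma prescribes.

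I expect the routine parts --- the exact placement of the optional $(+1)$, and the edge cases in which a key is exhausted or in which $x$ coincides with $v$ or with an endpoint --- to be mechanical once a sentinel convention for key termination is fixed. The one step that needs real care is making the ``same side of $r$'' invariant airtight, which I would argue from reachability: the search could not have descended to $v$ at all unless it branched toward larger keys at $\pi^-$ (forcing $\pi^-\prec x$) and toward smaller keys at $\pi^+$ (forcing $x\prec\pi^+$). That is the only place the bookend setup, rather than pure string combinatorics, is invoked.
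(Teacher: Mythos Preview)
Your proposal is correct and follows essentially the same route as the paper: the three-way case split on $\ell_{x,\pi^\pm}$ versus $\ell_{v,\pi^\pm}$ that precedes the Fact in the text is the paper's entire justification. You add two things the paper leaves implicit---the explicit ``same side of $r$'' invariant and the ultrametric translation from the algorithm's trichotomy to the statement's dichotomy (in particular, that the $p<q$ branch folds into the ``otherwise'' clause with a zero count)---both of which are worth spelling out but do not change the underlying argument.
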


Using \Cref{fact:k-comp}, we can 
achieve the following.

\begin{theorem}
	Let $\mathcal{D}_S$ be a linked data structure that 
        visits $A(n)$ elements
	during an access operation, where $n = |S|$.
	If each node $v \in S$ has unique immediate predecessor and successor
	ancestors $\pi_v^-$ and $\pi_v^+$ which must be traversed before reaching
	$v$ in an access operation, and if $v$ stores these LCP lengths,
	then the time spent on comparisons using the
	\textsc{k-Compare} procedure is a telescoping sum described in
	\Cref{fact:k-comp}, 
        which is at most $\mathcal{O}(\ell_{x, \pi} + A(n))$,
	where $\ell_{x, \pi}$ is the length of the LCP between the search key $x$
	and the traversed keys $\pi$.
\end{theorem}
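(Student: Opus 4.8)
The plan is to trace a single access operation for a key $x$ through $\mathcal{D}_S$, record the per-step cost of each \textsc{k-Compare} invocation as furnished by \Cref{fact:k-comp}, and then show that these costs telescope. I would let $v_1,\dots,v_m$ (with $m\le A(n)$) be the nodes visited in order, and maintain, as the search descends, the two ``running'' bookends $p$ and $s$ — the nearest already-visited key lexically preceding $x$ and the nearest one lexically following it, initialized to $-\infty$ and $\infty$ with $\ell_{x,p}=\ell_{x,s}=0$ — together with $\ell_{x,p}$ and $\ell_{x,s}$, which are read off from the comparison result and LCP length returned by the previous \textsc{k-Compare} call. Just before comparing $x$ to $v_i$, set $\pm$ to point at whichever of $p,s$ shares the longer prefix with $x$, so the input $\ell_{x,\pi^\pm}=\max(\ell_{x,p},\ell_{x,s})$. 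The first thing to verify is that \textsc{k-Compare} has the data it needs: since the search for $x$ reaches $v_i$, the key $x$ lies in the same open interval $(\pi_{v_i}^-,\pi_{v_i}^+)$ between $v_i$'s predecessor- and successor-ancestors that contains $v_i$ itself, and by hypothesis $\pi_{v_i}^-$ and $\pi_{v_i}^+$ have already been visited while no earlier-visited key lies strictly inside that interval; hence $p=\pi_{v_i}^-$ and $s=\pi_{v_i}^+$ at step $i$, so the value $\ell_{v_i,\pi^\pm}$ consulted by the procedure is exactly one of the two LCP lengths stored at $v_i$.

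The heart of the argument is an invariant: on entering step $i$, $\ell_{x,\pi^\pm}=\max_{j<i}\ell_{x,v_j}$ (and $=0$ when $i=1$). I would prove this from the elementary nesting-of-prefixes fact — if $a\prec b\prec c$ then $\ell_{a,c}\le\ell_{a,b}$ and $\ell_{a,c}\le\ell_{b,c}$ — applied twice. First, every visited key below $x$ is $\preceq p$ and every visited key above $x$ is $\succeq s$, so $\max_{j<i}\ell_{x,v_j}$ is attained at $p$ or $s$, i.e. equals $\ell_{x,\pi^\pm}$. Second, the update preserves it: if \textsc{k-Compare} returns $x\prec v_i$ then $x\prec v_i\prec s$, so the fact gives $\ell_{x,v_i}\ge\ell_{x,s}$, which is precisely what makes the update $\ell_{x,s}\gets\ell_{x,v_i}$ keep $\max(\ell_{x,p},\ell_{x,s})$ equal to $\max(\max_{j<i}\ell_{x,v_j},\,\ell_{x,v_i})$ (symmetrically if $x\succ v_i$; if $x=v_i$ the search halts). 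Writing $\ell_i$ for the value $\ell_{x,\pi^\pm}$ at the start of step $i$ and $\ell_{m+1}:=\max_i\ell_{x,v_i}\le\ell_{x,\pi}$, the invariant says the sequence $\ell_1\le\ell_2\le\cdots\le\ell_{m+1}$ is nondecreasing with $\ell_1=0$ and $\ell_{i+1}-\ell_i=\max(0,\ell_{x,v_i}-\ell_i)$.

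Finally I would combine this with \Cref{fact:k-comp}: the $i$-th \textsc{k-Compare} compares no characters when $\ell_{x,v_i}<\ell_i$, and otherwise compares a substring of length $\ell_{x,v_i}-\ell_i+O(1)=(\ell_{i+1}-\ell_i)+O(1)$, so its cost is $O(\ell_{i+1}-\ell_i+1)$ in every case (the extra $O(1)$ also absorbs the constant-time length comparisons and single-character comparison inside the procedure). Summing over $i=1,\dots,m$, the first terms telescope to $\ell_{m+1}-\ell_1=\ell_{m+1}\le\ell_{x,\pi}$ and the $O(1)$ terms contribute $O(m)=O(A(n))$, for the claimed $O(\ell_{x,\pi}+A(n))$ total. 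The main obstacle I anticipate is not the telescoping itself but the bookkeeping in the middle step — confirming that the bookends handed to \textsc{k-Compare} really are $v_i$'s stored $\pi_{v_i}^\pm$ and that their stored LCP lengths faithfully track the running maximum LCP with $x$; once that invariant is pinned down, the telescoping sum and the final bound are immediate.
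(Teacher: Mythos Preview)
Your proposal is correct and follows the same telescoping-sum approach the paper intends; indeed, the paper does not give a standalone proof of this theorem beyond the sentence ``Using \Cref{fact:k-comp}, we can achieve the following'' and the preceding informal remark that ``the only expensive multi-dimensional LCP length computations performed start from the result of the previous such computation.'' Your write-up is a careful fleshing-out of exactly that argument, and the invariant you isolate (that the running $\ell_{x,\pi^\pm}$ equals $\max_{j<i}\ell_{x,v_j}$, together with the identification $p=\pi_{v_i}^-$, $s=\pi_{v_i}^+$) is precisely the bookkeeping the paper leaves implicit.
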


We note that the above result differs from that of Grossi and Italiano in that
it does not require storing pointers to $\pi_v^+$ and $\pi_v^-$ in each node.
As we discuss later, one of the appeals of zip-trees over previous
balancing schemes is a significant reduction in pointer manipulations and
metadata for the nodes of a search tree, which we carry over to our
zip-trie data structure.
Irving and Love~\cite{IRVING2003387} used a similar approach for suffix binary
search trees, and our provided procedure can be seen as a simple 
generalization of their approach.

\begin{corollary}\label{cor:seq-compare-string}
	If $\mathcal{D}_S$ is a collection of strings, then under the word RAM
	model it is possible to spend only $\mathcal{O}(\frac{\ell_{x, v} - \ell_{x,
	\pi^\pm}}{\alpha})$ time per \textsc{k-Compare} operation, where $\alpha$ is
	the number of characters that can be packed into a single machine word;
	hence, it is possible to spend only $\mathcal{O}(\frac{\ell_{x,
	\pi}}{\alpha} + A(n))$ time overall on comparison operations.
\end{corollary}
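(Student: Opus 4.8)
The plan is to lift \Cref{fact:k-comp} from a single \textsc{k-Compare} call to an entire access operation by observing that the LCP-length parameter fed into successive calls forms a monotone, telescoping chain. First I would set up notation: let $v_1, v_2, \dots, v_{A(n)}$ be the sequence of nodes visited during the access for $x$, in the order they are compared, and let $\ell_i$ denote the LCP length returned by the $i$-th \textsc{k-Compare} call, with $\ell_0 = 0$. The key structural claim is that at the moment we compare $x$ to $v_i$, the quantity $\ell_{x,\pi^\pm}$ passed in (the length of $x$'s LCP with whichever of its current bookends, $\pi^-$ or $\pi^+$, it agrees with longer) equals $\ell_{i-1}$, the LCP length produced by the previous comparison — because the only way a bookend of $x$ gets updated during traversal is when a comparison against the current node $v_{i-1}$ reveals $v_{i-1}$ to be $x$'s new tighter predecessor or successor, and in that case the stored/known LCP is exactly $\ell_{i-1}$. (For the very first comparison, $\pi^- = -\infty$ and $\pi^+ = \infty$, so $\ell_{x,\pi^\pm} = 0 = \ell_0$.)

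Next I would case-split each call according to \Cref{fact:k-comp}. In cases (a) and (b), where $x$ shares a strictly longer prefix with its bookend than with $v_i$, no characters are compared, so the work is $\mathcal{O}(1)$ and we simply fold this into the $A(n)$ term. In case (c), the call does a \textsc{LCP} computation on the suffixes $x[\ell_{i-1}:]$ and $v_i[\ell_{i-1}:]$, costing $\mathcal{O}(\ell_i - \ell_{i-1} + 1)$ time (the $+1$ for the final mismatching character via \textsc{Compare}); crucially, in this case $\ell_i \ge \ell_{i-1}$, since the computation starts from position $\ell_{i-1}$ and only moves forward. So along the subsequence of indices where real character comparisons happen, the $\ell_i$ values are nondecreasing and the total character-comparison work telescopes: $\sum_i (\ell_i - \ell_{i-1} + 1) \le \ell_{\max} + A(n)$, where $\ell_{\max} = \max_i \ell_i$. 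Finally I would identify $\ell_{\max}$ with $\ell_{x,\pi}$, the LCP between $x$ and the set of traversed keys: any $\ell_i$ is the LCP between $x$ and some visited node $v_i \in \pi$, hence $\ell_i \le \ell_{x,\pi}$, and conversely the comparison against whichever visited node realizes $\ell_{x,\pi}$ forces some $\ell_j \ge \ell_{x,\pi}$ (or the search terminates there), giving $\ell_{\max} = \ell_{x,\pi}$. Summing the $\mathcal{O}(1)$ cost of the non-comparing calls with the telescoped comparison cost yields $\mathcal{O}(\ell_{x,\pi} + A(n))$ total time on comparisons.

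The main obstacle I anticipate is pinning down the invariant "the $\ell_{x,\pi^\pm}$ passed to the $i$-th call equals the LCP length output by the $(i-1)$-th call" with full rigor, because it depends on a subtle monotonicity fact about how bookends evolve: as the search descends, $\pi^-$ can only increase lexically and $\pi^+$ can only decrease, so $x$'s LCP with its current bookend is itself nondecreasing, and each update of a bookend to a newly-visited node $v_{i-1}$ comes annotated (via the \textsc{k-Compare} return value) with precisely the LCP length $\ell_{i-1}$. I would also need to handle the bookkeeping case where the previous comparison updated neither bookend — then $\ell_{x,\pi^\pm}$ is unchanged from the prior step, but one checks that in that situation $\ell_{i-1} \le \ell_{x,\pi^\pm}$ already held, so using $\ell_{x,\pi^\pm} \ge \ell_{i-1}$ as the starting point of the next \textsc{LCP} only ever skips characters that provably match, preserving both correctness and the forward-only property. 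Once this invariant is established, the telescoping argument and the identification of $\ell_{\max}$ with $\ell_{x,\pi}$ are routine, and \Cref{cor:seq-compare-string} follows immediately by replacing each length-$m$ character comparison with an $\mathcal{O}(m/\alpha)$ packed-word comparison in the word RAM model.
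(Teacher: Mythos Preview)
Your proposal is correct and follows essentially the same approach as the paper: the telescoping argument over successive \textsc{k-Compare} calls is exactly the content of \Cref{fact:k-comp} and the theorem immediately preceding the corollary, and your final sentence---replacing each length-$m$ character scan by an $\mathcal{O}(m/\alpha)$ packed-word comparison---is precisely the paper's one-line proof of the corollary itself (which just defers to the word-RAM string comparison discussion in the appendix). You have re-derived the preceding theorem rather than simply citing it, but the route is the same.
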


This result follows directly from efficient string comparisons we discuss
in~\Cref{sec:seq-string-comparisons}.

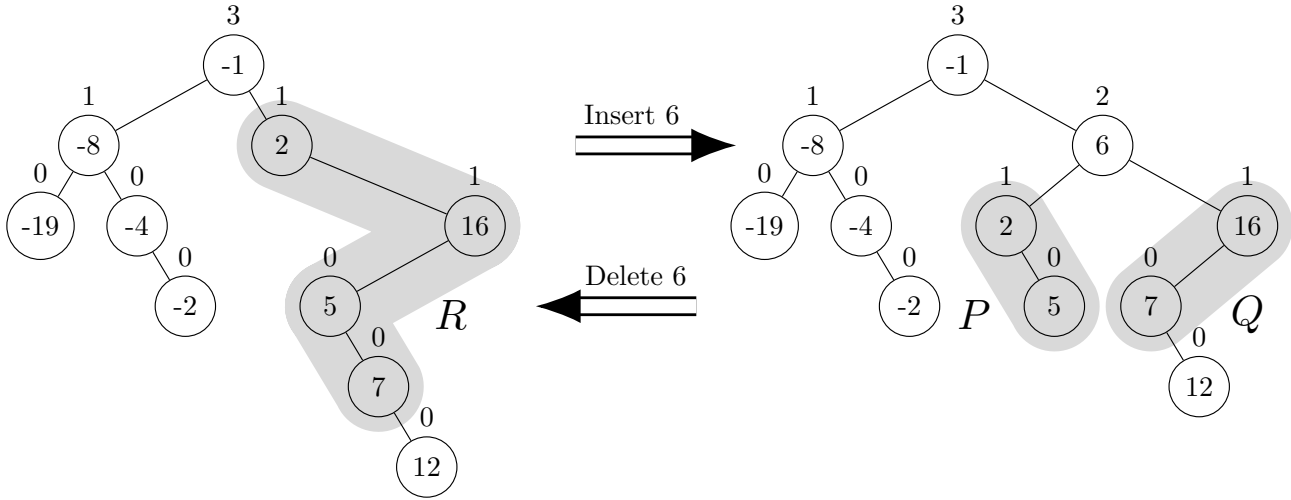
\begin{figure*}[bt]
    \centering
    \resizebox{\linewidth}{!}{\begin{tikzpicture}[n/.style = {circle, draw, minimum width = 0.75cm, minimum height=0.5cm},a/.style = {line width=1pt, double distance=5pt,
	arrows = {-Latex[length=0pt 4 .5]}}]
	\def\ziptreeone{
		{-19/0/2},
		{-8/1/1},
		{-4/0/2},
		{-2/0/3},
		{-1/3/0},
		{2/1/1},
		{5/0/3},
		{7/0/4},
		{12/0/5},
		{16/1/2}%
	}

	\foreach \key\rank\height [count=\i] in \ziptreeone{
		\pgfmathsetmacro\x{0.6*\i}
		\node[n,label=\rank] (\key) at (\x, -\height) {\key};
	}

	\draw[-] (-1) -- (-8) -- (-19)
		(-8) -- (-4) -- (-2)
		(12) -- (7) -- (5) -- (16) -- (2) -- (-1);

	\node[scale=1.5] (R) at (5.7, -3.1) {$R$};

	\begin{pgfonlayer}{background}
		\fill[gray,opacity=0.3] \convexpath{2,16,5,7,5,16}{16pt};
	\end{pgfonlayer}

	\draw [a] (7.25,-1) -- node[label={[xshift=-0.3cm]Insert 6}] {} (9.25,-1);
	\draw [a] (8.75,-3) -- node[label={[xshift=0.25cm]Delete 6}] {} (6.75,-3);

	\def\ziptreetwo{
		{-19/0/2},
		{-8/1/1},
		{-4/0/2},
		{-2/0/3},
		{-1/3/0},
		{2/1/2},
		{5/0/3},
		{6/2/1},
		{7/0/3},
		{12/0/4},
		{16/1/2}%
	}

	\foreach \key\rank\height [count=\i] in \ziptreetwo{
		\pgfmathsetmacro\x{0.6*\i+9}
		\node[n,label=\rank] (\key) at (\x, -\height) {\key};
	}

	\draw[-] (6) -- (-1) -- (-8) -- (-19)
		(-8) -- (-4) -- (-2)
		(5) -- (2) -- (6) -- (16) -- (7) -- (12);

	\node[scale=1.5] (P) at (12.2, -3.1) {$P$};

	\node[scale=1.5] (Q) at (15.6, -3.1) {$Q$};

	\begin{pgfonlayer}{background}
		\fill[gray,opacity=0.3] \convexpath{2,5}{16pt};
		\fill[gray,opacity=0.3] \convexpath{16,7}{16pt};
	\end{pgfonlayer}
\end{tikzpicture}}
    
    \caption{\label{fig:zip}
		How insertion of a new node $x$ into a zip tree is done via `unzipping'
		and deletion of a node $x$ is done via `zipping'.
		$P$ and $Q$ refer to all updated nodes with smaller and larger keys than
		$x$, respectively, while $R$ refers to the union of these nodes.
	}
\end{figure*}

\section{A Review of Zip Trees and their Variants}

In this section, we review the zip tree data structure introduced by Tarjan,
Levy, and Timmel, and briefly review some of its variants.
Zip trees are isomorphic to skip lists but more efficient in both time and
space~\cite{tarjanZipTrees2021a}.
During the course of a search, both data structures visit the same
keys, except the skip list may visit the same key multiple times on
different levels.
Furthermore,
zip trees perform update operations using `zipping' and `unzipping' operations,
which, as opposed to common tree rotations,
allow for more efficient pointer manipulation and reduced metadata
storage.
See \Cref{fig:zip}.
In terms of space usage,
common implementations of skip lists contain on average two copies
of every node in the list, while zip trees contain only one.\footnote{Other
implementations of skip lists exist, such as those that use variable-sized nodes,
but those also require at least as much and often far more space than zip trees.}
With regards to different balanced binary search tree implementations, zip trees
require only $\mathcal{O}(\log{\log{n}})$ bits of metadata per node for rank
information that is used to maintain the tree's balance.
More recently, zip-zip trees have been introduced as a simple variant of zip
trees that change the zip trees' ranks in order to improve the expected
search depth of nodes in the tree to $1.39 \log{n}$ from $1.5 \log{n}$, along
with adding many more desirable qualities~\cite{gilaZipzipTreesMaking2023}.

In a \emph{zip tree}, each node picks a random `rank' $r_1$, uniformly from a geometric
distribution with success probability $1/2$.
Nodes in the tree are max-heap ordered by rank, with ties broken in favor of
smaller keys.
Insertion and deletion in a zip tree are done by simple ``unzip'' and ``zip''
operations, respectively, which preserve this max-heap property.
To insert a new node $x$ into a zip tree, we first search for $x$ until reaching
the node $y$ that $x$ will replace, i.e., the first node with rank less than or
equal to that of $x$ (with a strict inequality if $y$'s key is less than $x$'s
key).
We then ``unzip'' the rest of $x$'s search path (including $y$) into two parts,
$P$, containing all nodes with smaller keys than $x$, and $Q$, containing all
nodes with larger keys (assuming that all keys are distinct).
We set the top nodes of $P$ and $Q$ to be the left and right children of $x$,
respectively, and let $x$ replace $y$ as the child of $y$'s parent.
Note that $P$ and $Q$ become the right (resp. left) spines of $x$'s left (resp.
right) subtrees. 
To delete a node $x$, we first find $x$ and then ``zip'' the spines of its two
children, $P$ and $Q$,
together into $R$, merging them from top to bottom in order to preserve the max-heap
ordering.
The root of $R$ replaces $x$ in the tree.
See \Cref{fig:zip}.
The left (resp. right) spine of a node is the path from that node to its
leftmost (resp. rightmost) descendant.
Pseudo-code for these operations can be found in~\cite{tarjanZipTrees2021a,gilaZipzipTreesMaking2023}.

Zip trees, as described above, experience sufficient rank collisions as to
unbalance the tree, hurting the expected search depth of nodes.
To address this, \emph{zip-zip trees} were introduced, adding onto each rank a
second value $r_2$, drawn independently from a uniform distribution, that is
used to break ties~\cite{gilaZipzipTreesMaking2023}.
Gila, Michael, and Tarjan showed that it is sufficient to pick $r_2$ from a
relatively small interval of size $\mathcal{O}(\log^c{n})$ for a small constant
$c$ to achieve the desired improvement in expected search depth, requiring only
$\mathcal{O}(\log{\log{n}})$ bits to store the entire rank information.
They also showed how to make zip trees \emph{external}, where items are stored only in external nodes of the tree while
the internal nodes only contain keys, \emph{partially persistent}, where queries
can be made on any version of the tree including past versions, and
\emph{biased}, where each key is favored (placed closer to the root) base on an
associated weight~\cite{gilaZipzipTreesMaking2023}.

\section{Zip-Tries} \label{sec:zip-trie}

In this section we discuss several variants of the zip tree data structure
that we refer to as \emph{zip-tries},
which support high-dimensional keys such as strings.
We first discuss how the zip-trie data structure stores and updates its LCP
length metadata while performing zip and unzip operations.
We then discuss how the zip-trie exponentially reduces the memory overhead of
storing the LCP lengths by storing approximate values.
Finally, we introduce \emph{parallel zip-tries}.

\begin{theorem}\label[theorem]{theorem:zip-trie}
	Let $\ell$ be the length of the longest common prefix between a key, $x$, and
	the stored keys in a zip-trie, $T$. $T$ can perform prefix search,
	predecessor/successor queries, and insert/delete operations on $x$ in
	$\mathcal{O}(\frac{\ell}{\alpha} + \log{n})$ time, and using only
	$\mathcal{O}(\log{\log{n}} + \log{\log{\frac{\ell}{\alpha}}})$ bits of metadata per node
	w.h.p. in the word RAM model.\footnote{Our results also apply in the standard RAM model,
    with $\alpha=1$. We do not count children pointers
    as metadata, but these can be replaced 
    with $O(\log\log n)$-bit pointers~\cite{tiny}.}
\end{theorem}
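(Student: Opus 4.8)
The plan is to build the zip-trie as a zip-zip tree over the string keys, using the \textsc{k-Compare} bookend-search procedure of \Cref{sec:paradigm} to realize the comparisons, and then to verify three things: (i) that every search, prefix-search, predecessor/successor, insert, and delete operation reduces to $O(A(n))=O(\log n)$ comparisons plus bookend LCP bookkeeping; (ii) that the LCP metadata needed by \textsc{k-Compare} can be maintained correctly and cheaply through zip and unzip operations; and (iii) that storing only \emph{approximate} LCP lengths suffices, bringing the per-node metadata down to $O(\log\log n + \log\log(\ell/\alpha))$ bits. For step (i), I would invoke the theorem following \Cref{fact:k-comp}: in a zip-zip tree each node $v$ has a unique immediate lexical predecessor ancestor $\pi_v^-$ and successor ancestor $\pi_v^+$ among the nodes traversed to reach $v$ (this is exactly the standard BST invariant that the search path brackets $v$), so the hypotheses of that theorem are met and the total comparison cost telescopes to $O(\ell_{x,\pi}+A(n))$. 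Combining with \Cref{cor:seq-compare-string} gives $O(\ell/\alpha)$ for the comparison work in the word RAM model; since a zip-zip tree has expected and w.h.p.\ search depth $O(\log n)$, the $A(n)=O(\log n)$ term follows, and prefix search / range queries are handled by the standard observation that they amount to two bookend searches plus an in-order walk of the answer.

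The second step is the main new structural work, and I expect it to be the principal obstacle. I would show that when we unzip a search path to insert $x$ into spines $P$ and $Q$, the only nodes whose stored LCP pair $(\ell_{v,\pi_v^-},\ell_{v,\pi_v^+})$ changes are those on $P$ and $Q$ (i.e. the set $R$ in \Cref{fig:zip}), together with $x$ itself, and that each such updated value is available ``for free'' from the \textsc{k-Compare} calls made during the search: the LCP lengths computed along the search path for $x$ are precisely the new neighbor LCPs after the restructuring, because the new predecessor/successor ancestor of a node on $P$ (resp.\ $Q$) is either $x$ or the adjacent node on the same spine. A symmetric argument handles deletion, where zipping $P$ and $Q$ into $R$ merges the two spines and the new neighbor LCP of a node is obtained as the min of the two relevant old LCPs along the merge path (together with, at the single merge-interleaving point per pair of consecutive nodes, one genuine character comparison as in case (c) of \textsc{k-Compare}, whose cost again telescopes). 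The upshot is that all metadata updates cost $O(1)$ per touched node beyond the comparison work already accounted for, so update operations stay within $O(\ell/\alpha+\log n)$ time w.h.p.

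For step (iii), I would replace each exact LCP length $\ell_{v,\pi_v^\pm}$ by a carefully chosen approximation with only $O(\log\log n+\log\log(\ell/\alpha))$ bits, chosen so that \textsc{k-Compare} still routes correctly: the key point is that the \emph{strict} inequality branches (a) and (b) of \textsc{k-Compare} only need a comparison of the two LCP values that is correct up to the point where they would cause a wrong turn, while branch (c)'s fallback character comparison self-corrects any ambiguity at the cost of reading a few extra words — and those extra reads telescope over the path, exactly as in the exact case, because each begins where the previous one left off. Concretely I would store the LCP values in a floating-point-like encoding (an exponent plus a short mantissa), so that $\ell/\alpha$ contributes only $\log\log(\ell/\alpha)$ bits for the exponent; I would argue that with w.h.p.\ bounded search depth the mantissa precision needed to avoid more than $O(1)$ spurious fallbacks per operation is $O(\log\log n)$ bits, mirroring the analysis by which zip-zip tree ranks fit in $O(\log\log n)$ bits. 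Finally, the rank metadata of the underlying zip-zip tree is itself $O(\log\log n)$ bits w.h.p.\ by \cite{gilaZipzipTreesMaking2023}, and children pointers are excluded (or compressed via \cite{tiny}) per the footnote, so the total per-node metadata is $O(\log\log n + \log\log(\ell/\alpha))$ bits w.h.p., completing the proof.
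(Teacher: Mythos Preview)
Your outline for steps (i) and (ii) is essentially the paper's approach, and step (ii) is almost right; the one slip is in deletion. You suggest that at each merge-interleaving point a genuine character comparison (case (c) of \textsc{k-Compare}) may be needed. In fact none is: for $v\in P$ the new successor ancestor ${\pi'_v}^+$ lies in $Q$ or is the immediate successor of $x$, so $v\prec x\prec{\pi'_v}^+$, and for lexicographically ordered strings one has the \emph{equality} $\ell_{v,{\pi'_v}^+}=\min(\ell_{v,x},\ell_{x,{\pi'_v}^+})$, not just an inequality. Both quantities on the right are already stored (as $v$'s old successor-LCP and as ${\pi'_v}^+$'s predecessor-LCP, or as $x$'s successor-LCP), so the update is a pure $\min$. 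This is a small point but it simplifies the argument.

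The genuine gap is in step (iii). You have the right encoding---a floating-point representation $2^a\cdot b$ with an $O(\log\log(\ell/\alpha))$-bit exponent and an $O(\log\log n)$-bit mantissa---but the justification you give does not work. The claim that an $O(\log\log n)$-bit mantissa ``avoids more than $O(1)$ spurious fallbacks per operation, mirroring the analysis by which zip-zip tree ranks fit in $O(\log\log n)$ bits'' is a non-sequitur: the rank argument is probabilistic (ranks are drawn at random, and one bounds the probability of collisions), whereas LCP lengths are deterministic data with no randomness to average over. There is nothing preventing many consecutive nodes on the search path from having stored LCPs that all round to the same approximate value, so the number of ``spurious'' case-(c) entries is not $O(1)$ in general. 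Relatedly, your assertion that the extra reads ``telescope over the path, exactly as in the exact case, because each begins where the previous one left off'' is false once you round: at each node you may restart from the stored approximate value, re-scanning characters you have already seen.

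The paper's argument is purely deterministic and different in structure. With $b\le 2f(n)$ where $f(n)=\Theta(\log n)$, one shows directly that the approximation error satisfies $\ell-\tilde\ell\le \ell/f(n)$ (\Cref{lemma:approx-lcp-diff}). Hence each of the $A(n)=\Theta(f(n))$ comparisons incurs at most $\ell/f(n)$ extra character work beyond the exact case, and summing gives $O(\ell)$ extra total (\Cref{lemma:approx-lcp-time}); in the word RAM model this is $O(\ell/\alpha)$. That is the missing lemma you need: bound the per-comparison overcount by the approximation error and multiply by the path length, rather than trying to bound the number of fallbacks.
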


\begin{corollary}
	Prefix search queries where one wishes to enumerate all $m$ keys sharing
	prefix $x$ can be done in $\mathcal{O}(m)$ additional time.
	Similarly, range queries can be done in $\mathcal{O}(\frac{\ell_1}{\alpha} +
	\frac{\ell_2}{\alpha} + \log{n} + m)$ time, where $\ell_1$ and $\ell_2$ are
	the LCP lengths of the range query bounds.
\end{corollary}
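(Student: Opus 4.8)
The plan is to reduce both halves of the corollary to \Cref{theorem:zip-trie} together with a standard ordered-tree range traversal, the only real subtlety being that this traversal must not pay an $\Omega(|x|/\alpha)$ string comparison at each enumerated node. Recall that a zip-trie is a binary search tree on the stored strings under lexical order, so the keys that have $x$ as a prefix occupy a contiguous interval of the in-order sequence whose left endpoint is the lexically smallest stored key $a \succeq x$. First I would answer the successor query for $x$ using \Cref{theorem:zip-trie}, which locates $a$ together with $\ell_a := \mathrm{LCP}(x,a)$ and the root-to-$a$ path in $\mathcal{O}(\ell/\alpha + \log n)$ time. If $\ell_a < |x|$ then no stored key has prefix $x$, so $m = 0$ and we are done; otherwise $a$ is the first answer.

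Next I would enumerate the interval by a standard stack-based in-order walk of $T$ starting from $a$, seeded with the root-to-$a$ path that the preceding search already produced (so that only child pointers are needed). At each visited node $u$ the walk advances to its in-order successor $u'$, and we must decide whether $u'$ still has prefix $x$. The key observation is that for any in-order-adjacent pair $u \prec u'$ in a binary search tree, one of the two is an ancestor of the other, and $\mathrm{LCP}(u,u')$ equals one of the two LCP values already stored at that descendant: it is $\ell_{u',\pi_{u'}^-}$ when $u$ is an ancestor of $u'$ (then $\pi_{u'}^- = u$), and $\ell_{u,\pi_u^+}$ otherwise (then $\pi_u^+ = u'$), and which case holds is visible from the traversal stack. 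Hence $d := \mathrm{LCP}(u,u')$ is obtained in $\mathcal{O}(1)$ time; since $u$ already agrees with $x$ on its first $|x|$ characters, $u'$ has $x$ as a prefix iff $d \ge |x|$, and if $d < |x|$ we stop with $u$ as the last answer. No character comparison is performed during the walk, so each step is $\mathcal{O}(1)$. Because a zip-trie has height $\mathcal{O}(\log n)$ w.h.p., the walk touches the $m$ answer nodes plus $\mathcal{O}(\log n)$ overhead nodes, for $\mathcal{O}(m + \log n)$ time; added to the $\mathcal{O}(\ell/\alpha + \log n)$ already spent, the extra cost of enumeration is $\mathcal{O}(m)$, as claimed.

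The range-query bound follows the same template. I would run the successor query of \Cref{theorem:zip-trie} on $y_1$ and its predecessor query on $y_2$, obtaining the in-order endpoints $a = \min\{v \in T : v \succeq y_1\}$ and $b = \max\{v \in T : v \preceq y_2\}$ in $\mathcal{O}(\ell_1/\alpha + \log n)$ and $\mathcal{O}(\ell_2/\alpha + \log n)$ time respectively, where $\ell_1,\ell_2$ are the LCP lengths of $y_1,y_2$ with the stored keys; if $a \succ b$ the range is empty. Then the same stack-based in-order walk from $a$, now terminating by pointer-identity when it reaches $b$, enumerates the $m$ keys of the range in $\mathcal{O}(m + \log n)$ time with no character comparisons, and summing the three contributions yields $\mathcal{O}(\ell_1/\alpha + \ell_2/\alpha + \log n + m)$.

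The step I expect to require the most care is the second paragraph: verifying the ``in-order-adjacent pair'' observation, which is precisely what makes boundary detection free of character comparisons, and confirming that the restricted in-order walk visits only $\mathcal{O}(m + \log n)$ nodes. Both reduce to elementary binary-search-tree facts together with the $\mathcal{O}(\log n)$ w.h.p.\ height bound for zip-tries that already underlies \Cref{theorem:zip-trie}; everything else is bookkeeping.
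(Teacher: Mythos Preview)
The paper states this corollary without proof, so there is no paper argument to compare against; I evaluate your proposal on its own merits.

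Your range-query argument is correct: two invocations of \Cref{theorem:zip-trie} locate the endpoints $a$ and $b$, and the stack-based in-order walk between them costs $\mathcal{O}(m+\log n)$ with termination by pointer identity, so no character comparisons are needed during enumeration.

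For prefix enumeration, however, there is a genuine gap. Your boundary test reads the stored value $\ell_{u',\pi_{u'}^-}$ (or $\ell_{u,\pi_u^+}$) and compares it against $|x|$. This is sound when \emph{exact} LCP lengths are stored, i.e., for the memory-intensive variant. But \Cref{theorem:zip-trie}, to which this corollary is attached, concerns the canonical zip-trie whose $\mathcal{O}(\log\log n + \log\log(\ell/\alpha))$-bit metadata stores only the \emph{approximate} lengths of the ``Reducing LCP Metadata Size'' subsection. Those approximations are lower bounds (\Cref{lemma:approx-lcp-diff}), so your test $\tilde d \ge |x|$ has no false positives but can have false negatives: an adjacent pair with true LCP $d \ge |x|$ may have stored value as small as $|x|(1-1/\log n)$, causing early termination. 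Patching this by verifying the missing $|x|-\tilde d \le |x|/\log n$ characters at each step costs up to $\Theta\bigl(m\,|x|/(\alpha\log n)\bigr)$ overall, which is not $\mathcal{O}(m)$ when $|x|\gg\alpha\log n$.

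The clean fix is to reuse your own range-query template for prefix search as well. The keys with prefix $x$ are exactly those in the lexical interval between $x$ and $x$ followed by a symbol larger than the alphabet; equivalently, run a second \textsc{k-Compare} search in which a full match of $x$ is treated as ``go right.'' This locates the right endpoint $b$ in another $\mathcal{O}(\ell/\alpha+\log n)$ time (the LCPs encountered are still bounded by $|x|=\ell$), after which you terminate the walk by pointer identity at $b$, exactly as in the range case. The second search is absorbed into the bound already charged by \Cref{theorem:zip-trie}, leaving $\mathcal{O}(m)$ for the enumeration, and no reliance on the stored (approximate) LCP values during the walk.
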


\subsection*{Update Operations.} \label{sec:zt-update}
In order for zip-tries to perform efficient string comparisons using the
\textsc{k-Compare} procedure described in \Cref{sec:paradigm}, zip-tries must
store and maintain accurate LCP length metadata on each of its nodes.
Specifically, each node, $v$, must store the LCP lengths between $v$ and its
immediate predecessor and successor ancestors, $\pi_v^-$ and $\pi_v^+$.

\begin{figure}[tbh]
	\centering
	\resizebox*{!}{5cm}{\begin{tikzpicture}[
    n/.style={draw, color=okabe1, circle, thick, minimum width = 0.75cm},
    t/.style={draw, color=okabe1, shape border rotate=90, isosceles triangle, isosceles triangle apex angle=50, thin, minimum width = 1cm},
    level 1/.style={sibling distance=2.5cm},
    level distance=1cm,
    te/.style={color=okabe1, child anchor=apex, thin},
    pe/.style={->,>=latex, color=okabe2, thick}
]
    \node[n] {}
        child { node[n] {}
            child { node[t, anchor=left side] {}
                edge from parent[te]
            }
            child { node[n] (u) {$\boldsymbol{\pi_v^-}$}
                child { node[t, anchor=left side] {}
                    edge from parent[te]
                }
                child { node[n] {}
                    child { node[n] (w) {$\boldsymbol{\pi_v^+}$}
                        child { node[n] (v) {$\boldsymbol{v}$}
                            child { node[t, anchor=left side] {$-$}
                                edge from parent[te]
                            }
                            child { node[t, anchor=right side] {$+$}
                                edge from parent[te]
                            }
                            edge from parent[pe]
                        }
                        child { node[t, anchor=right side] {}
                            edge from parent[te]
                        }
                        edge from parent[pe]
                    }
                    child { node[t, anchor=right side] {}
                        edge from parent[te]
                    }
                    edge from parent[pe]
                }
                edge from parent[pe]
            }
            edge from parent[pe]
        }
        child { node[t, anchor=right side] {}
            edge from parent[te]
        }
    ;

    \node[n, rectangle, minimum width=0.75cm, minimum height=0.75cm, thin] (s) at ($(v) + (-1.5,0.75)$) {$x$};
    \draw[->, >=latex, bend left, dashed, thin] (s) to (v);
\end{tikzpicture}}
	\caption{\label{fig:adjacent}
		A snapshot in the middle of a search operation for 
node, $x$, in an arbitrary binary search tree.
		Nodes along the path $\pi_v$ to the root are represented by circles.
		The string keys of
		$x$ and $v$ must share their longest common prefix in $\pi_v$ with either $\pi_v^-$ or $\pi_v^+$.
		}
\end{figure}
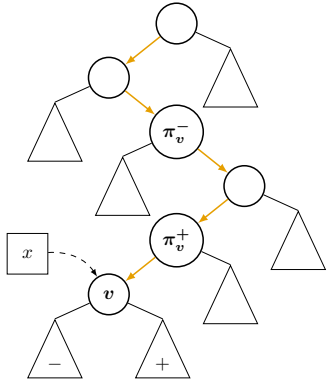

In this section, we show that the LCP length metadata in zip-tries can be
updated efficiently, increasing the total time by at most a constant factor,
during update operations.
To do this, we first show that no nodes outside the unzipping and zipping
paths, $P$ and $Q$, need to update their LCP length metadata, and we then show
that each of the nodes on the path can update their LCP length metadata in
constant time for either operation.

\begin{lemma}\label{lemma:zt-unchanged}
	Any node, $w$, outside the unzipping and zipping paths, $P$ and $Q$, or the
	target node, $x$, maintains the same immediate predecessor and successor
	ancestors, $\pi_w^-$ and $\pi_w^+$, after an update operation.
	Consequently, $w$'s LCP length metadata does not need to be updated.
	See \Cref{fig:zt-update}. 
\end{lemma}

\begin{proof}
	Consider an insertion operation, where $P$ and $Q$ refer to nodes along the
	unzipping path that are smaller and larger than the inserted node, $x$,
	respectively.
	After the insertion, $P$ and $Q$ become the right (resp. left) spines of $x$'s
	left (resp. right) subtrees,
	maintaining their same relative order, and no other nodes will
	move.
	Let us consider some arbitrary node, $w$, outside the unzipping and zipping
	paths, $P$ and $Q$, and assume that one of $w$'s immediate predecessor or
	successor ancestors changes.
	For $w$'s ancestors to change at all, they must be a
	descendant of $x$.
	Let's assume w.l.o.g. that $w \prec x$, i.e., that $w$ exists on the left
	subtree of a node, $p$ in $P$ both before and after the insertion.
	The only change to $w$'s ancestors is the addition of $x$ and the removal of
	all the nodes in $Q$.
	None of these nodes precede $w$, and consequently $w$'s immediate
	predecessor ancestor does not change.
	The only way for $w$'s immediate successor ancestor to change is if $w$'s
	immediate successor was part of $Q$ before the insertion or becomes $x$
	after the insertion.
	However, recall that one of $w$'s ancestors is $p$ both before and after the
	insertion, where $p \succ w$, and $p$ precedes not only $x$ but also all
	nodes in $Q$%
	, a contradiction.
	Since zip trees are strongly history independent, deletion operations must share this
	property.
\end{proof}

While it is common among balanced binary search trees for the
only modified nodes during an insertion operation to be those along the search path, e.g.,
\cite{rankbalancedtrees,adel1962algorithm,seidel1996randomized},
the results of \Cref{lemma:zt-unchanged} would not generally hold outside of
rank-balanced trees such as zip-trees.

\begin{figure*}
	\centering
	\resizebox*{\linewidth}{!}{\begin{tikzpicture}[
    n/.style = {
            rectangle, draw, minimum width = 0.75cm, minimum height=0.5cm,
            font=\ttfamily\Large
        },
        a/.style = {
            line width=1pt, double distance=5pt,
            arrows = {-Latex[length=0pt 2.5]}
        },
        label distance=-1pt,
    ]\def\ziptree{
        {0/ANT/0/0},
		{2/APPLE/1/4},
        {1/APPLY/1/0},
        {2/APT/2/0},
        {5/APTLY/3/1},
        {4/AQUA/1/0},
        {6/PAL/0/1},
        {5/PEACE/0/3},
        {6/PEACH/4/3},
        {3/PEAK/0/0},
        {4/PENGUIN/2/0}%
    }

    \foreach \height\word\prec\succ [count=\i] in \ziptree{
        \pgfmathsetmacro\y{-\height*1.2}
        \pgfmathsetmacro\x{0.9*\i}
        \node[n, label={[font=\bfseries, okabe3, xshift=1pt]left:{\prec}}, label={[font=\bfseries, okabe7, xshift=-1pt]right:{\succ}}] (\word) at (\x, \y) {\word};   }

    \draw[-] (ANT) to (APPLY);
    
    \draw[-] (APPLY) to (APPLE); 
    \draw[-] (APPLY) to (APT);

    \draw[-] (APT) to (PEAK);

    \draw[-] (PEAK) to (AQUA);
    \draw[-] (PEAK) to (PENGUIN);
    
    \draw[-] (AQUA) to (APTLY);
    \draw[-] (AQUA) to (PEACE);

    \draw[-] (PEACE) to (PAL);
    \draw[-] (PEACE) to (PEACH);
    \node[scale=2.5] (R) at (6.7, -1.8) {$R$};

	\begin{pgfonlayer}{background}
        \fill[gray,opacity=0.3] \convexpath{APPLY,APT,PEAK,AQUA,PEACE,PAL,PEACE,AQUA,PEAK,APT}{20pt};
    \end{pgfonlayer}

	\draw [a] (10.65,-1.5) -- node[label={[xshift=-0.3cm]Insert PEA}] {} (13.15,-1.5);
    \draw [a] (12.65,-3.5) -- node[label={[xshift=0.25cm]Delete PEA}] {} (10.15,-3.5);

	\def\ziptree{
        {0/ANT/0/0},
		{3/APPLE/1/4},
        {2/APPLY/1/0},
        {3/APT/2/0},
        {5/APTLY/3/1},
        {4/AQUA/1/0},
        {5/PAL/0/1},
        {1/PEA/0/0},
        {3/PEACE/3/3},
        {4/PEACH/4/3},
        {2/PEAK/3/0},
        {3/PENGUIN/2/0}%
    }

    \foreach \height\word\prec\succ [count=\i] in \ziptree{
        \pgfmathsetmacro\y{-\height*1.2}
        \pgfmathsetmacro\x{0.9*\i+12}
        \node[n, label={[font=\bfseries, okabe3, xshift=1pt]left:{\prec}}, label={[font=\bfseries, okabe7, xshift=-1pt]right:{\succ}}] (\word) at (\x, \y) {\word};   }

    \draw[-] (ANT) to (PEA);

    \draw[-] (PEA) to (APPLY);
    \draw[-] (PEA) to (PEAK);

    \draw[-] (APPLY) to (APPLE);
    \draw[-] (APPLY) to (APT);

    \draw[-] (APT) to (AQUA);

    \draw[-] (AQUA) to (APTLY);
    \draw[-] (AQUA) to (PAL);

    \draw[-] (PEAK) to (PEACE);
    \draw[-] (PEAK) to (PENGUIN);

    \draw[-] (PEACE) to (PEACH);
	
    \node[scale=2.5] (P) at (17, -3) {$P$};
    \node[scale=2.5] (Q) at (19.6, -2.4) {$Q$};

	\begin{pgfonlayer}{background}
        \fill[gray,opacity=0.3] \convexpath{APPLY,APT,AQUA,PAL,AQUA,APT}{20pt};
        \fill[gray,opacity=0.3] \convexpath{PEAK,PEACE}{20pt};
    \end{pgfonlayer}
\end{tikzpicture}}
	\caption{\label{fig:zt-update}
		Update operations in a zip-trie, showing the LCP length metadata between
		each node and their immediate predecessor and successor ancestors on the
		left and right of each node and in blue and red, respectively.
		Note how the LCP metadata of nodes outside of $P$, $Q$, and the target
		node, $x$, are not updated, while only the successor (respectively predecessor)
		LCP lengths of nodes in $P$ (respectively $Q$) are updated.
		The nodes' ranks are not shown here for clarity,
		see \Cref{fig:zip} for a similar diagram in a zip tree showing the ranks.
	}
	\vspace*{-\medskipamount}
\end{figure*}
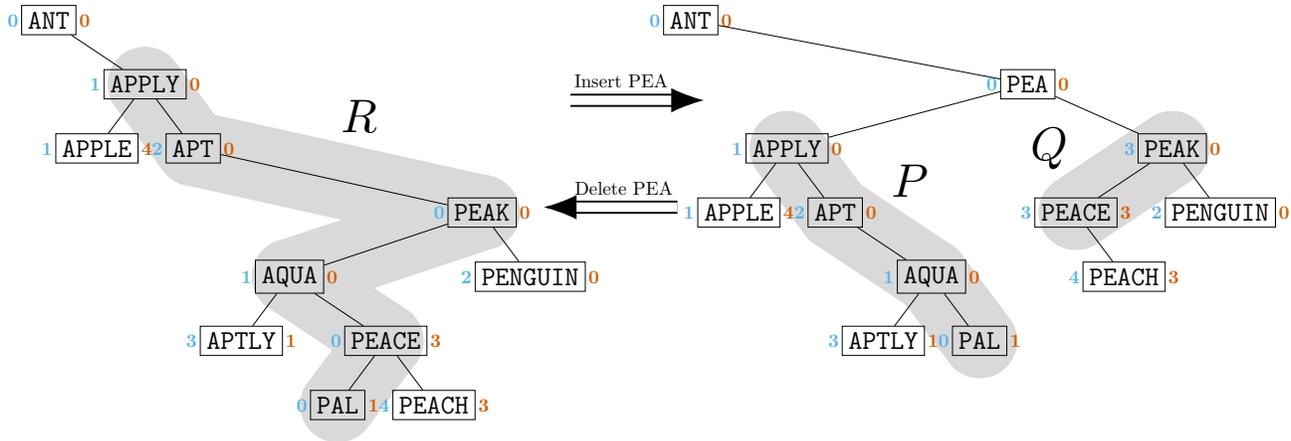

The only remaining nodes that may need to update their LCP length metadata are
nodes along the unzipping and zipping paths, $P$ and $Q$, and the target node
$x$.
We will discuss how to update the LCP length metadata of one
such node, $v$, in constant time for either operation.
Before each operation, $v$ knows the LCP length between itself and its
ancestors, $\pi_v^-$ and $\pi_v^+$, while during the course of each operation,
$v$ must update the LCP length metadata to reflect LCP lengths between its new
ancestors, ${\pi'_v}^-$ and ${\pi'_v}^+$.

We first consider the case of insertion.
Let's start by assuming that $v$ is part of $P$, i.e., $v \prec x$.
Consider the set of $v$'s ancestors, $\pi_v$, during the course of the
insertion.
All nodes in $\pi_v$ that precede $v$ must either be part of $P$, in which case
they remain ancestors of $v$ on the right spine of $x$ left subtree, or they must also be
ancestors of $x$ after the operation, in which case they remain entirely
unchanged.
In both cases, $v$'s immediate predecessor ancestor, $\pi_v^-$, remains
unchanged and consequently there is nothing to do for $\ell_{v, \pi_v^-}$.
Regarding $v$'s immediate successor ancestor, regardless of what it was before,
must become $x$.
Recall that during the course of the insertion, all nodes in $P$ are compared
against $x$ using the \textsc{k-Compare} procedure, which also returns the LCP
length shared between the two keys.
Consequently, $v$'s LCP length metadata with its successor, $x$, is already
known, i.e., $\ell_{v, \pi_v^+} = \ell_{v, x}$.
Due to symmetry, if $v$ is part of $Q$, i.e., $v \succ x$, then we update
$\ell_{v, \pi_v^-} = \ell_{v, x}$.

Deletion is the inverse operation.
Let's again assume that $v$ is part of $P$.
Due to symmetry, $v$'s preceding ancestors do not change, and consequently
we must only update $v$'s LCP length with its immediate successor ancestor,
${\pi'_v}^+$.
Consider the possible candidates of ${\pi'_v}^+$.
Either this node is a member of set $Q$, or it is the immediate successor of
$x$.
In both cases, the LCP length between $x$ and ${\pi'_v}^+$ is known using either
$x$'s or ${\pi'_v}^+$'s successor or predecessor metadata, respectively.
Using the triangle inequality, we can update $\ell_{v, {\pi'_v}^+} =
\min(\ell_{v, x}, \ell_{x, {\pi'_v}^+})$.
Due to symmetry, if $v$ is part of $Q$, then we update $\ell_{v, {\pi'_v}^-} =
\min(\ell_{v, x}, \ell_{x, {\pi'_v}^-})$.

\subsection*{Reducing LCP Metadata Size.}
In the previous section, we showed how to efficiently update the LCP length
metadata associated with each node during zip-trie update operations.
In the case of collections of short strings, or strings that share very short
prefixes with each other---i.e., $\ell = \mathcal{O}(\log{n})$---the LCP length
metadata does not contribute significantly to the memory overhead of the
zip-trie, and storing the exact LCP lengths is possible.
We refer to this as the \emph{memory-intensive zip-trie}, (MI-ZT), and we
evaluate its performance experimentally.
However, in the case of longer strings, in order to preserve the zip-trie's
low metadata overhead, we store approximate LCP lengths using exponentially
fewer bits.
These approximations will always be lower bounds on the true LCP lengths, and
since these approximations will be close enough to the true values, we 
prove that we only increase the time spent on string comparisons by a constant
factor.

To do this, zip-tries only represent approximate LCP lengths defined by a
paradigm detailed below.
Let $A(n)$ be the number of times that \textsc{k-Compare} is invoked during the
course of an operation, and let $A(n) = \Theta(f(n))$ for some function $f(n)$.
For zip-tries, we pick $f(n) = \log{n}$.
Instead of storing the exact LCP length $\ell$ between a node, $v$, and its
ancestor, we instead store an approximate LCP length of the form $2^a
\times b$, where $a$ and $b$ are non-negative integers with maximum values
$\log{\ell}$ and $f(n)$, respectively.
For any exact LCP length $\ell$, we can set $a = \lfloor \log{\frac{\ell}{f(n)}}
\rfloor$ and $b = \lfloor \frac{\ell}{2^a} \rfloor$, i.e., the largest
approximate LCP length that is less than or equal to $\ell$.

\begin{lemma}
	Representing $\ell$ as described above requires at most $\log{\log{\ell}} +
	\log(2 f(n))$ bits.
\end{lemma}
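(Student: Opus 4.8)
The plan is to bound separately the bits needed for the exponent $a$ and for the mantissa $b$ in the representation $\ell \approx 2^a \times b$, and then add the two counts. First I would dispose of the degenerate regime: if $\ell \le f(n)$ the prescription sets $a = 0$ and $b = \lfloor \ell \rfloor = \ell \le f(n)$, so $(a,b)$ already fits in $\mathcal{O}(\log f(n))$ bits and the claim holds trivially; likewise the cases $\ell \in \{0,1\}$, where $\log\log\ell$ is not meaningful, are trivial. For the remainder I assume $\ell > f(n) \ge 1$, which also guarantees that $a \ge 0$, as required by the construction.

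For the exponent, I would use $f(n) \ge 1$ to get $a = \lfloor \log(\ell/f(n))\rfloor \le \log(\ell/f(n)) \le \log\ell$. Hence $a$ is a non-negative integer at most $\lfloor\log\ell\rfloor$, ranging over at most $\lfloor\log\ell\rfloor + 1 \le 2\log\ell$ values (using $\log\ell\ge 1$), so it can be encoded in at most $\lceil\log(2\log\ell)\rceil$ bits, which is $\log\log\ell + \mathcal{O}(1)$.

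For the mantissa, the crucial step is to extract a two-sided bound on $2^a$ from the floor in the definition of $a$: on one side $2^a \le \ell/f(n)$, and on the other $a > \log(\ell/f(n)) - 1$, so that $2^a > \ell/(2f(n))$. The second inequality is the one that matters here: it gives $b = \lfloor \ell/2^a\rfloor \le \ell/2^a < 2f(n)$, so $b$ is a non-negative integer strictly below $2f(n)$ and fits in $\lceil\log(2f(n))\rceil$ bits. Summing the two bit counts yields at most $\log\log\ell + \log(2f(n)) + \mathcal{O}(1)$ bits in total, and absorbing the additive constant (or tracking the ceilings a little more carefully) gives the stated bound.

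I do not anticipate a real obstacle: everything reduces to the two elementary inequalities on $a$ and on $2^a$. The only points requiring mild care are (i) deriving the lower bound $2^a > \ell/(2f(n))$ from the floor, since that is exactly what caps $b$ strictly below $2f(n)$, and (ii) fixing a convention for the ``number of bits needed to store a value in $\{0,\dots,M\}$'' (namely $\lceil\log_2(M+1)\rceil$), so that the additive constants and the edge cases $\ell \le f(n)$ line up with the clean form $\log\log\ell + \log(2f(n))$ asserted in the lemma.
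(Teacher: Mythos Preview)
Your proposal is correct and follows essentially the same argument as the paper: bound $a$ above by $\log\ell$ (using $f(n)\ge 1$), bound $2^a$ below by $\ell/(2f(n))$ via the floor, and conclude $b<2f(n)$. You are simply more careful than the paper about the degenerate regime $\ell\le f(n)$ and about ceiling/encoding conventions, which the paper glosses over.
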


\begin{proof}
	The largest possible value of $a$ is $\log{\ell}$, which requires only
	$\log{\log{\ell}}$ bits to represent.
	We can lower bound $a$ by $\log{\frac{\ell}{f(n)}} - 1$, and consequently
	$2^a$ by $\frac{1}{2} \frac{\ell}{f(n)}$.
	Using this lower bound, we see that $b$ can be at most $2 f(n)$, requiring
	only $\log(2 f(n))$ bits to represent.
\end{proof}

One of the main advantages of trie data structures using \textsc{k-Compare} is
that the time taken to perform comparisons during the course of an operation is
proportional to the LCP length between the target key $x$ and the keys in the
trie.
More specifically, the time spent on comparisons is $\mathcal{O}(\ell +
A(n))$.
In order to maintain this property, we must ensure that the approximations are
close enough to the true LCP lengths.

\begin{lemma} \label{lemma:approx-lcp-diff}
	Let $\ell$ and $\tilde \ell$ represent the true and approximate LCP lengths,
	respectively, between two keys.
	The maximum difference between $\ell$ and $\tilde \ell$ is at most
	$\frac{\ell}{f(n)}$.
\end{lemma}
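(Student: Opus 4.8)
The plan is to directly analyze the approximation error from the representation $\tilde\ell = 2^a \times b$ with $a = \lfloor \log(\ell/f(n))\rfloor$ and $b = \lfloor \ell/2^a\rfloor$. First I would observe that the only rounding that loses anything is the floor in the definition of $b$: we have $\tilde\ell = 2^a \lfloor \ell/2^a\rfloor$, so $\ell - \tilde\ell = 2^a(\ell/2^a - \lfloor \ell/2^a\rfloor) < 2^a$. Thus the error is strictly less than $2^a$, and it remains to bound $2^a$.

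Next I would bound $2^a$ using the choice of $a$. Since $a = \lfloor \log(\ell/f(n))\rfloor \le \log(\ell/f(n))$, we get $2^a \le 2^{\log(\ell/f(n))} = \ell/f(n)$. Combining with the previous step gives $\ell - \tilde\ell < 2^a \le \ell/f(n)$, which is exactly the claimed bound. (One should also note $\tilde\ell \le \ell$, so the approximation is indeed a lower bound as asserted in the surrounding text, and the "difference" is $\ell - \tilde\ell$ which is non-negative.) I would also remark on the edge cases: when $\ell < f(n)$, $a$ could be taken as $0$ (or the formula gives a negative exponent, handled by clamping $a$ to $0$), in which case $\tilde\ell = \ell$ exactly and the bound holds trivially; and when $\ell = 0$ there is nothing to prove.

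I do not expect any real obstacle here — the statement is a one-line consequence of the floor bound plus the definition of $a$. The only mild subtlety worth being careful about is keeping the inequalities in the right direction (the error is at most $2^a$, and $2^a$ is at most $\ell/f(n)$, not the other way around) and making sure the non-integer quantity $\log(\ell/f(n))$ interacts correctly with the floor so that $2^a$ stays below $\ell/f(n)$ rather than above it; using $a \le \log(\ell/f(n))$ (from the floor lowering the value) is exactly what makes $2^a \le \ell/f(n)$ go through. With that in hand the lemma follows immediately, and it can then be fed into the subsequent argument that string-comparison time is only blown up by a constant factor — since after reading $\tilde\ell$ matching characters via \textsc{k-Compare}, at most $\ell - \tilde\ell \le \ell/f(n)$ further characters need to be examined, and summed over the $A(n) = \Theta(f(n))$ comparisons this contributes only $O(\ell)$ extra work.
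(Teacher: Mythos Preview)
Your proposal is correct and follows essentially the same argument as the paper: bound $\ell - \tilde\ell < 2^a$ via the floor in $b$, then use $a \le \log(\ell/f(n))$ to get $2^a \le \ell/f(n)$. Your added remarks on edge cases and the connection to the subsequent lemma are fine but not needed for the proof itself.
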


\begin{proof}
	We know that $b = \lfloor \frac{\ell}{2^a} \rfloor > \frac{\ell}{2^a} - 1$.
	Recalling that $\tilde \ell = 2^a \times b$, we can substitute our lower
	bound for $b$ to obtain that $\tilde \ell > \ell - 2^a$.
	Rearranging this equation we see that $\ell - \tilde \ell < 2^a$.
	The proof follows directly from the fact that trivially $2^a \leq \frac{\ell}{f(n)}$.
\end{proof}

We are now equipped to show that the use of approximate LCP lengths does not
increase the asymptotic time spent on comparisons.

\begin{lemma} \label{lemma:approx-lcp-time}
	A data structure, $\mathcal{D}_S$, spending $\mathcal{O}(\ell + A(n))$ time on comparisons
	using exact LCP lengths will spend at most $\mathcal{O}(\ell)$ additional
	time using approximate LCP lengths.
\end{lemma}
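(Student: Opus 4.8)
The plan is to bound the extra cost incurred by each individual \textsc{k-Compare} call when it uses an approximate LCP length instead of the exact one, and then sum over the $\Theta(A(n)) = \Theta(f(n))$ calls made during an operation. Recall from \Cref{fact:k-comp} that a single \textsc{k-Compare} call costs $O(1)$ in the two ``cheap'' cases (when $x$ shares a strictly longer or shorter prefix with $\pi^\pm$ than with $v$), and costs $O(\ell_{x,v} - \ell_{x,\pi^\pm} + 1)$ in the ``expensive'' case, where it actually compares characters starting from position $\ell_{x,\pi^\pm}$. With approximate LCP lengths, the stored value $\tilde\ell_{x,\pi^\pm}$ satisfies $\tilde\ell_{x,\pi^\pm} \le \ell_{x,\pi^\pm}$, so the comparison may now start up to $\ell_{x,\pi^\pm} - \tilde\ell_{x,\pi^\pm}$ characters earlier than it needed to. By \Cref{lemma:approx-lcp-diff}, this slack is at most $\ell_{x,\pi^\pm}/f(n)$. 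I would also note that the approximation can only weaken the inequality tests in cases (a) and (b): a call that would have fallen into a cheap case might now fall into the expensive case, but when it does, the extra work is again bounded by the same slack term, since the true LCP length it discovers is still $\ell_{x,\pi^\pm}$-close to the approximate one it started from.

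Next I would set up the telescoping bound. Order the expensive calls in the operation as they occur; by the argument in \Cref{sec:paradigm} (and the theorem following \Cref{fact:k-comp}), the true LCP lengths $\ell_{x,\pi^\pm}$ used as starting points are nondecreasing along the traversal and are all bounded by $\ell$, the LCP between $x$ and the traversed keys, so the exact-cost terms telescope to $O(\ell + A(n))$. The additional cost from approximation is, per call, at most $\ell_{x,\pi^\pm}/f(n) + O(1)$, and since each $\ell_{x,\pi^\pm} \le \ell$, the total additional cost is at most $A(n) \cdot (\ell / f(n)) + O(A(n))$. Here I would invoke the choice $A(n) = \Theta(f(n))$: the first term becomes $\Theta(f(n)) \cdot \ell / f(n) = O(\ell)$, and the second term $O(A(n))$ is already absorbed into the original $O(\ell + A(n))$ bound. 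Hence the total comparison time with approximate LCP lengths is $O(\ell + A(n))$, which is $O(\ell)$ more than the exact bound, as claimed. In the word-RAM setting one divides the character-count bounds by $\alpha$ throughout (via \Cref{cor:seq-compare-string}), and the conclusion is unchanged.

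The step I expect to require the most care is the per-call accounting when approximation pushes a call from a cheap case into the expensive case: I need to argue that the newly-performed character comparison still has length bounded by the slack $\ell_{x,\pi^\pm} - \tilde\ell_{x,\pi^\pm}$ (plus one), rather than by something uncontrolled like $\ell_{x,v}$. The key observation is that in the cheap cases the exact test (a)/(b) held strictly, so the character comparison, once started at position $\tilde\ell_{x,\pi^\pm}$, will resolve by position $\ell_{x,\pi^\pm}$ at the latest (it discovers either the mismatch that $v$ has with $\pi^\pm$, or that $x$ and $v$ agree up to $\ell_{x,\pi^\pm}$ and then the tie is resolved by the already-known relation to $\pi^\pm$); so the number of characters examined is at most $\ell_{x,\pi^\pm} - \tilde\ell_{x,\pi^\pm} + 1 \le \ell/f(n) + 1$. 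A secondary subtlety is making the monotonicity of the starting LCP lengths fully rigorous in the presence of approximation, but since the true LCP values are what drive the telescoping and the approximate values only ever lower the starting index, this does not affect the sum. Everything else is the routine substitution $A(n) = \Theta(f(n))$ already fixed in the construction.
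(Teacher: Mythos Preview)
Your proposal is correct and follows essentially the same approach as the paper: bound the per-call extra cost by the approximation slack $\ell/f(n)$ from \Cref{lemma:approx-lcp-diff}, multiply by the $A(n) = \Theta(f(n))$ calls, and conclude the total extra is $\mathcal{O}(\ell)$. The paper's proof is terser and does not explicitly address the cheap-to-expensive case transition you analyze, so your version is in fact a bit more careful, but the core argument is identical.
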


\begin{proof}
	By the discussion above, the maximum difference between the true
	and approximate LCP lengths is at most $\frac{\ell}{f(n)}$, resulting in at
	most an additional $\frac{\ell}{f(n)}$ time spent per comparison.
	Since $A(n)$ comparisons are made during the course of the operation, and
	since $A(n) = \Theta(f(n))$, the total additional time spent on comparisons
	is at most $\mathcal{O}(A(n) \times \frac{\ell}{f(n)}) = \mathcal{O}(\ell)$.
\end{proof}

\begin{corollary}
	If $\mathcal{D}_S$ is a collection of strings, then under the word-RAM
	model, $\mathcal{D}_S$ will spend at most
	$\mathcal{O}(\frac{\ell}{\alpha})$ additional time using approximate LCP
	lengths.
\end{corollary}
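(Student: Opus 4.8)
The plan is to replay the counting argument of \Cref{lemma:approx-lcp-time} in the word RAM model, charging each re-examined character to a machine word via the word-packed string comparison of \Cref{sec:seq-string-comparisons}. Recall from \Cref{lemma:approx-lcp-diff} that every stored approximation $\tilde\ell$ satisfies $0\le\ell-\tilde\ell\le\ell/f(n)$, with $A(n)=\Theta(f(n))$. First I would isolate the extra work that approximate metadata forces on a single invocation of \textsc{k-Compare}. In the one branch that touches characters, exact metadata resumes the scan \textsc{LCP}$(x[\ell_{x,\pi^\pm}:k], v[\ell_{x,\pi^\pm}:k])$ at position $\ell_{x,\pi^\pm}$, whereas approximate metadata resumes it at $\tilde\ell_{x,\pi^\pm}\ge\ell_{x,\pi^\pm}-\ell/f(n)$, so it re-scans at most $\ell/f(n)$ extra characters before hitting the same mismatch. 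One also has to check that the two cheap early-return branches are not spoiled: since $\ell\mapsto\tilde\ell$ is monotone, a strictly ordered pair of stored lengths can at worst become tied, downgrading case (a) or (b) to case (c); but the true lengths still pin down the comparison result, and a short computation shows the resulting \textsc{LCP} scan covers at most $\ell/f(n)$ characters anyway, so the per-invocation bound is unchanged.

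Next I would charge the re-scanned characters using \Cref{cor:seq-compare-string}: under the word RAM model a block of $c$ consecutive characters costs $\mathcal{O}(c/\alpha)$ time, so each \textsc{k-Compare} incurs only $\mathcal{O}(\ell/(f(n)\alpha))$ extra time. Summing over the $A(n)=\Theta(f(n))$ invocations of \textsc{k-Compare} performed during an operation gives total extra comparison time $\mathcal{O}(A(n)\cdot\ell/(f(n)\alpha))=\mathcal{O}(\ell/\alpha)$, which is the claimed bound; setting $\alpha=1$ recovers the standard-RAM statement of \Cref{lemma:approx-lcp-time}.

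I expect the only real obstacle to be the bookkeeping of partial machine words. A re-scanned block of $\ell/f(n)$ characters need not start on a word boundary, so a crude count would attach an extra additive $\mathcal{O}(1)$ to each of the $A(n)$ invocations, suggesting $\mathcal{O}(\ell/\alpha+A(n))$ additional time rather than the cleaner $\mathcal{O}(\ell/\alpha)$. I would resolve this by noting that this additive term is not new: even with exact LCP metadata each \textsc{k-Compare} resumes its scan from the unaligned offset $\ell_{x,\pi^\pm}$, so the per-invocation rounding is already absorbed in the $\mathcal{O}(A(n))$ term of the word-RAM bound of \Cref{cor:seq-compare-string}. The only genuinely additional cost is the at most $\ell/f(n)$ re-scanned characters per call, and pulling the left endpoint of each scanned range backward by that much adds $\mathcal{O}(A(n)\cdot\ell/f(n))=\mathcal{O}(\ell)$ characters in total; hence the telescoping of the scanned ranges along the search path --- which caps the exact-metadata total at $\mathcal{O}(\ell/\alpha+A(n))$ --- continues to hold, now with an extra $\mathcal{O}(\ell/\alpha)$ words.
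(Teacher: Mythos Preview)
Your argument is correct and matches the paper's approach: bound the extra scan per \textsc{k-Compare} by $\ell/f(n)$ characters via \Cref{lemma:approx-lcp-diff}, convert to $\mathcal{O}(\ell/(f(n)\alpha))$ time by word-packing, and sum over the $A(n)=\Theta(f(n))$ invocations. The paper in fact states this corollary without a separate proof, treating it as the immediate word-RAM analogue of \Cref{lemma:approx-lcp-time}; your additional care with the early-return branches and word-boundary bookkeeping goes well beyond what the paper supplies.
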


Since over-estimating $A(n)$ may only improve the approximation, we assume that
$A(n)$ is a constant throughout the lifecycle of the trie, specifically the
largest value of $A(n)$ that will be encountered, i.e., $\mathcal{O}(\log{n})$
w.h.p. when $n$ is the maximum number of nodes in the trie.
\Cref{theorem:zip-trie} follows from the above results along with the results
of the previous section.

\subsection*{Parallel Zip-Tries.} \label{sec:parallel-zt}
In this subsection, 
we introduce \emph{parallel zip-tries}, which are a variant of zip-tries
suited for high-dimensional keys.
We present these results as a general framework for parallelizing subsequent
calls to \textsc{k-Compare} during the course of an operation.
We note that the \textsc{k-Compare} algorithm described in
\Cref{sec:paradigm} runs in constant time other than the time spent computing
the LCP length, which may take up to $\ell_{x, v} - \ell_{x, \pi^\pm}$ time.
We assume that we have a parallel LCP oracle \textsc{Parallel-LCP} that can
compute the LCP length between two $k$-dimensional keys using $\mathcal{O}(k)$
work and $\mathcal{O}(1)$ span.
We show such an oracle for strings in \Cref{sec:pram-string-comparisons}.
Consider naively substituting \textsc{LCP} for \textsc{Parallel-LCP} in the
\textsc{k-Compare} algorithm.
If $\ell_{x, \pi^\pm}$ were to become large early on, i.e., one of the first
nodes in $\pi$ share a long prefix with $x$, then all subsequent comparisons
would be very cheap and we may spend at most $\mathcal{O}(k)$ work during the
course of the operation.
However, if $\ell_{x, \pi^\pm}$ were to stay small, we may spend up to
$\mathcal{O}(k)$ work \textit{per comparison}, resulting in $\mathcal{O}(k
\times A(n))$ work during the course of the operation, where $A(n)$ is the
number of times that \textsc{k-Compare} is invoked.
This result makes no use of \textsc{k-Compare}'s ability to reuse comparisons.
We have to be a bit more clever in order to achieve optimal work.

\begin{theorem}\label[theorem]{theorem:parallel-zt-k}
	If $\mathcal{D}_S$ invokes \textsc{k-Compare} $A(n)$ times during the course
	of an operation, then it can perform all comparison operations in
	$\mathcal{O}(A(n))$ span and $\mathcal{O}(k + A(n))$ work in CRCW PRAM,
	where $k$ is the dimensionality of the target key $x$.
\end{theorem}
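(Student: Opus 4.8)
The plan is to run the $A(n)$ invocations of \textsc{k-Compare} sequentially (since each one's result may determine which node is compared next, so the $A(n)$-span dependency chain is unavoidable), but to amortize the expensive \textsc{Parallel-LCP} calls against one another so that their total work is $\mathcal{O}(k)$ rather than $\mathcal{O}(k \cdot A(n))$. The key structural observation, inherited from \Cref{fact:k-comp}, is that the only nontrivial work in the $i$-th \textsc{k-Compare} call is a \textsc{Parallel-LCP} invocation on the suffixes $x[\ell_i:k]$ and $v_i[\ell_i:k]$, where $\ell_i = \ell_{x,\pi^\pm}$ is the current best known LCP between $x$ and the traversed nodes, and this call either (a) discovers a mismatch within the first few characters and returns quickly, or (b) extends the known LCP of $x$, i.e., produces $\ell_{i+1} > \ell_i$. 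Because the running maximum $\ell_i$ is monotone nondecreasing and bounded by $k$, the ``productive'' parts of these calls telescope: $\sum_i (\ell_{i+1} - \ell_i) \le k$.

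The subtlety — and the main obstacle — is that a \textsc{Parallel-LCP} call in case (b) does work proportional to the full length $k - \ell_i$ it is asked to scan, not merely proportional to the amount $\ell_{i+1} - \ell_i$ by which it advances the known LCP, so a naive accounting still gives $\mathcal{O}(k \cdot A(n))$ work. First I would fix this by having \textsc{k-Compare} call \textsc{Parallel-LCP} not on the entire remaining suffix but on a \emph{doubling window}: compare $x$ and $v_i$ over a block of length, say, $2(\ell_i^{\mathrm{old}} - \ell_i + 1)$ or more simply a geometrically growing sequence of window sizes $1, 2, 4, \ldots$ starting at position $\ell_i$, stopping as soon as a mismatch is found or the keys end. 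Since \textsc{Parallel-LCP} has $\mathcal{O}(1)$ span and linear work, a window of size $s$ costs $\mathcal{O}(s)$ work and $\mathcal{O}(1)$ span, and doubling through windows that total $\mathcal{O}(s)$ to certify an LCP-advance of $s$ keeps span $\mathcal{O}(1)$ (a constant number of doublings, each launched only after the previous returns, or all launched in parallel and the correct prefix selected — either way $\mathcal{O}(\log s)$ span, which I would instead absorb by launching all $\Theta(\log k)$ candidate windows simultaneously so the span per call is truly $\mathcal{O}(1)$). The work charged to the $i$-th call is then $\mathcal{O}(\ell_{i+1} - \ell_i + 1)$: the $+1$ covers case (a) where the call finds an immediate mismatch and does not advance $\ell$, and the $\ell_{i+1}-\ell_i$ term covers the productive scanning in case (b), with the doubling ensuring we never overshoot the advance by more than a constant factor.

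With that device in place the accounting is immediate. Summing the per-call work over all $A(n)$ invocations gives $\sum_{i=1}^{A(n)} \mathcal{O}(\ell_{i+1} - \ell_i + 1) = \mathcal{O}\!\left(\sum_i (\ell_{i+1}-\ell_i)\right) + \mathcal{O}(A(n)) = \mathcal{O}(\ell_{x,\pi}) + \mathcal{O}(A(n)) = \mathcal{O}(k + A(n))$, using that the telescoping sum of LCP advances is at most the final LCP $\ell_{x,\pi} \le k$. The span is $\mathcal{O}(A(n))$ because the calls form a chain of length $A(n)$ and each contributes $\mathcal{O}(1)$ span (the CRCW model is what lets \textsc{Parallel-LCP} find the first mismatch in a window in $\mathcal{O}(1)$ span, and lets us launch the $\Theta(\log k)$ window sizes in parallel and combine them in $\mathcal{O}(1)$). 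I would then remark that the constant-time non-LCP bookkeeping of \textsc{k-Compare} (the two rank comparisons and the metadata updates of \Cref{fact:k-comp}) fits within these bounds trivially, and that the $\pm$ and $\ell_{x,\pi^\pm}$ values threaded through the chain are exactly what each call needs, so no extra synchronization is required. The only thing I would double-check carefully is the edge case where \textsc{Parallel-LCP} is asked to scan past position $k$ — the window is simply clipped to $k$, which only decreases work — and the case $\ell_i = k$ already at the start of a call, where the call returns in $\mathcal{O}(1)$ with no scanning at all.
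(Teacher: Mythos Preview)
Your doubling-window idea does not achieve the stated bounds, and the paper in fact discusses essentially this approach immediately \emph{after} its proof as a natural but insufficient alternative. The issue is the tension you yourself flag between span and work. If you run the windows $1,2,4,\ldots$ \emph{sequentially} and stop at the first mismatch, then the work per call is indeed $\mathcal{O}(\ell_{i+1}-\ell_i+1)$ and telescopes to $\mathcal{O}(k+A(n))$; but the span per call is $\Theta(\log(\ell_{i+1}-\ell_i+1))$, not $\mathcal{O}(1)$, and summing gives span up to $\mathcal{O}(A(n)\log k)$ rather than $\mathcal{O}(A(n))$. Your attempted fix---``launching all $\Theta(\log k)$ candidate windows simultaneously''---collapses the span to $\mathcal{O}(1)$ per call but now the largest window has size $\Theta(k-\ell_i)$, so you do $\Theta(k)$ work per call and $\Theta(k\cdot A(n))$ work overall, exactly the naive bound you set out to avoid. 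You cannot simultaneously claim $\mathcal{O}(1)$ span \emph{and} $\mathcal{O}(\ell_{i+1}-\ell_i+1)$ work per call with this scheme: in the PRAM model the work of a launched window is paid whether or not its result turns out to be needed.

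The paper's proof uses a different, non-adaptive chunking: call \textsc{Parallel-LCP} on successive blocks of \emph{fixed} size $k/f(n)$, where $f(n)=\Theta(A(n))$. Each block costs $\mathcal{O}(k/A(n))$ work and $\mathcal{O}(1)$ span. A block that finds a mismatch occurs at most once per \textsc{k-Compare} call, hence $A(n)$ times total; a block that finds equality advances the running LCP by $k/f(n)$, so there are at most $f(n)=\Theta(A(n))$ such blocks in total by telescoping. Thus there are $\mathcal{O}(A(n))$ blocks overall, giving $\mathcal{O}(A(n))$ span and $\mathcal{O}(A(n)\cdot k/A(n))=\mathcal{O}(k)$ work. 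The key idea you are missing is to tune the block size to $A(n)$ rather than to the (unknown) per-call LCP advance.
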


\begin{proof}
	As we described in the above analysis, if we provide the entire remainder of
	$x$ and $v$ to \textsc{Parallel-LCP} at a time, then we may end up
	performing too much unnecessary work.
	On the other hand, if we provide too little, we may end up invoking
	\textsc{Parallel-LCP} too many times, hurting the span of our algorithm.
	The key observation is that these two conflicting goals can be balanced by
	providing only $\frac{k}{f(n)}$ items of $x$ and $v$ to
	\textsc{Parallel-LCP} at a time, for some function $f(n)$ where $A(n) =
	\Theta(f(n))$.
	We repeatedly invoke \textsc{Parallel-LCP} on these subsets of $x$ and $v$
	until we have computed the LCP length between the two keys.
	As before, it is still possible to perform up to $\mathcal{O}(k)$ work per
	comparison, but crucially this can only occur when $x$ and $v$ indeed share
	a long prefix.
	To analyze the work and span of our algorithm, we can consider the two
	possible results of a \textsc{Parallel-LCP} invocation:
	\begin{enumerate}
		\item The prefixes are different. This can occur at most once per
		comparison, and consequently $A(n)$ times in total.

		\item The prefixes are the same. While this can occur several times per
		comparison, due to the nature of the \textsc{k-Compare} algorithm, we
		will never compare these prefixes the same.
		Since each prefix is of size $\frac{k}{f(n)}$ and our key only has $k$
		items, this can only occur at most $f(n)$ times in total.
	\end{enumerate}

	Both cases only occur $\mathcal{O}(A(n))$ times overall,
	each time involving $\mathcal{O}(\frac{k}{A(n)})$ work.
\end{proof}

Note that while the above result is optimal when $x$ is already in the data
structure, i.e., when $\ell = k$, it performs poorly when $x$ shares only a very
short prefix with the keys in the data structure, i.e., when $\ell \ll k$.
There are several natural ways to try to reduce the work considerably in this
case.
One way is, for each \textsc{Parallel-LCP} invocation, providing subsets of only
one item at a time, and then only if they are the same, providing two, then
four, and so on, continuously doubling until the prefixes are different.
This way, we end up comparing at most $2(\ell_{x, v} - \ell_{x, \pi^\pm})$ extra
items per comparison, or $2k$ extra items overall.
While such an algorithm only spends the desired $\mathcal{O}(k +
A(n))$ work spent on comparisons, since it may double up to $\log{\ell}$ times
per comparison, it may spend
$\mathcal{O}(\log{\ell} \times A(n))$ span overall.

Alternatively, in order to reduce span, we may avoid resetting the number of
items we compare between different comparisons, opting to start where we left
off in the previous comparison, incurring the $\log{\ell}$ doubling cost only
once in total as opposed to once per comparison.
However, the amount of extra work incurred would no longer be a
telescoping sum, and could be up to $\mathcal{O}(k)$ \textit{per comparison},
resulting in $\mathcal{O}(k \times A(n))$ work overall.

\begin{theorem} \label[theorem]{theorem:parallel-zt-l}
	Let $\ell$ be the longest common prefix length between $x$ and the
	keys in $\mathcal{D}_S$.
	If $\mathcal{D}_S$ invokes \textsc{k-Compare} $A(n)$ times during the course
	of an operation, then it can perform all comparison operations in
	$\mathcal{O}(\log{\ell} + A(n))$ span and $\mathcal{O}(\ell + A(n))$ work in
	CRCW PRAM.
\end{theorem}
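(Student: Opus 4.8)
The plan is to combine the two ideas sketched in the paragraphs immediately preceding the statement: the fixed-block-size scheme of \Cref{theorem:parallel-zt-k}, which gives optimal $\mathcal{O}(\ell+A(n))$ work but only when $\ell=k$, and the doubling scheme, which is LCP-aware but pays a $\log\ell$ span penalty \emph{per comparison}. The key is to make the doubling \emph{persistent across} calls to \textsc{k-Compare} within a single operation: maintain a running ``current block length'' $\beta$, initialized to $1$, which is never reset between comparisons. On each invocation of \textsc{k-Compare}, after the cheap constant-time cases (a) and (b) of \Cref{fig:k-comp} are dispatched, we perform the expensive \textsc{LCP} computation by repeatedly calling \textsc{Parallel-LCP} on blocks of $x$ and $v$ starting at offset $\ell_{x,\pi^\pm}$, first of size $\beta$; as long as a block returns ``equal'', we double $\beta$ and continue; once a block returns a mismatch, we have located $\ell_{x,v}$ and we \emph{halve} $\beta$ back down to a small value (or, more simply, carry the value of $\beta$ at which the mismatch occurred into the next comparison without resetting it to $1$). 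The first task is to pin down exactly which of these bookkeeping variants makes both accountings go through; I expect the cleanest one is: never decrease $\beta$ below $1$, and decrease it by one doubling step after each mismatch, so that the total number of doublings is bounded by (number of increases) which telescopes.

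Next I would do the \textbf{work accounting}. Each \textsc{Parallel-LCP} call on a block of size $s$ costs $\mathcal{O}(s)$ work. Partition the calls into ``equal'' calls and ``mismatch'' calls. There is at most one mismatch call per \textsc{k-Compare} invocation, so $A(n)$ of them in total; since block sizes are at most $\mathcal{O}(k)$, but more carefully, the block that produces the mismatch for a given comparison has size at most twice the number of genuinely new characters that comparison inspects, and by \Cref{fact:k-comp} those counts telescope to $\mathcal{O}(\ell)$ across the whole operation, so the mismatch calls cost $\mathcal{O}(\ell+A(n))$ work total. For the ``equal'' calls: an ``equal'' block of size $s$ certifies $s$ new characters of the common prefix that will never be re-examined (this is exactly the reuse property of \textsc{k-Compare}, as invoked in the proof of \Cref{theorem:parallel-zt-k}), and the total common-prefix length examined is at most $\ell$, so the equal calls also sum to $\mathcal{O}(\ell)$ work. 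Hence $\mathcal{O}(\ell+A(n))$ work overall.

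Then the \textbf{span accounting}. Within a single \textsc{k-Compare} call, the chain of \textsc{Parallel-LCP} invocations is sequential, and each invocation has $\mathcal{O}(1)$ span, so the span of one call is proportional to its number of invocations, i.e. the number of doublings it performs plus one. I would charge each doubling either to a distinct power of two $\le \ell$ — of which there are only $\log\ell$ — using the fact that $\beta$ never resets, so $\beta$ can only be increased past a given threshold a bounded number of... — here is the subtlety: because we \emph{do} decrease $\beta$ by one step after each mismatch, $\beta$ can oscillate, so the number of increases is at most (number of mismatches) $+$ $\log\ell$ $=$ $A(n)+\log\ell$. Adding the $\mathcal{O}(1)$ per call for the non-doubling overhead gives $\mathcal{O}(A(n))$ more, for a total span of $\mathcal{O}(\log\ell + A(n))$. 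I would close by noting that, as in the other theorems of this section, in the CRCW PRAM model the constant-time cases of \textsc{k-Compare} and the dispatch of each \textsc{Parallel-LCP} block genuinely take $\mathcal{O}(1)$ span, and that for zip-tries $A(n)=\mathcal{O}(\log n)$ w.h.p., which is how \Cref{theorem:parallel-zt-l} specializes to the zip-trie bound.

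The step I expect to be the main obstacle is the span argument, specifically showing that the persistent-$\beta$ oscillation cannot blow up the number of doublings: one has to argue that each mismatch ``costs back'' at most one doubling step and that, apart from these mismatch-induced resets, $\beta$ is monotone increasing and bounded by $\ell$, so the doublings are globally charged to $\log\ell + A(n)$ rather than $\log\ell$ per comparison. Getting the interface between ``what block offset each \textsc{k-Compare} starts from'' (namely $\ell_{x,\pi^\pm}$, which itself only increases across the operation, again by the telescoping of \Cref{fact:k-comp}) and ``what block size $\beta$ it starts from'' to cooperate cleanly is where the care is needed; the work bound, by contrast, follows fairly directly from the reuse property already exploited in \Cref{theorem:parallel-zt-k}.
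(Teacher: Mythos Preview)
Your proposal is correct and matches the paper's approach essentially exactly: the paper also maintains a persistent block size $\delta$ (your $\beta$), doubling on ``equal'' and halving once on ``mismatch,'' and gives the same two-case accounting you outline---equal invocations number at most $\log\ell + \mathcal{O}(A(n))$ (your span argument) and their work telescopes to $\mathcal{O}(\ell)$, while mismatch invocations number $\mathcal{O}(A(n))$ and, by the halving, their total work is at most a constant times the equal work. The one place to tighten is your per-comparison claim that the mismatch block is ``at most twice the number of genuinely new characters that comparison inspects''---this fails when a comparison has no equal blocks at all, but the aggregate bound you want still holds by the geometric-series argument the paper uses (each halving after a mismatch means consecutive mismatch blocks shrink geometrically, so total mismatch work is $\mathcal{O}(\ell)$).
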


\begin{proof}
	We face a similar dilemma as in the proof of \Cref{theorem:parallel-zt-k}.
	If we start our doubling procedure by comparing too few items we may end up
	hurting our span, while if we start by comparing too many, we may end up
	hurting our work.
	In this case, we can balance these two goals by starting our
	\textsc{Parallel-LCP} procedure by comparing half as many items as we did
	before.
	The idea is that if we repeatedly compare too many items, our excess work
	keeps halving, overall incurring a constant factor of extra work.
	As before, we consider:
	\begin{enumerate}
		\item The prefixes are different.
		This can occur at most $\mathcal{O}(A(n))$ times overall.

		\item The prefixes are the same.
		Let $\delta$ be the current subset size provided to
		\textsc{Parallel-LCP}, and let $\delta$ start at 1.
		After $\log{\ell}$ invocations where the prefixes are the same, $\delta$
		will reach $\ell$, at which point no further `equal' invocations are
		possible.
		It is easy to see that the total number of `equal' invocations are at
		most $\log{\ell}$ plus the number of times where the prefixes differed,
		or $\log{\ell} + \mathcal{O}(A(n))$ times overall.
	\end{enumerate}

	Due to the nature of the \textsc{k-Compare} algorithm, we spend at most
	$\ell$ work on \textsc{Parallel-LCP} invocations where the prefixes are the
	same.
	And due to the nature of the halving procedure whenever we compare different
	prefixes, the total work incurred by such invocations is at most a constant
	multiple of the work incurred by the `equal' invocations.
\end{proof}

For strings, we can pack $\alpha$ characters per `item', and under the practical
PRAM model~\cite{eppsteinBriefAnnouncementUsing2017}, we can perform operations
on machine words in parallel in $\mathcal{O}(1)$ time.
Applying the above results to the parallel zip-trie data structure, we obtain:

\begin{theorem}
	Let $\ell$ be the length of the longest common prefix between a string key
	$x$ of length $k$, and the stored keys in a parallel zip-trie, $T$.
	$T$ can perform prefix search, predecessor/successor queries, and update
	operations on $x$ in $\mathcal{O}(\log{n})$ span and
	$\mathcal{O}(\frac{k}{\alpha} + \log{n})$ work under the practical PRAM
	model,
	or alternatively in
	$\mathcal{O}(\log{\frac{\ell}{\alpha}} + \log{n})$ span and
	$\mathcal{O}(\frac{\ell}{\alpha} + \log{n})$ work.
\end{theorem}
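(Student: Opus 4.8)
The plan is to assemble this final theorem directly from the machinery already built up in the excerpt, treating it as a corollary of the parallel comparison theorems combined with the structure of zip-trie operations. First I would recall that every prefix search, predecessor/successor query, and update operation on a (parallel) zip-trie reduces, at its core, to a root-to-node traversal that invokes \textsc{k-Compare} once per node visited, plus the constant-time pointer surgery of the zip and unzip operations described in \Cref{sec:zt-update}. As in the sequential case, the number of such invocations is $A(n) = \Theta(\log n)$ w.h.p., since zip-tries are isomorphic to zip-zip trees whose search depth is $\mathcal{O}(\log n)$ w.h.p. The zip/unzip pointer and LCP-metadata updates touch only nodes on the paths $P$ and $Q$ (by \Cref{lemma:zt-unchanged}), each in constant time, so they contribute $\mathcal{O}(A(n)) = \mathcal{O}(\log n)$ work and span and are subsumed by the comparison bounds.

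Next I would plug in the two parallel comparison results with the string packing. Under the practical PRAM model we may treat each machine word, holding $\alpha$ characters, as a single ``item'' on which \textsc{Parallel-LCP} and comparison operations run in $\mathcal{O}(1)$ span, as noted just before the theorem statement; the relevant dimensionality is then $k/\alpha$ items rather than $k$. Applying \Cref{theorem:parallel-zt-k} with this reinterpretation (dimensionality $k/\alpha$, $A(n) = \Theta(\log n)$) yields $\mathcal{O}(A(n)) = \mathcal{O}(\log n)$ span and $\mathcal{O}(k/\alpha + A(n)) = \mathcal{O}(k/\alpha + \log n)$ work on comparisons, which combined with the $\mathcal{O}(\log n)$ overhead of the zip/unzip bookkeeping gives the first pair of bounds. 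For the second pair, I would instead apply \Cref{theorem:parallel-zt-l} with dimensionality $\ell/\alpha$ (the LCP, measured in items), giving $\mathcal{O}(\log(\ell/\alpha) + \log n)$ span and $\mathcal{O}(\ell/\alpha + \log n)$ work; I should note that this is the LCP-aware variant, exactly as the sequential \Cref{theorem:zip-trie} is LCP-aware, and that the telescoping-sum argument from \Cref{sec:paradigm} still applies because \textsc{k-Compare} never re-examines a prefix it has already matched.

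The one point that needs more than a one-line citation is verifying that the approximate-LCP-metadata scheme (the $2^a \times b$ encoding) is compatible with the parallel comparison argument, i.e., that \Cref{lemma:approx-lcp-time} carries over. I expect this to be the main — though still modest — obstacle: one must check that when \textsc{k-Compare} uses approximate rather than exact LCP bounds, each comparison in the parallel setting does at most $\mathcal{O}(\ell/f(n)) = \mathcal{O}(\ell/(\alpha \log n))$ extra item-comparisons beyond what the exact-LCP telescoping sum would charge, and that these excess amounts, summed over the $A(n) = \Theta(\log n)$ invocations, total $\mathcal{O}(\ell/\alpha)$ extra work while adding only $\mathcal{O}(1)$ span per invocation (the extra items fit in the span already allotted to a \textsc{Parallel-LCP} call). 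This is the parallel analogue of \Cref{lemma:approx-lcp-time} and follows the same accounting, so I would state it as a short remark rather than a separate lemma. Everything else — that prefix search, range queries, and predecessor/successor follow the same traversal-plus-enumeration template as in the sequential zip-trie, and that the CRCW concurrency is needed only inside \textsc{Parallel-LCP} — I would dispatch by reference to the earlier sections.
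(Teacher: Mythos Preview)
Your proposal is correct and follows essentially the same route as the paper: the paper states this theorem as an immediate consequence of \Cref{theorem:parallel-zt-k} and \Cref{theorem:parallel-zt-l}, after observing that under the practical PRAM model one may pack $\alpha$ characters per item and that $A(n)=\Theta(\log n)$ for zip-tries. Your additional care about the approximate-LCP encoding interacting with the parallel argument goes a bit beyond what the paper spells out (the paper simply does not revisit the $2^a\times b$ scheme in the parallel subsection), but it is a sound observation and does not change the approach.
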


\section{Parallel String B-Trees} \label{sec:string-b-tree}

\begin{figure*}[t!]
	\vspace*{-\bigskipamount}
    \centering
    \resizebox{\linewidth}{!}{\begin{tikzpicture}[
	n/.style = { 
        rectangle, draw, minimum width = 0.5cm, minimum height=0.4cm,
        font=\verytiny, inner sep=2pt, color=okabe1
    },
]
	\node[n] (AIM) at (0,0) {AIM};
	\node[n] (ANT) at ($(AIM) + (1.2, 0)$) {\keyvalue{ANT}{1}};
	\node[n] (APPLE) at ($(ANT) + (1.2, 0)$) {\keyvalue{APPLE}{1}};
	\node[n] (APPLY) at ($(APPLE) + (1.2, 0)$) {\keyvalue{APPLY}{4}};
	\node[n] (APTLY) at ($(APPLY) + (1.2, 0)$) {\keyvalue{APTLY}{2}};
	\node[n] (ASK) at ($(APTLY) + (1.2, 0)$) {\keyvalue{ASK}{1}};
	\node[n] (ASKS) at ($(ASK) + (1.2, 0)$) {\keyvalue{ASKS}{3}};
	\node[n] (ASLEEP) at ($(ASKS) + (1.2, 0)$) {\keyvalue{ASLEEP}{2}};
	\node[n] (ASSIST) at ($(ASLEEP) + (1.2, 0)$) {\keyvalue{ASSIST}{2}};
	\node[n] (LIME) at ($(ASSIST) + (1.2, 0)$) {\keyvalue{LIME}{0}};
	\node[n] (PAT) at ($(LIME) + (1.2, 0)$) {\keyvalue{PAT}{0}};
	\node[n] (PEA) at ($(PAT) + (1.2, 0)$) {\keyvalue{PEA}{1}};
	\node[n] (PEACH) at ($(PEA) + (1.2, 0)$) {\keyvalue{PEACH}{3}};
	\node[n] (PEAK) at ($(PEACH) + (1.2, 0)$) {\keyvalue{PEAK}{3}};

	\node[above,font=\tiny] at (ANT.north) {(1,\keyvalue{N}{0})};
	\node[above,font=\tiny] at (APPLE.north) {(1,\keyvalue{P}{0})};
	\node[above,font=\tiny] at (APPLY.north) {(4,\keyvalue{Y}{0})};
	\node[above,font=\tiny] at (APTLY.north) {(2,\keyvalue{T}{0})};
	\node[above,font=\tiny] at (ASK.north) {(1,\keyvalue{S}{0})};
	\node[above,font=\tiny] at (ASKS.north) {(3,\keyvalue{S}{0})};
	\node[above,font=\tiny] at (ASLEEP.north) {(2,\keyvalue{L}{0})};
	\node[above,font=\tiny] at (ASSIST.north) {(2,\keyvalue{S}{0})};
	\node[above,font=\tiny] at (LIME.north) {(0,\keyvalue{L}{0})};
	\node[above,font=\tiny] at (PAT.north) {(0,\keyvalue{P}{0})};
	\node[above,font=\tiny] at (PEA.north) {(1,\keyvalue{E}{0})};
	\node[above,font=\tiny] at (PEACH.north) {(3,\keyvalue{C}{0})};
	\node[above,font=\tiny] at (PEAK.north) {(3,\keyvalue{K}{0})};

	\draw[->] (AIM) -- (ANT);
	\draw[->] (ANT) -- (APPLE);
	\draw[->] (APPLE) -- (APPLY);
	\draw[->] (APPLY) -- (APTLY);
	\draw[->] (APTLY) -- (ASK);
	\draw[->] (ASK) -- (ASKS);
	\draw[->] (ASKS) -- (ASLEEP);
	\draw[->] (ASLEEP) -- (ASSIST);
	\draw[->] (ASSIST) -- (LIME);
	\draw[->] (LIME) -- (PAT);
	\draw[->] (PAT) -- (PEA);
	\draw[->] (PEA) -- (PEACH);
	\draw[->] (PEACH) -- (PEAK);
\end{tikzpicture}}
    \caption{\label{fig:b-tree-branching} An illustration of metadata stored
    in a string B-tree node.
}
\end{figure*}
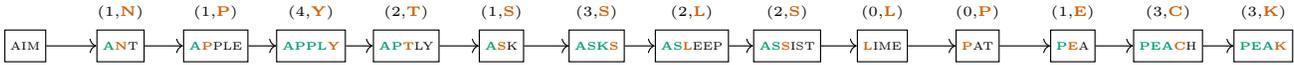

\begin{figure*}[b]
	\centering
	\resizebox{\linewidth}{!}{\begin{tikzpicture}[
	n/.style = { 
        rectangle, draw, minimum width = 0.5cm, minimum height=0.4cm,
        font=\verytiny, inner sep=2pt, color=okabe1
    },
]
	\node[n] (AIM) at (0,0) {AIM};
	\node[n] (ANT) at ($(AIM) + (1.2, 0)$) {\keyvalue{ANT}{1}};
	\node[n] (APPLE) at ($(ANT) + (1.2, 0)$) {\keyvalue{APPLE}{1}};
	\node[n] (APPLY) at ($(APPLE) + (1.2, 0)$) {\keyvalue{APPLY}{4}};
	\node[n] (APTLY) at ($(APPLY) + (1.2, 0)$) {\keyvalue{APTLY}{2}};
	\node[n] (ASK) at ($(APTLY) + (1.2, 0)$) {\keyvalue{ASK}{1}};
	\node[n] (ASKS) at ($(ASK) + (1.2, 0)$) {\keyvalue{ASKS}{3}};
	\node[n] (ASLEEP) at ($(ASKS) + (1.2, 0)$) {\keyvalue{ASLEEP}{2}};
	\node[n] (ASSIST) at ($(ASLEEP) + (1.2, 0)$) {\keyvalue{ASSIST}{2}};
	\node[n] (LIME) at ($(ASSIST) + (1.2, 0)$) {\keyvalue{LIME}{0}};
	\node[n] (PAT) at ($(LIME) + (1.2, 0)$) {\keyvalue{PAT}{0}};
	\node[n] (PEA) at ($(PAT) + (1.2, 0)$) {\keyvalue{PEA}{1}};
	\node[n] (PEACH) at ($(PEA) + (1.2, 0)$) {\keyvalue{PEACH}{3}};
	\node[n] (PEAK) at ($(PEACH) + (1.2, 0)$) {\keyvalue{PEAK}{3}};

	\node[above,font=\tiny] at (ANT.north) {(1,\keyvalue{N}{0})};
	\node[above,font=\tiny] at (APPLE.north) {(1,\keyvalue{P}{0})};
	\node[above,font=\tiny] at (APPLY.north) {(4,\keyvalue{Y}{0})};
	\node[above,font=\tiny] at (APTLY.north) {(2,\keyvalue{T}{0})};
	\node[above,font=\tiny] at (ASK.north) {(1,\keyvalue{S}{0})};
	\node[above,font=\tiny] at (ASKS.north) {(3,\keyvalue{S}{0})};
	\node[above,font=\tiny] at (ASLEEP.north) {(2,\keyvalue{L}{0})};
	\node[above,font=\tiny] at (ASSIST.north) {(2,\keyvalue{S}{0})};
	\node[above,font=\tiny] at (LIME.north) {(0,\keyvalue{L}{0})};
	\node[above,font=\tiny] at (PAT.north) {(0,\keyvalue{P}{0})};
	\node[above,font=\tiny] at (PEA.north) {(1,\keyvalue{E}{0})};
	\node[above,font=\tiny] at (PEACH.north) {(3,\keyvalue{C}{0})};
	\node[above,font=\tiny] at (PEAK.north) {(3,\keyvalue{K}{0})};

	\draw[->] (AIM) -- (ANT);
	\draw[->] (ANT) -- (APPLE);
	\draw[->] (APPLE) -- (APPLY);
	\draw[->] (APPLY) -- (APTLY);
	\draw[->] (APTLY) -- (ASK);
	\draw[->] (ASK) -- (ASKS);
	\draw[->] (ASKS) -- (ASLEEP);
	\draw[->] (ASLEEP) -- (ASSIST);
	\draw[->] (ASSIST) -- (LIME);
	\draw[->] (LIME) -- (PAT);
	\draw[->] (PAT) -- (PEA);
	\draw[->] (PEA) -- (PEACH);
	\draw[->] (PEACH) -- (PEAK);

	\node[below] (bAIM1) at (AIM.south) {a)};

	\node[below] (bANT1) at (ANT.south) {$\leq$ \cmark};
	\node[below] (bAPPLE1) at (APPLE.south) {$\leq$ \cmark};
	\node[below] (bAPPLY1) at (APPLY.south) {$>$ \xmark};
	\node[below] (bAPTLY1) at (APTLY.south) {$>$ \xmark};
	\node[below] (bASK1) at (ASK.south) {$>$ \xmark};
	\node[below] (bASKS1) at (ASKS.south) {$>$ \xmark};
	\node[below] (bASLEEP1) at (ASLEEP.south) {$\leq$ \cmark};
	\node[below] (bASSIST1) at (ASSIST.south) {$>$ \xmark};
	\node[below] (bLIME1) at (LIME.south) {$>$ \xmark};
	\node[below] (bPAT1) at (PAT.south) {$>$ \xmark};
	\node[below] (bPEA1) at (PEA.south) {$\leq$ \cmark};
	\node[below] (bPEACH1) at (PEACH.south) {$\leq$ \cmark};
	\node[below] (bPEAK1) at (PEAK.south) {$>$ \xmark};

	\node[below] (bAIM2) at (bAIM1.south) {b)};

	\node[below] (bAIM3) at (bAIM2.south) {c)};

	\node (bANT3) at ($(bAIM3) + (1.2, 0)$) {$\leq$ \cmark};
	\node[rectangle,draw] (bAPPLE3) at ($(bANT3) + (1.2, 0)$) {$\leq$ \cmark};
	\node (bAPPLY3) at ($(bAPPLE3) + (1.2, 0)$) {$>$ \xmark};
	\node (bAPTLY3) at ($(bAPPLY3) + (1.2, 0)$) {$>$ \xmark};
	\node (bASK3) at ($(bAPTLY3) + (1.2, 0)$) {$>$ \xmark};
	\node (bASKS3) at ($(bASK3) + (1.2, 0)$) {$>$ \xmark};
	\node (bASLEEP3) at ($(bASKS3) + (1.2, 0)$) {$\leq$ \xmark};
	\node (bASSIST3) at ($(bASLEEP3) + (1.2, 0)$) {$>$ \xmark};
	\node (bLIME3) at ($(bASSIST3) + (1.2, 0)$) {$>$ \xmark};
	\node (bPAT3) at ($(bLIME3) + (1.2, 0)$) {$>$ \xmark};
	\node (bPEA3) at ($(bPAT3) + (1.2, 0)$) {$\leq$ \xmark};
	\node (bPEACH3) at ($(bPEA3) + (1.2, 0)$) {$\leq$ \xmark};
	\node (bPEAK3) at ($(bPEACH3) + (1.2, 0)$) {$>$ \xmark};


	\draw[|-] ([yshift=-4pt]bAPPLY1.south) -- ([yshift=-4pt]bAPTLY1.south);
	\draw[|-] ([yshift=+4pt]bAPTLY3.north) -- ([yshift=+4pt]bASK3.north);
	\draw[|-] ([yshift=-4pt]bASK1.south) -- ([yshift=-4pt]bLIME1.south);
	\draw[|-] ([yshift=+4pt]bASKS3.north) -- ([yshift=+4pt]bASSIST3.north);
	\draw[|-] ([yshift=+10.5pt]bASSIST3.north) -- ([yshift=+10.5pt]bLIME3.north);
	\draw[|-] ([yshift=+4pt]bLIME3.north) -- ([yshift=+4pt]bPAT3.north);
	\draw[|-] ([yshift=-4pt]bPAT1.south) -- ([yshift=-4pt,xshift=+0.3cm]bPEAK1.south);
	\draw[|-] ([yshift=+4pt]bPEAK3.north) -- ([yshift=+4pt,xshift=+0.3cm]bPEAK3.north);

\end{tikzpicture}}
	\caption{\label{fig:b-tree-branching-pram} An illustration of the steps
	taken to determine the index $w$ of a key that shares the longest common
	prefix with the search key $x$ (\texttt{APPLICATION}).
	Refer to the text for an explanation of each step.
	In practice this is an array and only pointers to the keys are
	stored%
	,
	rather the keys themselves%
	.
	These details were omitted for clarity.}
\end{figure*}
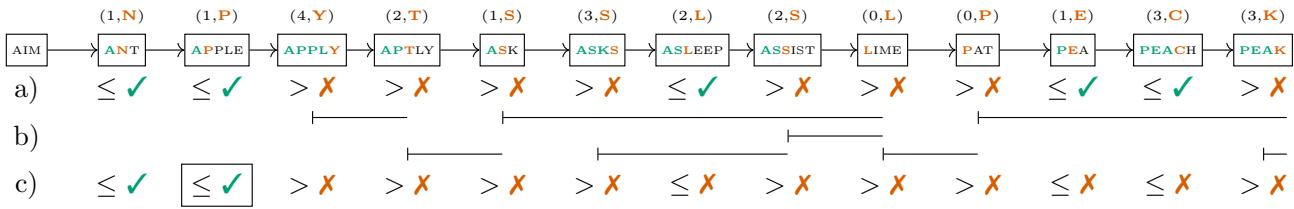

The string B-tree is a classical data structure for performing prefix search,
range queries, and similar trie operations on a collection of strings using an
optimal number of I/Os in the external memory model.
Specifically, the string B-tree performs the prefix search operation in
$\mathcal{O}(\frac{K}{B} + \log_B{n})$ I/Os, where $K$ is the number of machine
words required to store the search key, and $B$ is the number of machine words
that can be stored in a single block.
While its complexity, large memory usage, and overhead, particularly for update
operations, make a dynamic version of the string B-tree largely impractical in
practice~\cite{ferraginaStringAlgorithmsData2008}, we show how to apply the
parallelization framework we developed for the parallel zip-trie to
adapt the string B-tree to the practical PRAM model, pushing the bounds of span
while adding only a little increase in work.
Furthermore, we show how to maintain optimal I/Os while also achieving optimal
I/O span under the parallel external memory (PEM) model.

In string B-trees, each node of size $\mathcal{O}(B)$ is stored in
$\mathcal{O}(1)$ contiguous blocks of memory, and branches to $\mathcal{O}(B)$
children.
During operations, the string B-tree determines which child to follow by
performing only one comparison, for example using the \textsc{k-Compare}
procedure from \Cref{sec:paradigm}.
To know which of the $\mathcal{O}(B)$ keys in a node to compare against, the original string
B-tree used Patricia tries, while subsequent
papers~\cite{naSimpleImplementationString2004a,ferraginaStringAlgorithmsData2008,joo-youngEffectiveImplementationString2006,martinez-prietoPracticalCompressedString2016} 
have preferred to use a simpler algorithm, derived from the branching algorithm
for Bit-trees~\cite{fergusonBitTreeDataStructure1992}.

First, we describe relevant details of the string B-tree and the branching
algorithm.
Each internal node in a string B-tree stores pointers to $\mathcal{O}(B)$
children nodes, and in addition stores pointers to the first and last keys of
each child.
Using the simpler branching algorithm for string
B-trees~\cite{naSimpleImplementationString2004a}, each node stores these
pointers lexicographically in contiguous memory.
In addition to these pointers, each node also stores the LCP length shared
between two adjacent keys, and the first character where they differ (to avoid
expensive I/Os to look up these characters on the fly).
For simplicity, we assume that the metadata is stored in three separate arrays,
$k$, $\ell$, and $c$.
The $i$-th key has its pointer stored in $k_i$, the LCP length between itself
and the \textit{previous} key stored in $\ell_i$, and the first differing
character stored in $c_i$, where $c_i = k_i[\ell_i]$.
We depict this metadata in \Cref{fig:b-tree-branching}.

To achieve better performance in parallel models, each node stores an additional
array $n$ containing the index of the next key that has a smaller or equal LCP
value, or $\infty$ if no such key.
We now describe the three steps of the branching algorithm
from~\cite{naSimpleImplementationString2004a}:
\begin{enumerate}
	\item Determine a key at index $w$ that shares the longest common prefix
	with the search key $x$.

	\item Compare the search key $x$ with $k_w$ using \textsc{k-Compare},
	obtaining the LCP length.
	
	\item \label{step:ba-3} Find which key at index $y$ is either equal to $x$
	or is its immediate successor.
\end{enumerate}

We do not prove here why these steps correctly determine the correct index $y$
to follow in the string B-tree.
We refer the curious reader to view the full explanation as provided
in~\cite{naSimpleImplementationString2004a} for more details.

\begin{lemma} \label{lem:branching} The string B-tree branching algorithm (other
	than the call to \textsc{k-Compare}) can be implemented using $\mathcal{O}(1)$
	span and $\mathcal{O}(B \log{B})$ work in CRCW PRAM.
\end{lemma}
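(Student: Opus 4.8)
The plan is to analyze the three steps of the branching algorithm separately, showing that each admits an $\mathcal{O}(1)$-span, $\mathcal{O}(B\log B)$-work implementation in CRCW PRAM, and then to combine them. Since all the relevant arrays $k$, $\ell$, $c$, and $n$ have length $\mathcal{O}(B)$ and are already laid out in contiguous memory inside the node, each step is a computation on $\mathcal{O}(B)$-sized arrays, and the work bound $\mathcal{O}(B\log B)$ leaves comfortable room for a logarithmic-depth tournament or a brute-force all-pairs comparison performed with $\mathcal{O}(B^2)$ processors collapsed by concurrent writes.

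First I would handle Step 1 --- finding an index $w$ whose key shares the longest common prefix with the search key $x$. The key fact here is that, because the keys are stored lexicographically and the $\ell_i$ record the LCP of consecutive keys, the LCP of $x$ with any key $k_i$ is determined by how far a single running comparison of $x$ against the keys proceeds; concretely, the quantity we need is controlled by comparing the first differing character $c_i=k_i[\ell_i]$ against $x[\ell_i]$. Using the auxiliary array $n$ (the index of the next key with a smaller-or-equal $\ell$ value) one can, in parallel, have each index $i$ look up $x[\ell_i]$ and $c_i$ in $\mathcal{O}(1)$, then use a prefix-minimum / tournament over the resulting values to identify $w$; the tournament has $\mathcal{O}(\log B)$ depth, but we can also just do an all-pairs comparison to drop the span to $\mathcal{O}(1)$ at the cost of $\mathcal{O}(B^2)\le\mathcal{O}(B\log B)$... no --- $B^2$ is too much, so the honest route is: each of the $\mathcal{O}(B)$ candidate positions is scored in $\mathcal{O}(1)$ work, and then the $\arg\max$ of $\mathcal{O}(B)$ values is computed in CRCW PRAM in $\mathcal{O}(1)$ span and $\mathcal{O}(B\log B)$ work using the standard constant-time maximum algorithm (which uses $\mathcal{O}(B\log B)$ operations, or $O(B^{1+\epsilon})$ with fewer rounds, but the $B\log B$ variant suffices). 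This pins Step 1.

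Next, Step 2 is a single invocation of \textsc{k-Compare} against $k_w$, which the lemma statement explicitly excludes, so there is nothing to prove there --- I would just note that the LCP length it returns, call it $\ell_{x,w}$, together with the comparison sign, is what Step 3 consumes. For Step 3 --- locating the index $y$ of the key equal to $x$ or its immediate successor --- I would argue that once $\ell_{x,w}$ and the sign of $x$ versus $k_w$ are known, each index $i$ can decide in $\mathcal{O}(1)$ whether $k_i \preceq x$ or $k_i \succ x$ purely from the stored metadata $\ell_i$, $c_i$, and the $n$-array structure (this is exactly the case analysis of \textsc{k-Compare} specialized to ``one side of the comparison is already resolved''): the claim is that the sign of $x$ versus every other key is a monotone step function of $i$ whose breakpoint is $y$, and that breakpoint can be found by a parallel search --- binary search is $\mathcal{O}(\log B)$ span, but we can instead evaluate all $\mathcal{O}(B)$ predicates in parallel and take the boundary with an $\mathcal{O}(1)$-span OR/first-one computation, giving $\mathcal{O}(B)$ work for the predicates and $\mathcal{O}(B\log B)$ (or $\mathcal{O}(B)$) for the reduction. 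Combining the three steps, the span is $\mathcal{O}(1)+\mathcal{O}(1)+\mathcal{O}(1)=\mathcal{O}(1)$ and the work is $\mathcal{O}(B\log B)$, giving the lemma.

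The main obstacle I anticipate is Step 1: getting from ``compare $x$ to a sorted list of keys using only the $(\ell_i,c_i)$ metadata'' to a correct, genuinely $\mathcal{O}(1)$-span identification of $w$. The subtlety is that the naive characterization of $\mathrm{lcp}(x,k_i)$ in terms of the consecutive $\ell_j$'s involves a running minimum/prefix computation, which is inherently $\Omega(\log B)$ depth unless one exploits the extra $n$-array; I expect the real content of the proof to be verifying that the $n$-array (the ``next key with smaller-or-equal LCP'') lets each position compute its own score independently, so that the only parallel primitive needed is a constant-time $\arg\max$ over $\mathcal{O}(B)$ values. Once that independence is established, everything else is bookkeeping, and I would lean on the detailed correctness argument of Na and Park~\cite{naSimpleImplementationString2004a} rather than re-derive why these three steps produce the right $y$.
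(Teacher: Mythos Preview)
Your Step~3 analysis is fine and matches the paper's approach. The genuine gap is in Step~1, and you already put your finger on it in your final paragraph: you never actually produce an algorithm. You propose ``score each position in $\mathcal{O}(1)$, then take an $\arg\max$,'' but the score you want is $\mathrm{lcp}(x,k_i)$, and that quantity is \emph{not} computable independently at each $i$ from $(\ell_i,c_i,n_i)$ alone---it depends on a range minimum of the $\ell_j$'s together with a comparison of $x$ at that position, which is precisely the prefix-computation obstacle you name. The $n$-array does not let each position read off its own LCP with $x$; it encodes next-smaller-value structure, which is useful for \emph{range} operations but does not collapse the dependency chain to a per-index lookup. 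So as written, Step~1 is a hope, not an argument. (A side issue: the ``standard constant-time maximum in $\mathcal{O}(B\log B)$ work'' you invoke is not standard; the textbook CRCW constant-time max is $\mathcal{O}(B^2)$ work, and the sublinear-processor variants are $\mathcal{O}(B^{1+\epsilon})$.)

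The paper's route for Step~1 is entirely different and is where the $\mathcal{O}(B\log B)$ actually comes from. It never computes per-position LCPs. Instead it (a) marks index $i$ as a \emph{candidate} iff $c_i \le x[\ell_i]$, in $\mathcal{O}(B)$ work; (b) has every non-candidate $u$ eliminate all candidates in the interval $[u,\,n_u-1]$; and (c) returns the last surviving candidate via a single LSW (first/last-set) primitive in $\mathcal{O}(B)$ work. The nontrivial step is (b): done naively the ranges can overlap to give $\mathcal{O}(B^2)$ work, so the paper overlays the $\mathcal{O}(B)$ positions as leaves of a perfect binary tree of height $\mathcal{O}(\log B)$ and, for each range, marks the $\mathcal{O}(\log B)$ canonical nodes covering it (the usual segment-tree decomposition) using concurrent OR-writes---$\mathcal{O}(B)$ ranges times $\mathcal{O}(\log B)$ marks gives $\mathcal{O}(B\log B)$ work in $\mathcal{O}(1)$ span. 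Each candidate then checks its $\mathcal{O}(\log B)$ ancestors to see whether it was eliminated, again $\mathcal{O}(B\log B)$ work and $\mathcal{O}(1)$ span. That range-tree trick is the missing idea in your proposal.
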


\begin{proof}
	Perhaps the trickiest part of this algorithm is the first step, which we
	break down into three substeps.
	These substeps are also depicted in \Cref{fig:b-tree-branching-pram}.

	\begin{enumerate}[label=\alph*)]
		\item \label{step:pb-a} First, each key at index $i$ concurrently
		compares itself against the search key $x$ at the $\ell_i$-th character
		of each string.
		Each key which compares $\leq$ the search key is deemed a `candidate'.

		\item \label{step:pb-b} Next, each key that is \textit{not} a candidate
		eliminates all candidates between itself and the next key that has a
		smaller LCP value.

		\item \label{step:pb-c} Finally, the index of the last remaining
		candidate is determined to be the desired index $w$.
		If there are none, set $w = 0$.
	\end{enumerate}

	\crefname{enumi}{step}{steps}
    \Crefname{enumi}{Step}{Steps}

	\Cref{step:pb-a} is trivially done by simultaneously comparing $c_i$ with
	$x[\ell_i]$ for each key $i > 0$ in parallel, taking constant span and
	$\mathcal{O}(B)$ work.

	We then perform the most complex step in this algorithm, \Cref{step:pb-b},
	where each key at index $u$ that is not a candidate eliminates all
	candidates between itself and the next key that has a smaller or equal LCP
	value, at index $n_u$.
	For convenience, we define this range as $[u, n_u - 1]$ so that the range is
	inclusive.
	Naively, this can be done by each non-candidate key $u$ simultaneously
	eliminating all candidates between itself and $n_u$, which may take up to
	$\mathcal{O}(B)$ work per key or $\mathcal{O}(B^2)$ work in total if there
	is significant overlap between the ranges.
	To reduce the work to $\mathcal{O}(B \log{B})$, we instead consider each key
	in the array as though it corresponds to a leaf in a perfect binary tree
	with $\mathcal{O}(\log{B})$ levels, which we refer to as a `range tree'.
	All nodes are initialized to 0.
	Since this is a perfect binary tree, we can compute the array position of
	any ancestor of a node in constant span and work.
	Similarly, we can compute the array position of the lowest common ancestor
	$a$ of each range in constant span and work.
	Consider the path $p$ from the start of the range to the end of the range,
	passing through $a$, denoting the first half of the range from the start to
	$a$ as $p_l$ and the second half from $a$ to the end as $p_r$.
	With $\mathcal{O}(\log{B})$ processors per range, we can in parallel set the
	right child of each node in $p_l$, the left child of each node in $p_r$, and
	the $n_u - 1$ node itself to 1 (using a bitwise OR operation).
	Since there are at most $\mathcal{O}(B)$ ranges, this step can be done using
	$\mathcal{O}(B \log{B})$ work and in constant span.
	A `1' in any node of the tree indicates that the corresponding key of any of
	its descendants (if it is an internal node), or of itself (if it is a leaf
	node), falls within one of these ranges, and is therefore eliminated from
	being a candidate.
	See \Cref{fig:range-tree}.

	\begin{figure}[ht!]
        \vspace*{-\medskipamount}
        \centering
        \begin{tikzpicture}[
    n/.style={draw, circle, thick, minimum size=0.5cm},
    t/.style={draw, rectangle, thick, minimum size=0.5cm},
    a/.style={draw, diamond, thick, minimum size=0.5cm},
    level distance=1cm,
    level 1/.style={sibling distance=4cm},
    level 2/.style={sibling distance=2cm},
    level 3/.style={sibling distance=1cm},
	b/.style={pattern color=okabe3, pattern=north west lines},
	o/.style={pattern color=okabe2, pattern=north west lines},
]
    \node[a] {0}
        child { node[t] {0}
            child { node[t] {0}
                child { node[n] {0} }
                child { node[t,b] (ex0) {1} }
            }
            child { node[n, b] {1}
                child { node[n] {0} }
                child { node[n] {0} }
            }
        }
        child { node[t] {0}
            child { node[t, o] {1}
                child { node[n, o] {1} }
                child { node[t, fill=okabe8, fill opacity=0.6] {1} }
            }
            child { node[n] {0}
                child { node[n] (ex1) {0} }
                child { node[draw=none] {} edge from parent[draw=none] }
            }
        }
    ;

    \node (ex0bits) [below=3mm of ex0] {ASK}; 
    \drawReducedWidthBrace{ex0bits}{above}{}

    \node (ex1bits) [below=3mm of ex1] {LIME};
    \drawReducedWidthBrace{ex1bits}{above}{}

	\draw[|-] ([yshift=-4pt]ex0bits.south) -- ([yshift=-4pt]ex1bits.south);
\end{tikzpicture}
        \caption{\label{fig:range-tree} An illustration of the range tree
            consisting of a single range from ASK ($u$) to the key before LIME
            ($n_u - 1$).
            Nodes along the left and right paths, $p_l$ and $p_r$, are denoted
            as squares, while their lowest common ancestor $a$ is denoted as a
            diamond.
            Shaded nodes in blue and orange indicate the right or left children
            of $p_l$ and $p_r$ respectively, which are set to 1.
            The key at index $n_u - 1$, shown in purple, is also set to 1.
        }
        \vspace*{-\medskipamount}
    \end{figure}
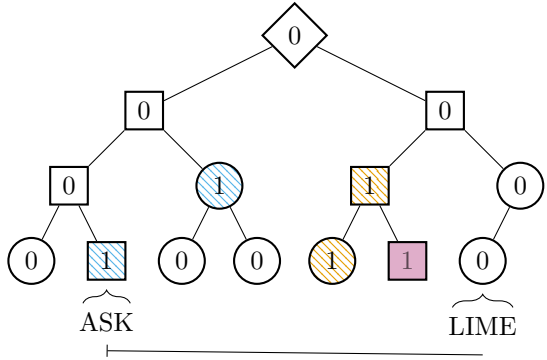

	\begin{fact}
        \vspace*{-\smallskipamount}
        The union of the leaves descending from `set' nodes in a range
        tree is the union of the original ranges.
        \vspace*{-\smallskipamount}
    \end{fact}

	Each node that is a candidate can then, concurrently, check all of its
	ancestors in the range tree, and if any of them are set to 1, eliminate
	itself as a candidate.
	This process is done with $\mathcal{O}(B \log{B})$ work and
	$\mathcal{O}(1)$ span.

	\Cref{step:pb-c} is done by a single LSW (least significant word) operation,
	which can be done as the reverse of the MSW operation as described in
	\Cref{sec:pram-string-comparisons}, and which takes constant span and
	$\mathcal{O}(B)$ work.
	The node at index 0 is always a candidate.

	The comparison of the search key $x$ with the key at index $w$ is done using
	the parallel \textsc{k-Compare} procedures as described in
	\Cref{sec:parallel-zt}, referring to the LCP value obtained as $\ell$.
	In \Cref{step:ba-3}, if $x$ is determined to be less than $k_w$, then we
	could determine the index $y$ by iterating to the left as $\ell_i \geq
	\ell$.
	This would, however, take up to $\mathcal{O}(B)$ span.
	Instead, each node at index $1 < i \leq w$ simultaneously determines if
	$\ell_i \geq \ell$ and if $\ell_{i-1} < \ell$.
	The right-most node that satisfies both conditions can be found with a
	single LSW operation, and is determined as the index $y$.
	This takes at most constant span and $\mathcal{O}(B)$ work.
	Similar steps can be taken if $x$ is determined to be greater than $k_w$.
\end{proof}

Since the number of string comparisons made during a string B-tree search $A(n) =
\Theta(\log_B{n})$, we can apply our parallelized \textsc{k-Compare} procedure
from \Cref{sec:parallel-zt} along with the branching algorithm to achieve the
following key results:
\begin{theorem} \label[theorem]{thm:string-b-tree-pram}
	By setting $B = \log{n}$, a parallel string B-tree can perform prefix search
	in $\mathcal{O}(\frac{\log{n}}{\log{\log{n}}})$ span and
	$\mathcal{O}(\frac{k}{\alpha} + \log^2{n})$ work in the practical CRCW PRAM
	model.
	Operations that return $m$ keys can be done in the same span and in
	$\mathcal{O}(m)$ additional work.
\end{theorem}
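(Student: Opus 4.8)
The plan is to assemble \Cref{thm:string-b-tree-pram} from three pieces already established: the height of a string B-tree, the cost of the branching algorithm in \Cref{lem:branching}, and the parallelized bookend-search bounds of \Cref{theorem:parallel-zt-k}. A prefix search in a string B-tree on $n$ keys with node degree $\Theta(B)$ visits $\mathcal{O}(\log_B n)$ nodes, and at each such node it runs the branching algorithm exactly once, which in turn issues exactly one call to \textsc{k-Compare}. Hence $A(n) = \Theta(\log_B n)$, and the whole operation splits into (i) the branching work and span \emph{outside} of \textsc{k-Compare}, summed over the visited nodes, and (ii) the \emph{aggregated} cost of all \textsc{k-Compare} calls, which must be charged globally rather than per node, because the framework of \Cref{sec:parallel-zt} reuses earlier LCP computations and the cost telescopes along the root-to-leaf path.

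For the span: by \Cref{lem:branching} the non-\textsc{k-Compare} part of branching runs in $\mathcal{O}(1)$ span per node, so $\mathcal{O}(\log_B n)$ span in total along the path. By \Cref{theorem:parallel-zt-k}, all \textsc{k-Compare} calls together run in $\mathcal{O}(A(n)) = \mathcal{O}(\log_B n)$ span. Interleaving the branching steps with the comparison steps does not change the bound since both summands are $\mathcal{O}(\log_B n)$. Substituting $B = \log n$ gives $\log_B n = \log n / \log\log n$, so the overall span is $\mathcal{O}(\log n / \log\log n)$.

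For the work: by \Cref{lem:branching} each node costs $\mathcal{O}(B \log B)$ work for branching, so over $\mathcal{O}(\log_B n)$ nodes this is $\mathcal{O}(B \log B \cdot \log_B n)$, which equals $\mathcal{O}(\log n \cdot \log\log n \cdot \tfrac{\log n}{\log\log n}) = \mathcal{O}(\log^2 n)$ when $B = \log n$. By \Cref{theorem:parallel-zt-k}, applied with $\alpha$ characters packed per item as in \Cref{cor:seq-compare-string}, all \textsc{k-Compare} calls together cost $\mathcal{O}(k/\alpha + A(n)) = \mathcal{O}(k/\alpha + \log n/\log\log n)$ work. Summing the two contributions yields total work $\mathcal{O}(k/\alpha + \log^2 n)$. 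For an operation that must report $m$ keys (full prefix enumeration or a range query), after locating the one or two boundary leaves we enumerate the affected leaf blocks via a parallel descent along the (at most two) boundary root-to-leaf paths, decomposing the reported set into $\mathcal{O}(\log_B n)$ canonical subtrees and assigning $\mathcal{O}(1)$ processors per reported key; this adds $\mathcal{O}(m)$ work and $\mathcal{O}(\log_B n)$ span, which is absorbed.

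The main obstacle is verifying that the amortized, telescoping accounting of \Cref{theorem:parallel-zt-k} is legitimate in the B-tree setting: that theorem is stated for a linked structure where each visited node performs one \textsc{k-Compare}, so I must confirm that the string B-tree threads the bookend metadata ($\pi^\pm$ and the LCP lengths $\ell_{x,\pi^\pm}$) from one visited node to the next exactly as the sequential paradigm of \Cref{sec:paradigm} does. This is precisely what the $n$-array and the three-step branching algorithm (with \Cref{lem:branching}) are designed to preserve---the representative key $k_w$ chosen at each node is passed to \textsc{k-Compare} starting from the LCP already established higher in the tree---so that the ``equal-prefix'' \textsc{Parallel-LCP} invocations occur only $\mathcal{O}(k/\alpha)$ times across the entire search rather than once per node. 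Spelling out this correspondence, together with the routine check that the range-reporting descent fits within the claimed $\mathcal{O}(\log_B n)$ span, completes the proof.
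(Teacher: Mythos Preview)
Your proposal is correct and follows essentially the same approach as the paper: both decompose the search cost into (i) the per-node branching cost from \Cref{lem:branching} summed over the $\mathcal{O}(\log_B n)$ levels, and (ii) the aggregated \textsc{k-Compare} cost from \Cref{theorem:parallel-zt-k}, then substitute $B=\log n$. Your range-reporting sketch differs slightly in presentation (canonical-subtree decomposition versus the paper's scheme of spawning $\mathcal{O}(m/\log n)$ independent root-to-leaf traversals followed by $m$ concurrent reads), but both arrive at $\mathcal{O}(m)$ additional work within the same span; note only that in a degree-$B$ tree the canonical decomposition yields $\mathcal{O}(B\log_B n)$ subtrees rather than $\mathcal{O}(\log_B n)$, though this does not affect your stated bounds.
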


Since B-trees are designed primarily for external memory, we also provide the
results in the \emph{parallel external memory} (PEM)
model~\cite{argeFundamentalParallelAlgorithms2008}, where processors are allowed
to perform I/Os in parallel:

\begin{theorem} \label[theorem]{thm:string-b-tree-pem}
	A parallel string B-tree can perform prefix search in
	$\mathcal{O}(\log_B{n})$ I/O span and $\mathcal{O}(\frac{k}{\alpha B} + \log_B{n})$
	I/O work in the practical CRCW PEM model.
	Operations that return $m$ keys can be done in the same span and in
	$\mathcal{O}(m/B)$ additional I/O work.
\end{theorem}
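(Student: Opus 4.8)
The plan is to combine three facts: the string B-tree has height $\mathcal{O}(\log_B n)$; each node occupies $\mathcal{O}(B)$ words in $\mathcal{O}(1)$ contiguous blocks, so by \Cref{lem:branching} its branching computation is I/O-free once the node is resident; and the parallel \textsc{k-Compare} framework of \Cref{sec:parallel-zt} can be re-charged in the PEM model, where one I/O moves $B$ contiguous words. Writing $K=\lceil k/\alpha\rceil$ for the number of words in the search key $x$, I would first set up the skeleton: a prefix search descends a root-to-leaf path of $h=\mathcal{O}(\log_B n)$ nodes, and loading the node at each level costs $\mathcal{O}(1)$ I/Os, for $\mathcal{O}(\log_B n)$ I/O work over the path. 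These loads are sequentially dependent, since the child to follow is known only after the branching step at the current node finishes, so the search already has $\Omega(\log_B n)$ I/O span; the rest of the argument shows that each level contributes only $\mathcal{O}(1)$ I/O span on top of this.

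Next I would revisit \Cref{lem:branching} in the PEM model. At the start of the search I stream $x$ into internal memory once, at a cost of $\mathcal{O}(K/B)=\mathcal{O}(k/(\alpha B))$ I/O work and, by loading its disjoint blocks in parallel, $\mathcal{O}(1)$ I/O span; this is the standard internal-memory assumption for string B-trees, and it is what makes the scattered accesses $x[\ell_i]$ affordable. Then at each node, once its $\mathcal{O}(1)$ blocks are resident, every part of the branching algorithm except the single \textsc{k-Compare} call --- the candidate test of $c_i$ against $x[\ell_i]$, the range-tree marking over $\mathcal{O}(B)$ leaves and $\mathcal{O}(\log B)$ levels, and the LSW/MSW scans --- touches only data inside those $\mathcal{O}(1)$ blocks together with already-resident characters of $x$, so it incurs $\mathcal{O}(1)$ I/Os and $\mathcal{O}(1)$ I/O span. (The $\mathcal{O}(B\log B)$ bound in \Cref{lem:branching} counts computational work, not I/O work, and plays no role here.)

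Finally I would charge the $\mathcal{O}(\log_B n)$ \textsc{k-Compare} calls, one per level, each comparing $x$ against the key $k_w$ selected at that node. Applying \Cref{theorem:parallel-zt-k} with $A(n)=\Theta(\log_B n)$ bounds the comparison effort by $\mathcal{O}(\log_B n)$ parallel rounds, each reading an $\mathcal{O}(k/(\alpha\log_B n))$-word contiguous chunk of $x$ (resident) and of $k_w$ (stored contiguously in external memory); a chunk read is $\mathcal{O}(k/(\alpha B\log_B n)+1)$ I/Os performed in $\mathcal{O}(1)$ I/O span. Crucially, by the telescoping property of \textsc{k-Compare} established in \Cref{sec:paradigm}, the chunks read from successive keys along the path do not overlap, so summing over the $\mathcal{O}(\log_B n)$ rounds yields $\mathcal{O}(k/(\alpha B)+\log_B n)$ total I/O work --- not a product of the two terms --- and $\mathcal{O}(\log_B n)$ I/O span. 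Combining with the $\mathcal{O}(\log_B n)$ node loads proves the claimed $\mathcal{O}(\log_B n)$ I/O span and $\mathcal{O}(k/(\alpha B)+\log_B n)$ I/O work.

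For a prefix query that must report $m$ keys, the answers form a contiguous run of $\mathcal{O}(m/B)$ leaf blocks --- whose far endpoint is located by a second search of the same asymptotic cost --- which is read in $\mathcal{O}(m/B)$ additional I/O work and $\mathcal{O}(1)$ additional I/O span by loading the blocks in parallel. The main obstacle I anticipate is precisely this PEM I/O re-accounting of the within-node work: one must verify that the $\mathcal{O}(B\log B)$-work range-tree construction and the LSW/MSW scans cost only $\mathcal{O}(1)$ I/Os and $\mathcal{O}(1)$ I/O span per level, which hinges on each node fitting in $\mathcal{O}(1)$ blocks and on $x$ being preloaded so the scattered reads $x[\ell_i]$ are free, and that the per-level \textsc{k-Compare} reads telescope, so that the key-length term stays $\mathcal{O}(k/(\alpha B))$ overall rather than being multiplied by the height $\log_B n$.
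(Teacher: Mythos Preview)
Your proposal is correct and follows essentially the same approach as the paper: combine \Cref{theorem:parallel-zt-k} (with $A(n)=\Theta(\log_B n)$) with the PEM string-comparison oracle of \Cref{lem:pem-string-compare}, and observe that branching at each node costs $\mathcal{O}(1)$ I/Os since the node fits in $\mathcal{O}(1)$ blocks. You supply considerably more detail than the paper's two-sentence proof---in particular, you make explicit the preloading of $x$ so that the scattered accesses $x[\ell_i]$ in the branching step are free, and you spell out the per-chunk I/O accounting---all of which the paper leaves implicit under ``any naive branching algorithm.''

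One small divergence worth noting: for reporting $m$ keys, the paper spawns $\mathcal{O}(m/B)$ processors that each traverse the tree top-down (sharing reads at upper levels under CRCW), giving $\mathcal{O}(\log_B n)$ I/O span and $\mathcal{O}(m/B)$ additional I/O work. Your variant---locate both endpoints, then read the $\mathcal{O}(m/B)$ leaf blocks in parallel in $\mathcal{O}(1)$ additional I/O span---presumes the leaf-block addresses are known up front (e.g., leaves stored contiguously), which is a slightly stronger layout assumption than a generic linked-leaf B-tree provides. Since the theorem only claims ``the same span,'' i.e., $\mathcal{O}(\log_B n)$, this does not affect correctness, but if you want to avoid that assumption you can fall back to the paper's traversal-based enumeration.
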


The proofs, along with LCP-aware variants, are left to an appendix.

\section{Experiments}

We support our claims with experimental results, comparing the performance of
both our sequential and LCP-aware parallel implementations of zip-tries (which
we refer to as ZT and PZT, respectively), along with the memory-intensive
variants (which we refer to as MI-ZT and MI-PZT, respectively) with a
state-of-the-art string dictionary data structure,
\texttt{c-trie++}~\cite{tsurutaCtrieDynamicTrie2022}.
In a recent paper, \texttt{c-trie++} was shown to outperform two non-compact
trie data structures and four compact trie data structures, including the z-fast
trie~\cite{belazzouguiDynamicZfastTries2010} and the packed compact
trie~\cite{DBLP:journals/ieicet/TakagiISA17} in a majority of performed
experiments, performing particularly well in the case of long strings.
We used the ABC HuMi dataset~\cite{hirschABCHuMiAtlasBiosynthetic2024}, a
contemporary dataset of biosynthetic gene clusters, since the
genes are long but share only short common prefixes.
%
The median gene length is around 18,000 nucleotides, but the median LCP length with
other genes is only around 7.
Our code is available online at \url{https://github.com/ofekih/ZipAndSkipTries}.


In \Cref{fig:search-lcp-length}, we compare the time it takes to search for
strings sharing LCP length $\ell$ in each data structure.
These empirical results clearly show that even for shorter LCP lengths, our
sequential zip-trie variants (ZT and MI-ZT) consistently outperform the vastly
more complex \texttt{c-trie++}.
Meanwhile, the parallel variants (PZT and MI-PZT), while having by far the best
slopes, are overwhelmed by the parallel overhead, performing worse than all
other tested data structures on the dataset.
These results motivate a hybrid data structure which alternates between
CPU and GPU LCP-length calculations depending on the input size, but such a data
structure is beyond the scope of this paper.
%
In \Cref{fig:construction-lcp-length}, we compare the time it takes to insert strings sharing
total LCP length $L$ into each data structure.
The empirical results show that, as with search operations, the sequential
zip-trie variants consistently outperform \texttt{c-trie++} even in the case of
a small total LCP length.
Furthermore, the memory-intensive variant which stores more accurate LCP length
information outperforms the canonical variant by a constant factor, as expected
from \Cref{lemma:approx-lcp-time}, confirming that it is the use of this
metadata that allows zip-tries to avoid a vast majority of string comparisons.
Unlike for the search operations, the vast majority of insertions involve
strings that share a very short LCP with others.
This hits the worst case for the parallel variants which incur a significant
overhead of GPU invocations despite reaping minimal benefits from the
parallelism.
This further motivates a hybrid data structure
as described earlier.


\begin{figure}[ht!]
    \centering
    \includegraphics[width=\linewidth,trim={1cm 0 1cm 1cm}, clip]{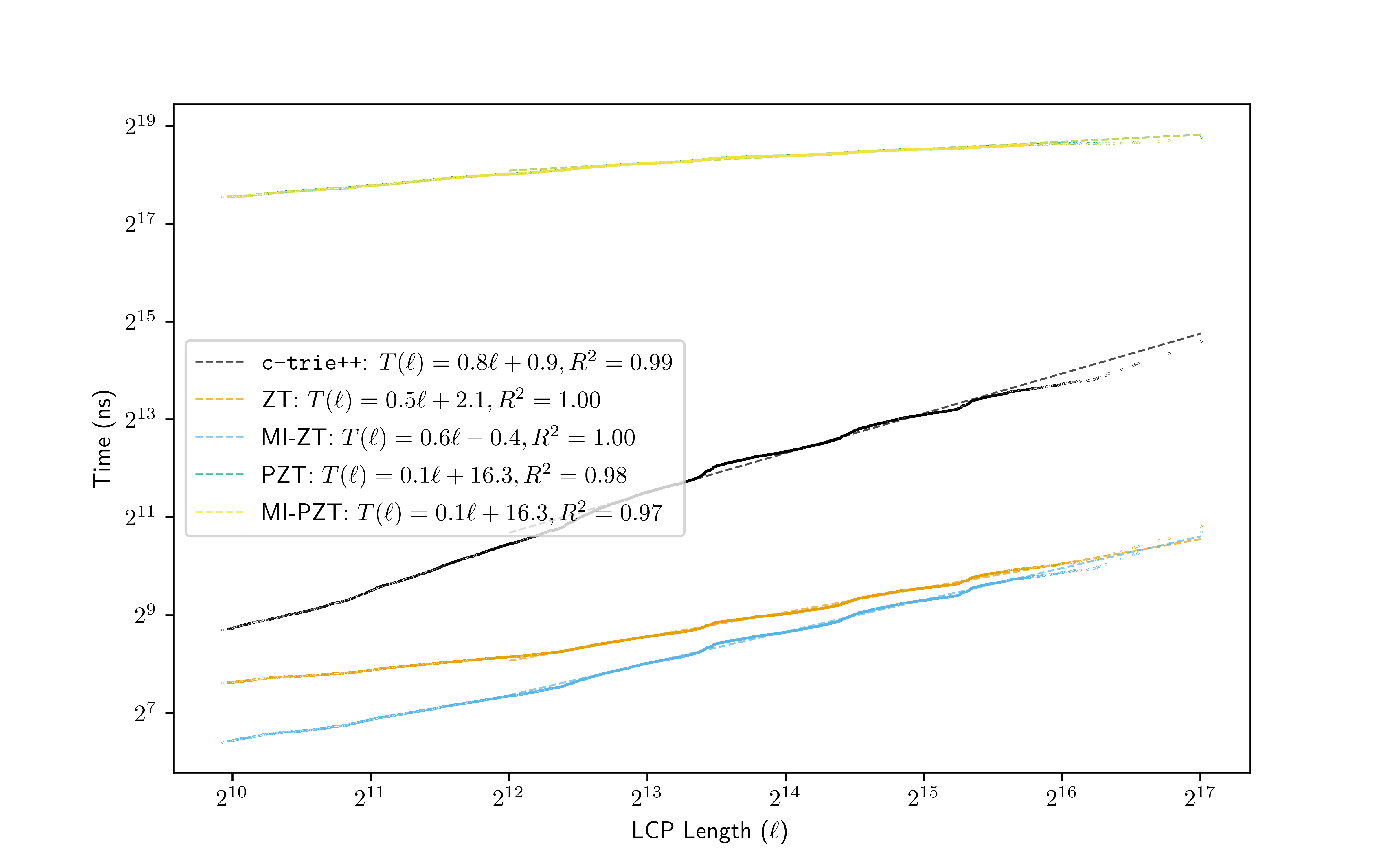}
    \caption{
		Performance comparison for the prefix search operation
		of our data structures against \texttt{c-trie++}
		in terms of the LCP length ($\ell$) of the searched keys.
		While the parallel variants have by far the best slope, their parallel
		overhead significantly impairs their performance.
		Our sequential algorithms outperform \texttt{c-trie++} even for shorter
		LCP lengths.
	}
    \label{fig:search-lcp-length}
\end{figure}

\begin{figure}[ht!]
    \centering
    \includegraphics[width=\linewidth,trim={1cm 0 1cm 1cm}, clip]{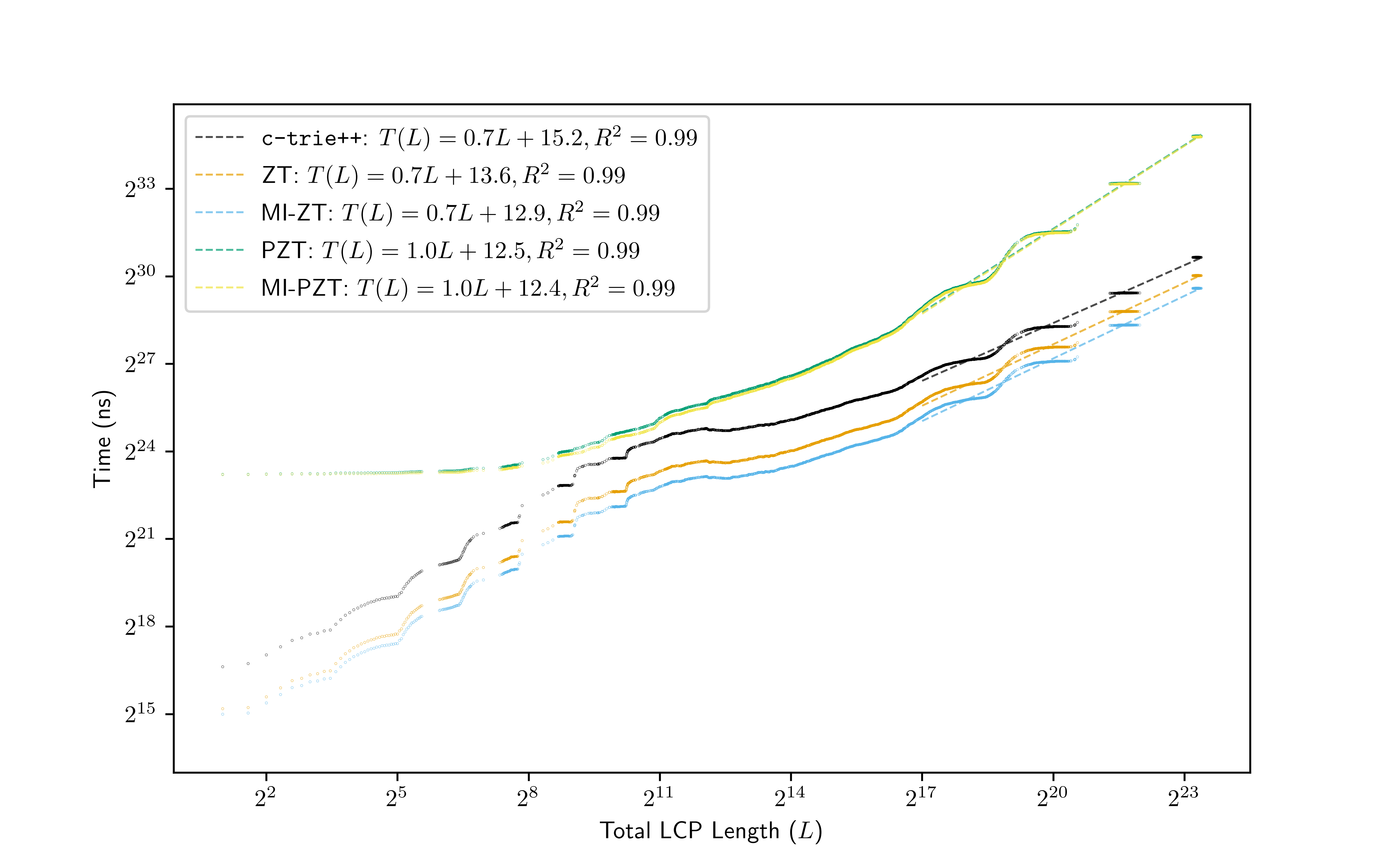}
	\caption{
		Performance comparison for the insertion operation
		of our data structures against \texttt{c-trie++}
		in terms of the total LCP length ($L$) of the inserted keys.
		While the parallel variants initially have the best slope, they are
		outperformed by the sequential algorithms.
		Our sequential algorithms perform competitively with \texttt{c-trie++},
		appearing to slightly outperform it.
	}
	\label{fig:construction-lcp-length}
\end{figure}

We include more detail on the experimental framework along with additional
results on how the performance scales with the number of keys $n$ in
an appendix.

\section{Future Work}

It could be interesting to see whether our techniques could be applied to
update operations in the B-skip-trie.
There is a gap between the $\mathcal{O}(B)$ work required for the string
B-tree's branching operation in the RAM model and the $\mathcal{O}(B
\log{B})$-work algorithm of the B-skip-trie we present for the PRAM model which
would be interesting to see if it could be closed, potentially requiring
different metadata.
While we tested the skip-trie on large strings, it would also be interesting to
see how it compares to other trie variants on different types of data.
It would also be interesting to see if the skip-trie could be adapted to allow
concurrent updates, particularly in a way that preserves the edge values, as was
done for the skip list~\cite{GABARRO19961,pughConcurrentMaintenanceSkip1998}
and other trie variants~\cite{prokopecConcurrentTriesEfficient2012}.
Finally common trie variants apply prefix-based compression techniques to
greatly reduce the memory usage.
Our techniques should apply to these compressed strings as well, although more
experimentation would be necessary.

\clearpage
\bibliography{paper}

\clearpage
\appendix

\section{Related Prior Work}

There are many trie variants that perform well sequentially, but because of
rigid branching structures they cannot take advantage of bit-parallel
operations, or cannot be easily parallelized.
For example, the lexicographic splay trees by Sleator and
Tarjan~\cite{sleatorSelfadjustingBinarySearch1985} achieve $\mathcal{O}(k +
\log{n})$ update/query time, but do not take advantage of bit-parallel
operations.
The exponential search trees by Andersson and
Thorup~\cite{DBLP:journals/jacm/AnderssonT07} achieve $\mathcal{O}(\ell +
\sqrt{\frac{\log{n}}{\log{\log{n}}}})$ time, but are not easily parallelized.
The dynamic z-fast tries by Belazzougui
et~al.~\cite{belazzouguiDynamicZfastTries2010} achieve $\mathcal{O}(k/\alpha +
\log{m})$ time, returning a correct result w.h.p., but are complex and not
easily parallelized.
Some more recent variants include the wexponential search trees of Fischer and
Gawrychowski~\cite{DBLP:conf/cpm/0001G15} which, when implemented dynamically,
achieve $\mathcal{O}(k + \frac{\log^2{\log{\sigma}}}{\log{\log{\log{\sigma}}}})$
time, and the packed compact tries of Takagi
et~al.~\cite{DBLP:journals/ieicet/TakagiISA17} which achieve
$\mathcal{O}(k/\alpha + \log{\log{n}})$ time.
These methods also are complex and not easily parallelized.

In contrast, skip
lists~\cite{pughSkipListsProbabilistic1990,munroDeterministicSkipLists1992} and
zip-trees~\cite{tarjanZipTrees2021a,gilaZipzipTreesMaking2023} are simple data
structures, but they are almost always defined for keys that are either
numerical or in contexts where key comparisons are fast.
Pugh, the original inventor of the skip list data structure, discussed the
problem of expensive comparisons, including for strings,
in~\cite{pughSkipListCookbook1990}, but his proposed solution does not yield any
asymptotic improvement.
Grossi and Italiano created a general framework for adapting linked data
structures, such as skip lists and binary search trees, to support
multidimensional keys such as strings~\cite{10.1007/3-540-48523-6_34}.
Their framework when applied to unbalanced binary search trees and to AVL trees
was shown to be competitive with the state of the art string dictionary data
structures at the time~\cite{10.1007/3-540-44867-5_7}.
Irving and Love showed how to simplify the framework specifically for the case
of suffix trees by removing the need for parent pointers~\cite{IRVING2003387}.
Both frameworks require exponentially more memory usage than is desired for
zip-trees when applied out-of-the-box.

Tries have been discussed in the context of multiple processors for several use
cases, such as for concurrent
maintenance~\cite{prokopecConcurrentTriesEfficient2012}, for parallel string
matching~\cite{galil*OptimalParallelAlgorithms1984,gasieniecWorktimeOptimalParallel1994},
and for the multistring search
problem~\cite{ferraginaStringSearchCoarseGrained1999}.
There is a more recent result by Jekovec and
Brodnik~\cite{jekovecParallelQuerySuffix2015} which achieves good results for a
static suffix tree data structure under the small PRAM model, where the amount
of processors $p \ll k$.

Several specialized string dictionary data structures that involve fewer
branching paths have been proposed for the external memory model, namely the
self-adjusting skip list (SASL) of Ciriani et
al.~\cite{cirianiDataStructureSequence2007}, and the string B-tree of Ferragina
and Grossi~\cite{ferraginaStringBtreeNew1999}.
While we show how to apply our parallelization techniques to the string B-tree
data structure to achieve great PRAM results, we note that these data structures
are far more complex than zip-tries and are largely impractical in
practice~\cite{ferraginaStringAlgorithmsData2008}, and are consequently not
competitive with the current state-of-the-art dynamic string dictionary data
structures~\cite{tsurutaCtrieDynamicTrie2022}.
There exists a cache-oblivious variant of the string B-tree by Bender et
al.~\cite{benderCacheobliviousStringBtrees2006}.
String B-trees have been implemented both
statically~\cite{fanImplementationEvaluationString2004,martinez-prietoPracticalCompressedString2016}
and
dynamically~\cite{ferraginaFastStringSearching1996,joo-youngEffectiveImplementationString2006,wuDistributedTrueString2006}.
There is also a more recent string matching data structure based on the B-tree,
called the DMP tree~\cite{yazdaniDMPtreeDynamicMway2010}, but without any
theoretical guarantees.
To the extent of our knowledge, no variant has been adapted for the PEM and PRAM
settings as we have done in this paper.

With respect to more memory-efficient practical implementations, there is a C++
library for implementing dynamic compressed string processing data structures
which may make such data structures more space efficient at the cost of
time~\cite{DBLP:conf/wea/Prezza17}.
The best previous sequential trie library,
\texttt{c-trie++}~\cite{tsurutaCtrieDynamicTrie2022}, is an optimized
implementation of the z-fast trie data structure.
We found that our zip-trie data structure, despite being significantly simpler,
outperforms the \texttt{c-trie++} in terms of trie construction and search
operations when strings are large.
Moreover, \texttt{c-trie++} does not provide predecessor/successor queries,
prefix search, or range queries.
Furthermore, like other complex trie variants, \texttt{c-trie++} is not easily
parallelizable.
\section{String Comparisons}

A naive sequential algorithm which compares each letter of each word
individually would take $\mathcal{O}(\ell)$ time to compare two strings sharing
a longest common prefix of length $\ell$.
In this section, we review several more efficient algorithms under both
sequential and parallel models of computation.

\subsection{Sequential String Comparison Algorithms} \label{sec:seq-string-comparisons}

Under either the word RAM or the practical RAM models, comparisons between two
strings sharing a common prefix of length $\ell$ can be done in
$\mathcal{O}(\ell/\alpha)$ time, where $\alpha$ is the number of characters that
fit into a single machine word.
We briefly review one such algorithm below.

As is well known (e.g., see~\cite{thorup-ac0}), for two packed strings, the
first position at which they differ can be trivially determined by the most
significant bit (MSB) of their bitwise exclusive or (XOR).
XOR is a standard machine instruction; MSB is sometimes referred to as
count-left-zero (CLZ), and is also available as a primitive in most modern
instruction sets.
For concreteness we assume that the MSB of the zero word is the length of the
word in bits.
Once the longest common prefix $\ell$ is determined, we can simply compare the
next character in the strings to determine which is lexicographically greater,
as is done in the \textsc{k-Compare} algorithm of \Cref{sec:paradigm}.
(See~\Cref{fig:text-cmp}).
Thus, by using these instructions, we obtain a factor of $\alpha$ improvement
over the naive algorithm, achieving $\mathcal{O}(\ell/\alpha)$ string comparison
time.
The proof of~\Cref{cor:seq-compare-string} directly follows.

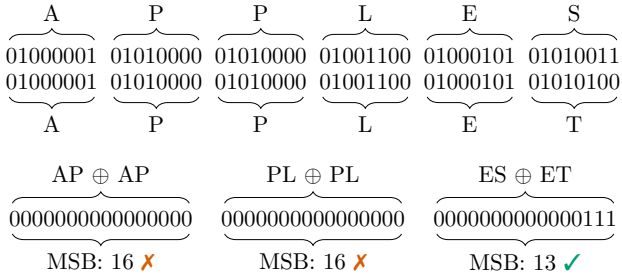
\begin{figure}[tb]
    \centering
    \resizebox{\linewidth}{!}{\begin{tikzpicture}
    \node (APPLESa)  at (-3, 0)    {01000001};
    \node (APPLESp1) at (-1.4, 0) {01010000};
    \node (APPLESp2) at (0.2, 0)    {01010000};
    \node (APPLESl)  at (1.8, 0)   {01001100};
    \node (APPLESe)  at (3.4, 0)     {01000101};
    \node (APPLESs)  at (5, 0)   {01010011};

    \node (APPLETa)  at (-3, -0.4)   {01000001};
    \node (APPLETp1) at (-1.4, -0.4) {01010000};
    \node (APPLETp2) at (0.2, -0.4)    {01010000};
    \node (APPLETl)  at (1.8, -0.4)  {01001100};
    \node (APPLETe)  at (3.4, -0.4)    {01000101};
    \node (APPLETt)  at (5, -0.4)  {01010100};

    \node (APxorAP) at (-2.25, -2.5) {0000000000000000};
    \node (PLxorPL) at (1, -2.5)  {0000000000000000};
    \node (ESxorET) at (4.25, -2.5)  {0000000000000111};

    \drawReducedWidthBrace{APPLESa}{above}{A}
    \drawReducedWidthBrace{APPLESp1}{above}{P}
    \drawReducedWidthBrace{APPLESp2}{above}{P}
    \drawReducedWidthBrace{APPLESl}{above}{L}
    \drawReducedWidthBrace{APPLESe}{above}{E}
    \drawReducedWidthBrace{APPLESs}{above}{S}

    \drawReducedWidthBrace{APPLETa}{below}{A}
    \drawReducedWidthBrace{APPLETp1}{below}{P}
    \drawReducedWidthBrace{APPLETp2}{below}{P}
    \drawReducedWidthBrace{APPLETl}{below}{L}
    \drawReducedWidthBrace{APPLETe}{below}{E}
    \drawReducedWidthBrace{APPLETt}{below}{T}

    \drawReducedWidthBrace{APxorAP}{above}{AP $\oplus$ AP}
    \drawReducedWidthBrace{PLxorPL}{above}{PL $\oplus$ PL}
    \drawReducedWidthBrace{ESxorET}{above}{ES $\oplus$ ET}

    \drawReducedWidthBrace{APxorAP}{below}{MSB: 16 \xmark}
    \drawReducedWidthBrace{PLxorPL}{below}{MSB: 16 \xmark}
    \drawReducedWidthBrace{ESxorET}{below}{MSB: 13 \cmark}
\end{tikzpicture}}
    \caption{\label{fig:text-cmp}Using bitwise XOR and MSB to identify the
    length of the longest common prefix between APPLES and APPLET represented in
    ASCII as it would run on a machine with word size 16. These operations
    revealed that the MSB of the XOR differed at the 13th bit of the 3rd machine
    word, corresponding the 6th character. Therefore, the LCP has length 5.
    }
\end{figure}

\subsection{String Comparisons in PRAM} \label{sec:pram-string-comparisons}

Under the practical PRAM model we assume that we can perform common operations
on each machine word in constant span and work.
With these assumptions, computing the LCP operation on two strings that each fit
into $M$ machine words can be reduced to three simple steps:
\begin{enumerate}
	\item Compute the bitwise XOR of the two strings. \label{step:1-non-zero}
	
	\item Determine which is the first non-zero word, also referred to as the
	most significant word (MSW). \label{step:2-msw}

	\item Determine the MSB of the most significant word. \label{step:3-msb}
\end{enumerate}

\crefname{enumi}{step}{steps}
\Crefname{enumi}{Step}{Steps}

\Cref{step:1-non-zero} is embarrassingly parallel and under the practical PRAM
model can be done in constant span and in $\mathcal{O}(M)$ work.
\Cref{step:3-msb} can also trivially be done in constant span and work.
Fich, Ragde, and Wigderson show that \Cref{step:2-msw} can be done using only
$\sqrt{M}$ cells of memory in constant span and $\mathcal{O}(M)$ work in the
common CRCW PRAM model, where processors can only concurrently write to the same
memory location if they are writing the same value~\cite{doi:10.1137/0217037}.

Restated briefly as it applies to our problem, it is trivial to compute the MSW
of $M$ machine words in constant span and $\mathcal{O}(M^2)$ work by checking
each of the $\binom{M}{2}$ possible pairs of words simultaneously, and setting
the second machine word to 0 if the first is non-zero.
Consequently, we can compute the MSW of $\sqrt{M}$ machine words in constant
span and $\mathcal{O}(M)$ work.
As shown in \Cref{fig:cmp-sqrt}, we can first reduce the problem to one of size
$\sqrt{M}$ by grouping each $\sqrt{M}$ consecutive machine words together into
one representative machine word that is set to 1 if any of the corresponding
words are non-zero (and 0 otherwise).
Then, we can compute the MSW of the reduced problem of size $\sqrt{M}$.
This reduced problem will point us to the correct block of size $\sqrt{M}$
machine words in the original problem, which we can compute with a second MSW
operation.
Since each of these steps can be done in constant span and linear work, so can
the entire operation.

\begin{figure}[ht!]
    \centering
    \resizebox{0.8\linewidth}{!}{\begin{tikzpicture}[
	font=\LARGE,
]
	\draw[color=okabe1] (0,0) rectangle (16,1);

	\draw[color=okabe1,|-|] (0,-0.4) -- (16,-0.4) node[midway,below=0.25] {$M$};

	\draw[color=okabe1] (4,0) -- (4,1);
	\draw[color=okabe1] (8,0) -- (8,1);
	\draw[color=okabe1] (12,0) -- (12,1);

	\node at (2,0.5) {$\sqrt{M}$};
	\node at (6,0.5) {$\sqrt{M}$};
	\node at (10,0.5) {$\cdots$};
	\node at (14,0.5) {$\sqrt{M}$};

	\draw[color=okabe1] (6,2) rectangle (10,3);

	\node at (8,2.5) {$\sqrt{M}$};

	\draw[color=okabe1] (2,1) -- (6.5,2);
	\draw[color=okabe1] (6,1) -- (7.5,2);
	\draw[color=okabe1] (10,1) -- (8.5,2);
	\draw[color=okabe1] (14,1) -- (9.5,2);
\end{tikzpicture}}
    \caption{\label{fig:cmp-sqrt}Reducing a MSW problem of size $M$ to one of
    size $\sqrt{M}$, where each machine word in the reduced problem corresponds
    to $\sqrt{M}$ machine words in the original problem.}
\end{figure}
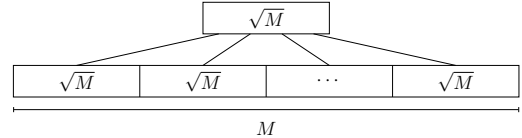

These results directly lead to the following lemma.

\begin{lemma}\label{lem:parallel-string-compare}
    We can compute the MSB of a
    binary string stored in in $M$ machine words in $\mathcal{O}(1)$ span and
    $\mathcal{O}(M)$ work under the common practical CRCW
    PRAM model.
    Consequently, two $k$-length strings can be compared in $\mathcal{O}(1)$
    span and $\mathcal{O}(k/\alpha)$ work.
\end{lemma}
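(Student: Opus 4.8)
The plan is to assemble the three-step decomposition already sketched above --- bitwise XOR, most-significant-word (MSW), then MSB within that word --- and to check that each step runs in $\mathcal{O}(1)$ span and $\mathcal{O}(M)$ work, so that their composition does as well. The XOR of the two $M$-word operands is embarrassingly parallel ($\mathcal{O}(1)$ span, $\mathcal{O}(M)$ work), and the final MSB extraction is a single machine-word operation, assumed constant in the practical PRAM model; so the only step that requires real argument is the MSW computation.

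I would first establish the base case: the MSW of $t$ machine words is computable in $\mathcal{O}(1)$ span and $\mathcal{O}(t^2)$ work. Assign a processor to each ordered pair $(i,j)$ with $i < j$; if word $i$ is non-zero, that processor writes $1$ into a flag $\mathrm{killed}_j$. Every writer to a given flag writes the same value, so the \emph{common} CRCW model suffices. The MSW is then the unique index $j$ whose word is non-zero with $\mathrm{killed}_j = 0$ (or ``none'' if all words are zero), and that one surviving processor writes $j$ to the output; total $\mathcal{O}(1)$ span and $\mathcal{O}(t^2)$ work. Next I would run the $\sqrt{M}$-reduction of \Cref{fig:cmp-sqrt}: partition the $M$ words into $\sqrt{M}$ blocks of $\sqrt{M}$ consecutive words and, in parallel, compute one representative per block equal to $1$ iff some word in the block is non-zero --- done in $\mathcal{O}(1)$ span by having each processor holding a non-zero word write $1$ to its block's cell, costing $\mathcal{O}(M)$ work overall. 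A base-case MSW on the $\sqrt{M}$ representatives ($\mathcal{O}(M)$ work) identifies the block containing the first non-zero word, and a second base-case MSW on the $\sqrt{M}$ words of that block ($\mathcal{O}(M)$ work) identifies the word itself. The key point is that the recursion bottoms out after this single reduction, so the span stays $\mathcal{O}(1)$: the procedure is a fixed number of constant-span phases. Adding the MSB-within-the-word step gives $\mathcal{O}(1)$ span and $\mathcal{O}(M)$ work for computing the MSB of a binary string stored in $M$ machine words.

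For the ``consequently'' clause, to compare two $k$-length strings I would pack each into $M = \lceil k/\alpha \rceil$ machine words, XOR them, and compute the MSB of the result; if the XOR is entirely zero the strings are equal, and otherwise the MSB pinpoints the first differing character, whose two values are compared with a single \textsc{Compare}. Each sub-step costs $\mathcal{O}(1)$ span and $\mathcal{O}(k/\alpha)$ work, yielding the stated bound. The main obstacle --- or rather the point that must be argued carefully --- is precisely that the MSW procedure is a single level of reduction rather than a $\log$-depth recursion, and that every concurrent write used (block OR-detection, pairwise killing, the survivor writing its index) is legal in the \emph{common} CRCW model; both hold because each such write deposits either the fixed value $1$ or a value written by a unique processor.
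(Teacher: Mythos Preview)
Your proposal is correct and follows essentially the same approach as the paper: the identical three-step decomposition (XOR, MSW via the $\sqrt{M}$-reduction of Fich--Ragde--Wigderson, then per-word MSB), with the same $\mathcal{O}(t^2)$-work base case applied twice on $\sqrt{M}$-sized subproblems. Your added remarks on why each concurrent write is legal in the \emph{common} CRCW model are a welcome elaboration but do not depart from the paper's argument.
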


\subsection{String Comparisons in PEM} \label{sec:pem-string-comparisons}

In this section we take advantage of how strings are stored in contiguous memory
to prove optimal bounds in the parallel external memory (PEM) model where
processors can concurrently transfer blocks of size $B$ to and from main memory.

Similar to before, we consider each machine block as significant if it contains
any significant (non-zero) word.
The most significant block can be found in a similar fashion as the MSW from
\Cref{step:2-msw}.
Since there are only $M/B$ blocks, we can compute the most significant block in
constant span and $\mathcal{O}(M/B)$ work.
Finally, we can trivially determine the MSB of the most significant block using
only one I/O.
We obtain a similar result as before.

\begin{lemma} \label{lem:pem-string-compare} We can compute the MSB of a binary
	string stored in $M$ machine words in $\mathcal{O}(1)$ I/O span and
	$\mathcal{O}(M/B)$ I/O work under the common practical CRCW PEM model.
	Consequently, two $k$-length strings can be compared in $\mathcal{O}(1)$
	I/O span and $\mathcal{O}(\frac{k}{\alpha B})$ I/O work.
\end{lemma}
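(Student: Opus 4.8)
The plan is to lift the PRAM argument of \Cref{sec:pram-string-comparisons} from the word level to the block level, so that each unit of I/O work corresponds to a single block transfer. First I would reduce the problem of finding the MSB of an $M$-word string to that of finding its most significant \emph{block}: have one processor per block read it (one I/O each) and write a single representative bit that is $1$ iff the block contains a non-zero word. These writes land in distinct cells, so the common-write discipline is respected trivially, and the step costs $\mathcal{O}(1)$ I/O span and exactly $\mathcal{O}(M/B)$ I/O work, since each of the $\lceil M/B\rceil$ blocks is touched once.

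Next I would locate the most significant block among the $N=\lceil M/B\rceil$ representative bits using the same $\sqrt{\,\cdot\,}$-reduction that underlies the MSW routine of Fich, Ragde, and Wigderson recalled in \Cref{sec:pram-string-comparisons}: naively testing all $\binom{N}{2}$ pairs of blocks for relative significance (zeroing the less significant member of each pair) runs in $\mathcal{O}(1)$ span but $\mathcal{O}(N^2)$ I/O work, whereas grouping $\sqrt{N}$ blocks at a time and recursing once brings this down to $\mathcal{O}(N)=\mathcal{O}(M/B)$ I/O work while keeping span $\mathcal{O}(1)$. Once the index of the most significant block is known, a single I/O brings that block into local memory, and within its $B$ words we can pick out the first non-zero word and then its MSB using only the constant-time word operations the practical model permits, with no further I/O. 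Summing the three phases gives $\mathcal{O}(1)$ I/O span and $\mathcal{O}(M/B)$ I/O work, which is the first claim.

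For the ``Consequently'' part I would store each $k$-length string in $M=\lceil k/\alpha\rceil$ words, hence $\mathcal{O}(k/(\alpha B))$ blocks, XOR the two strings block-by-block (embarrassingly parallel: $\mathcal{O}(1)$ span, $\mathcal{O}(k/(\alpha B))$ I/O work), apply the MSB routine just described to the XOR result to recover the LCP length, and finish with the single extra character comparison performed inside \textsc{k-Compare}; the totals are $\mathcal{O}(1)$ I/O span and $\mathcal{O}(k/(\alpha B))$ I/O work. The only delicate point, and hence the step I expect to need the most care, is verifying that the block-level search for the most significant block simultaneously achieves $\mathcal{O}(M/B)$ I/O work and $\mathcal{O}(1)$ span; this is precisely what the $\sqrt{\,\cdot\,}$ reduction buys, so once the correspondence with the PRAM MSW construction is made explicit the lemma follows.
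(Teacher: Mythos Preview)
Your proposal is correct and follows essentially the same approach as the paper: reduce to finding the most significant block among the $M/B$ blocks via the same $\sqrt{\,\cdot\,}$-style MSW routine from \Cref{sec:pram-string-comparisons}, then spend one I/O on that block to extract the MSB. The paper's proof is terser but identical in structure, and your treatment of the ``Consequently'' clause via block-wise XOR is exactly what the paper leaves implicit.
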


\section{Supplemental Experiments} \label{sec:more-experiments}

In this section we include some more details regarding our experimental
framework and include results for how the data structures scale with the number
of keys $n$.

All data structures were implemented in C++ and using CUDA and compiled with the same
optimizations.
Our experiments were run on a machine with an Intel Core i7-8750H CPU and a 1280-core
NVIDIA GTX 1060 Mobile GPU.
The \texttt{c-trie++} library converts each input string into a custom
\texttt{LongString} type.
This conversion along with some preprocessing 
was done before the start of each experiment.
The parallel implementations perform LCP length calculations fully in parallel,
meaning that they invoke the GPU even for very small inputs.
In practice, only invoking the GPU after some threshold input size may result in
significantly more practical implementations.

As stated in the main text, we are primarily interested in the case where the
length of the LCP ($\ell$) between strings is much larger than the total number
of strings ($n$), specifically where $\ell \gg \log{n}$.
This was primarily the case for prefix search operations sharing larger values
of $\ell$, as shown in \Cref{fig:search-lcp-length}.
We justify this claim in \Cref{fig:search-num-keys}, which, when compared
with \Cref{fig:search-lcp-length}, shows that the time taken to perform search operations
was much more correlated with the LCP length of the strings than with the number
of strings in the data structure.
This was not the case for insertion operations for two key reasons.
First, we were inserting $n$ strings so we would regardless expect to see at
least a linear dependence on $n$.
Second, insertion operations specifically involve strings that are not already
in the data structure, and DNA strings in our dataset generally do not share
very long LCPs with each other.
Here we can start to see the advantage of \texttt{c-trie++}, which has a
theoretically better dependence on $n$ than our trie variants and consequently
we observe a better slope.
While both our sequential variants still outperform \texttt{c-trie++} over this
moderately sized dataset of around 18,000 keys, they will likely be overtaken
for significantly larger inputs.
See \Cref{fig:construction-num-keys}.



\begin{figure}[ht!]
	\centering
	\includegraphics[width=\linewidth,trim={1cm 0 1cm 1cm}, clip]{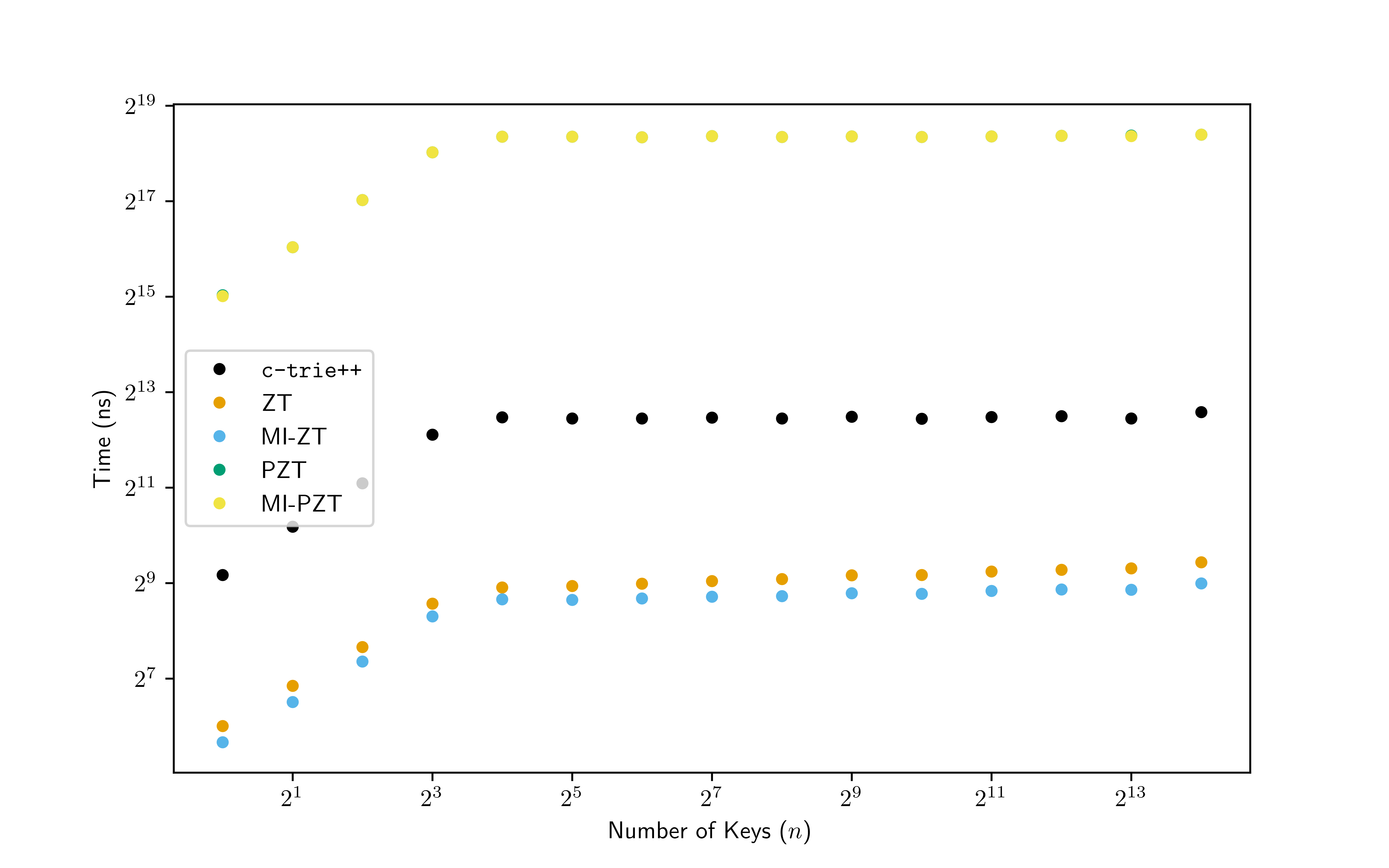}
	\caption{
		Performance comparison for the prefix search operation
		of our data structures against \texttt{c-trie++}
		in terms of the number of keys ($n$) in the data structure.
		The apparent lack of correlation between the number of keys and the time
		taken for search operations confirms that, for the high-dimensional data we
		tested against, the number of keys $n$ was insignificant.
	}
	\label{fig:search-num-keys}
\end{figure}

\begin{figure}[ht!]
	\centering
	\includegraphics[width=\linewidth,trim={1cm 0 1cm 1cm}, clip]{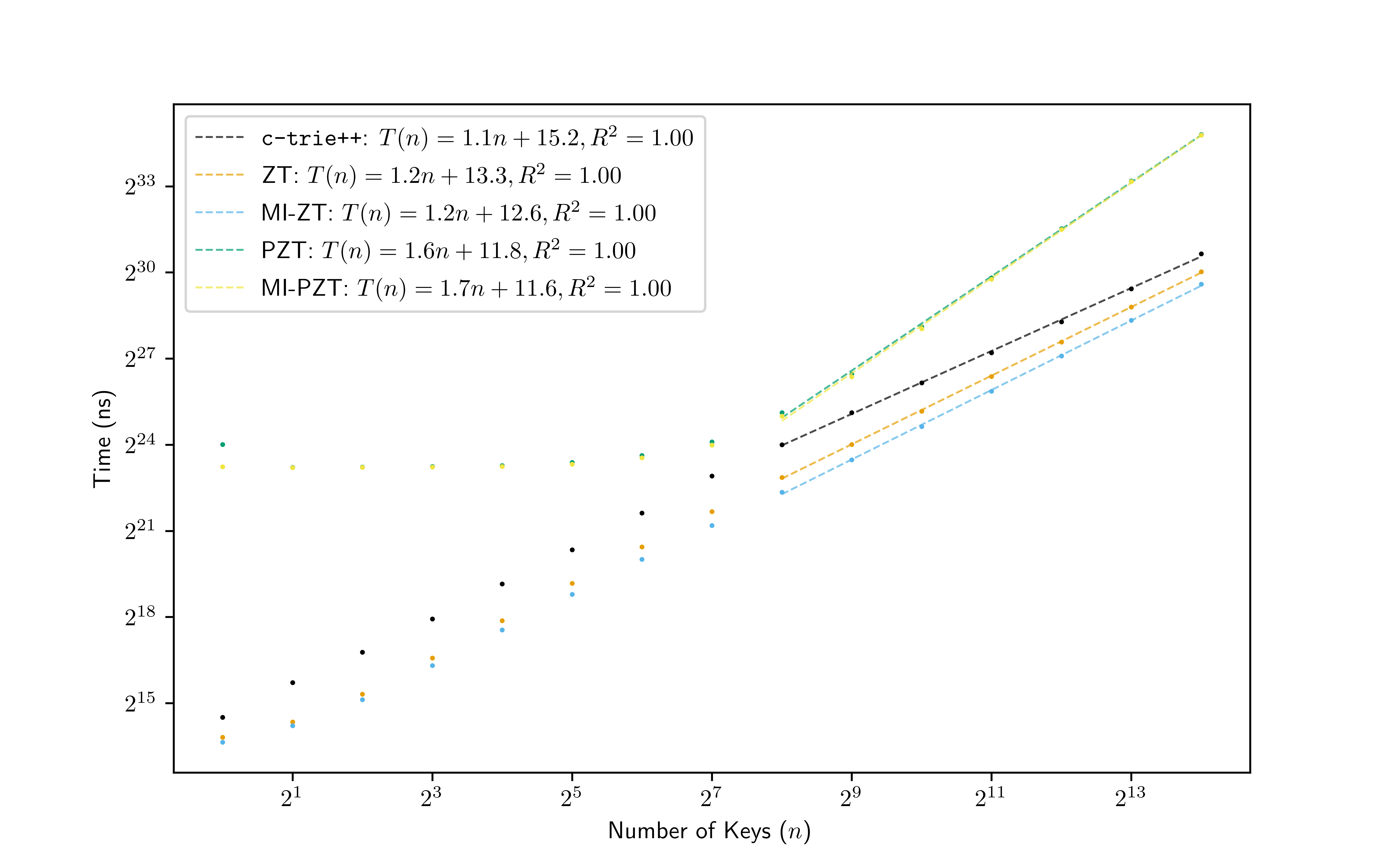}
	\caption{
		Performance comparison for the insertion operation
		of our data structures against \texttt{c-trie++}
		in terms of the number of keys ($n$) in the data structure.
		While the parallel variants initially have the best slope, they are
		outperformed by the sequential algorithms.
		Our sequential algorithms perform competitively with \texttt{c-trie++},
		although \texttt{c-trie++} appears to have the better slope, reflecting
		its better dependence on $n$.
	}
	\label{fig:construction-num-keys}
\end{figure}
\section{Omitted Parallel String B-Tree Results}

In this appendix, we provide proofs for the results of the parallel string
B-tree algorithms that were omitted from the main text.

\begin{theorem}[Same as \Cref{thm:string-b-tree-pram}]
	By setting $B = \log{n}$, a parallel string B-tree can perform prefix search
	in $\mathcal{O}(\frac{\log{n}}{\log{\log{n}}})$ span and
	$\mathcal{O}(\frac{k}{\alpha} + \log^2{n})$ work in the practical CRCW PRAM
	model.
	Operations that return $m$ keys can be done in the same span and in
	$\mathcal{O}(m)$ additional work.
\end{theorem}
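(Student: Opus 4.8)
The plan is to assemble the theorem from pieces already established in the excerpt, namely the branching-algorithm bounds of \Cref{lem:branching}, the parallelized \textsc{k-Compare} framework of \Cref{sec:parallel-zt}, and the string-B-tree fact that a root-to-leaf search visits $A(n)=\Theta(\log_B n)$ nodes. First I would fix $B=\log n$ and observe that with this choice $\log_B n = \Theta(\log n/\log\log n)$, which will be the target span. At each of the $O(\log_B n)$ nodes along the search path we execute one invocation of the branching algorithm; by \Cref{lem:branching} each such invocation costs $O(1)$ span and $O(B\log B)$ work, \emph{excluding} the single \textsc{k-Compare} call it makes. Summed over the path, the non-comparison part of all branching operations therefore costs $O(\log_B n)=O(\log n/\log\log n)$ span and $O(\log_B n \cdot B\log B) = O(\log n \cdot \log\log n / \log\log n \cdot \log\log n)$... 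I would be careful here: $\log_B n \cdot B \log B = \frac{\log n}{\log\log n}\cdot \log n \cdot \log\log\log n$, which is slightly more than $\log^2 n$; to land exactly at $O(\log^2 n)$ work I would instead use the cruder bound $B\log B = O(\log n \log\log n)$ only when needed, but note that the branching-tree step is actually $O(B\log B)$ with $B=\log n$ giving $O(\log n\log\log n)$ per node and $O(\log_B n)$ nodes, i.e. $O(\frac{\log n}{\log\log n}\cdot\log n\log\log n)=O(\log^2 n)$. Good — that telescopes cleanly into the stated $O(\log^2 n)$ work term.

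Next I would account for the comparison work. The $A(n)=\Theta(\log_B n)=\Theta(\log n/\log\log n)$ invocations of \textsc{k-Compare} across the whole search are handled \emph{collectively} by \Cref{theorem:parallel-zt-k} (applied with items being packed machine words, $\alpha$ characters each, so the key has $k/\alpha$ items), which guarantees $O(A(n))$ span and $O(k/\alpha + A(n))$ work for all comparison operations combined. Since $A(n)=O(\log n/\log\log n)$, this contributes $O(\log n/\log\log n)$ to the span and $O(k/\alpha + \log n/\log\log n)$ to the work; the latter is absorbed into the $O(k/\alpha + \log^2 n)$ bound. Adding the two contributions: span is $O(\log_B n) + O(A(n)) = O(\log n/\log\log n)$, and work is $O(\log^2 n) + O(k/\alpha) = O(k/\alpha + \log^2 n)$, as claimed. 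I would make sure to invoke the practical-PRAM assumption (operations on machine words in $O(1)$ time) so that the packing of $\alpha$ characters per item is legitimate, and cite \Cref{lem:parallel-string-compare} / \Cref{sec:pram-string-comparisons} as the source of the \textsc{Parallel-LCP} oracle underlying \Cref{theorem:parallel-zt-k}.

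For the ``$m$ additional keys'' clause I would argue that once the search reaches the appropriate leaf, reporting $m$ answers (prefix search, range query) amounts to scanning $O(m/B)$ consecutive leaves, each of size $O(B)$; reading and outputting them takes $O(m)$ work and, since the leaves form a contiguous linked structure that can be traversed with all processors in parallel once the endpoints are known, $O(\log_B n)$ additional span suffices, which does not change the asymptotic span bound. (If one wants the pointer-chasing along the leaf level to also be within span, note there are only $O(m/B)$ leaves to visit and this is dominated by the already-incurred $O(\log_B n)$ term in typical regimes; alternatively the leaves can be indexed so the $i$-th is reached in $O(1)$.)

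The main obstacle I anticipate is purely bookkeeping rather than conceptual: getting the work arithmetic to come out exactly as $O(\log^2 n)$ rather than an extra $\log\log\log n$ factor, and making sure the telescoping guarantee of \textsc{k-Compare} is not broken by interleaving it with the branching algorithm's own uses of the LCP metadata. The key subtlety is that across the $O(\log_B n)$ branching steps, the successive \textsc{k-Compare} calls must still form the single telescoping sum that \Cref{theorem:parallel-zt-k} requires — i.e. each call's $\ell_{x,\pi^\pm}$ input is the LCP output of the previous call — which holds because the string B-tree, like the bookend paradigm in \Cref{sec:paradigm}, maintains exactly the predecessor/successor LCP metadata needed to chain the comparisons. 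I would state this dependency explicitly and then the rest is direct substitution.
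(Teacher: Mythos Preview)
Your argument for the prefix-search bounds is essentially the paper's own proof: height $O(\log_B n)=O(\log n/\log\log n)$, branching cost $O(B\log B)=O(\log n\log\log n)$ per node via \Cref{lem:branching}, and a single application of \Cref{theorem:parallel-zt-k} for the comparisons. (Your momentary $\log\log\log n$ slip is just that --- $\log B=\log\log n$, so $B\log B=\log n\log\log n$ directly, and your ``cruder bound'' is in fact the exact one.) Your remark about the telescoping condition for \textsc{k-Compare} across branching steps is correct and worth stating, though the paper leaves it implicit.

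The gap is in the reporting clause. Your claim that ``$O(\log_B n)$ additional span suffices'' because the leaves form a contiguous linked structure does not hold: pointer-chasing through $O(m/B)$ linked leaves costs $\Theta(m/B)$ span, which can be arbitrarily larger than $\log_B n$. You acknowledge this and offer two fallbacks, but ``dominated in typical regimes'' is not a bound, and assuming $O(1)$ indexed access to the $i$-th leaf adds structure the string B-tree is not stated to have. The paper's fix is different and clean: after the two boundary searches it spawns $O(m/\log n)=O(m/B)$ processors, each of which independently navigates \emph{top-down} from the root to its assigned leaf in $O(\log_B n)$ span and work; the $m$ output pointers are then read concurrently. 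This keeps the span at $O(\log_B n)$ without any extra indexing assumption. You should replace your leaf-scan argument with this top-down spawning argument (or add an explicit leaf-array assumption and justify it).
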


\begin{proof}
	Each node in the string B-tree has $\mathcal{O}(B)$ children, resulting in
	the height of the tree being $\mathcal{O}(\log_B{n})$.
	When $B = \log{n}$, the height of the tree, or the maximum number of nodes
	visited or comparisons made during a search, $A(n)$,
	is $\mathcal{O}(\log_{\log{n}}{n}) = \mathcal{O}(\frac{\log{n}}{\log{\log{n}}})$.
	Applying the results from \Cref{theorem:parallel-zt-k}, we obtain that the
	string B-tree can spend $\mathcal{O}(\frac{\log{n}}{\log{\log{n}}})$ span
	and $\mathcal{O}(k + \frac{\log{n}}{\log{\log{n}}})$ work on comparisons
	throughout the search.
	In addition, $\mathcal{O}(\frac{\log{n}}{\log{\log{n}}})$ nodes are visited,
	where, due to our branching algorithm results from \Cref{lem:branching}, we
	spend constant span and $\mathcal{O}(B \log{B}) = \mathcal{O}(\log{n}
	\log{\log{n}})$ work per node.
	Overall, we incur an additional $\mathcal{O}(\frac{\log{n}}{\log{\log{n}}})$
	span and $\mathcal{O}(\log^2{n})$ work due to node traversal and the
	branching algorithm.
	Combining these results along with packing $\alpha$ characters per word in
	the word RAM model, we obtain the desired result.

	Range queries can be done by first performing a prefix search on the
	boundary keys of the range, $u$ and $v$, keeping track of each key visited
	for each node in the search.
	After the search, we can determine the value $m$ and allocate an array
	sufficient to store $m$ pointers.
	We then spawn $\mathcal{O}(m/\log{n})$ processors which independently
	traverse down to corresponding leaf nodes between $u$ and $v$ in
	$\mathcal{O}(\frac{\log{n}}{\log{\log{n}}})$ span and work each.
	Finally, the $m$ threads can concurrently read the pointers of the $m$ keys
	between $u$ and $v$ and store the results in the corresponding array
	indices.
\end{proof}

\begin{corollary}
	Let $\ell$ be the length of the longest common prefix between a key $x$ and
	the stored keys in a parallel string B-tree $T$.
	By setting $B = \log{n}$, $T$ can perform prefix search
	in $\log{\ell} + \mathcal{O}(\frac{\log{n}}{\log{\log{n}}})$ span and
	$\mathcal{O}(\frac{\ell}{\alpha} + \log^2{n})$ work in the practical CRCW PRAM
	model.
	Operations that return $m$ keys can be done in the same span and in
	$\mathcal{O}(m)$ additional work.
\end{corollary}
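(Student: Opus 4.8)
The plan is to re-run the proof of \Cref{thm:string-b-tree-pram} almost verbatim, changing only the ingredient that governs the $\ell$-dependence: in place of the work-optimal parallel \textsc{k-Compare} of \Cref{theorem:parallel-zt-k}, which spends $\mathcal{O}(k/\alpha)$ work no matter how short a prefix $x$ shares with the stored keys, we invoke the LCP-aware parallel \textsc{k-Compare} of \Cref{theorem:parallel-zt-l}, instantiating its \textsc{Parallel-LCP} oracle with the word-packed string routine of \Cref{lem:parallel-string-compare}. With $B = \log n$ the string B-tree has height $\mathcal{O}(\log_{\log n} n) = \mathcal{O}(\frac{\log n}{\log\log n})$, so the number $A(n)$ of nodes visited on a root-to-leaf search path equals the number of \textsc{k-Compare} calls (the branching algorithm performs exactly one comparison per node), and $A(n) = \Theta(\frac{\log n}{\log\log n})$.

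First I would collect the per-node costs that are independent of $\ell$. At each visited node, Steps~1 and~3 of the branching algorithm together take $\mathcal{O}(1)$ span and $\mathcal{O}(B\log B) = \mathcal{O}(\log n\,\log\log n)$ work by \Cref{lem:branching}, and following the selected child pointer is $\mathcal{O}(1)$. Summing over the $A(n) = \mathcal{O}(\frac{\log n}{\log\log n})$ visited nodes gives $\mathcal{O}(\frac{\log n}{\log\log n})$ span and $\mathcal{O}(\log^2 n)$ work, exactly as in \Cref{thm:string-b-tree-pram}.

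Next I would bound the cost of the $A(n)$ parallel \textsc{k-Compare} calls (Step~2 of branching). As in a zip-trie, these calls constitute a single bookend search along the descent path: each node inherits the predecessor/successor LCP metadata ($\pi^\pm$ and $\ell_{x,\pi^\pm}$) that \textsc{k-Compare} needs, so each expensive multi-dimensional LCP computation resumes from where the previous one stopped and the computations telescope. Hence \Cref{theorem:parallel-zt-l} applies with this $A(n)$, yielding $\log\ell + \mathcal{O}(A(n))$ span (the $\log\ell$ being the single ``ramp-up'' of the doubling procedure, using $\log\frac{\ell}{\alpha}\le\log\ell$) and $\mathcal{O}(\frac{\ell}{\alpha} + A(n))$ work on comparisons after packing $\alpha$ characters per PRAM word. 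Combining with the traversal and branching totals from the previous step gives the claimed $\log\ell + \mathcal{O}(\frac{\log n}{\log\log n})$ span and $\mathcal{O}(\frac{\ell}{\alpha} + \log^2 n)$ work, since $A(n) = \mathcal{O}(\frac{\log n}{\log\log n})$ is absorbed into $\log^2 n$. For range and prefix queries returning $m$ keys I would copy the argument from the proof of \Cref{thm:string-b-tree-pram}: perform the search on the two range endpoints, allocate an array of $m$ pointers, spawn $\mathcal{O}(m/\log n)$ processors that independently descend to the relevant leaves in $\mathcal{O}(\frac{\log n}{\log\log n})$ span and work each, and finally have $m$ threads copy out the key pointers in $\mathcal{O}(m)$ additional work and no additional span.

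The main obstacle is not any single estimate but justifying the telescoping across B-tree node boundaries: the amortization in \Cref{theorem:parallel-zt-l} needs the doubling state of \textsc{Parallel-LCP} to persist between consecutive \textsc{k-Compare} calls, so that the $\log\ell$ ramp-up is paid once over the whole search rather than once per node; were it reset at every node we would incur $\mathcal{O}(\log\ell)$ span per node, i.e.\ $\mathcal{O}(\frac{\log n}{\log\log n}\log\ell)$ span overall, which is too large. So the single thing requiring verification is that both the bookend $(\pi^\pm,\ell_{x,\pi^\pm})$ metadata and the doubling counter are carried down the search path from one node to the next without reset, which is precisely what the string B-tree's \textsc{k-Compare}-based branching already maintains; everything else is routine bookkeeping layered on \Cref{thm:string-b-tree-pram}.
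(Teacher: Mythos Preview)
Your proposal is correct and follows essentially the same approach as the paper: the paper's proof is the one-line remark that the corollary ``directly follows from the previous proof using the LCP-aware parallel \textsc{k-Compare} procedure from \Cref{theorem:parallel-zt-l},'' and you have simply unpacked that remark in detail. Your final-paragraph concern about carrying the doubling state across node boundaries is already absorbed into the statement of \Cref{theorem:parallel-zt-l}, which is phrased for an arbitrary data structure invoking \textsc{k-Compare} $A(n)$ times during one operation, so no extra verification beyond what you note is needed.
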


The proof of this corollary directly follows from the previous proof using the
LCP-aware parallel \textsc{k-Compare} procedure from
\Cref{theorem:parallel-zt-l}.

\begin{theorem}[Same as \Cref{thm:string-b-tree-pem}]
	A parallel string B-tree can perform prefix search in
	$\mathcal{O}(\log_B{n})$ I/O span and $\mathcal{O}(\frac{k}{\alpha B} + \log_B{n})$
	I/O work in the practical CRCW PEM model.
	Operations returning $m$ keys can be done in the same span and in
	$\mathcal{O}(\frac{m}{B})$ additional work.
\end{theorem}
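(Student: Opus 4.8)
The plan is to follow the same structure as the proof of \Cref{thm:string-b-tree-pram}, replacing the practical-PRAM primitives by their practical-PEM counterparts from \Cref{lem:pem-string-compare} and \Cref{sec:pem-string-comparisons}, and exploiting the fact that every string B-tree node has size $\mathcal{O}(B)$ and therefore occupies $\mathcal{O}(1)$ contiguous blocks. Because each node branches to $\Theta(B)$ children, the tree has height $\mathcal{O}(\log_B n)$, so a prefix search visits $A(n) = \Theta(\log_B n)$ nodes and invokes \textsc{k-Compare} $\Theta(\log_B n)$ times.

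First I would handle the per-node branching cost. Loading a node into a processor's cache costs $\mathcal{O}(1)$ I/Os, after which the three branching substeps of \Cref{lem:branching}---the candidate test, the range-tree elimination, and the final LSW---operate only on words already resident in cache, since the arrays $k$, $\ell$, $c$, and $n$ all live inside the $\mathcal{O}(1)$ blocks of the node. Under the practical CRCW PEM model these in-cache word operations incur no further I/Os, so the branching algorithm (other than the call to \textsc{k-Compare}) costs $\mathcal{O}(1)$ I/O span and $\mathcal{O}(1)$ I/O work per node, hence $\mathcal{O}(\log_B n)$ I/O span and $\mathcal{O}(\log_B n)$ I/O work over the whole search; the computational $\mathcal{O}(B\log B)$-work of \Cref{lem:branching} simply disappears from the I/O accounting.

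Next I would adapt the parallel \textsc{k-Compare} analysis of \Cref{theorem:parallel-zt-k} to PEM. By \Cref{lem:pem-string-compare}, an LCP computation on a chunk of $t$ characters costs $\mathcal{O}(1)$ I/O span and $\mathcal{O}(t/(\alpha B))$ I/O work, which is $\mathcal{O}(1)$ I/O when $t \le \alpha B$. Feeding \textsc{Parallel-LCP} chunks of $\Theta(k/A(n))$ characters at a time, the same dichotomy as in \Cref{theorem:parallel-zt-k} applies: a ``prefixes differ'' outcome happens at most $\mathcal{O}(A(n))$ times (once per comparison) and a ``prefixes agree'' outcome consumes a fresh block of $\Theta(k/A(n))$ characters of $x$ that is never re-examined, so it happens at most $\mathcal{O}(A(n))$ times as well. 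Each outcome costs $\mathcal{O}(1)$ I/O span and $\mathcal{O}(k/(\alpha B A(n)) + 1)$ I/O work, giving $\mathcal{O}(A(n)) = \mathcal{O}(\log_B n)$ I/O span and $\mathcal{O}(k/(\alpha B) + \log_B n)$ I/O work for all comparisons combined. Adding the branching cost yields the claimed $\mathcal{O}(\log_B n)$ I/O span and $\mathcal{O}(k/(\alpha B) + \log_B n)$ I/O work. For operations returning $m$ keys I would mirror the range-query argument: after the prefix search fixes the two boundary leaves and the count $m$, spawn $\mathcal{O}(m/B)$ processors that descend to the $\mathcal{O}(m/B)$ consecutive leaf blocks spanning the answer in $\mathcal{O}(\log_B n)$ I/O span each and then read out the $m$ key pointers, for $\mathcal{O}(m/B)$ additional I/O work at no extra span.

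The step I expect to be the main obstacle is making the chunked \textsc{k-Compare} telescoping argument airtight at the granularity of blocks rather than words: in particular, checking that when a chunk of $\Theta(k/A(n))$ characters is smaller than one block the $\mathcal{O}(1)$-I/O-per-invocation term still sums to only $\mathcal{O}(\log_B n)$, and that the parallel most-significant-block reduction underlying \Cref{lem:pem-string-compare} composes correctly across the repeated chunk invocations without reloading already-processed blocks, so that the I/O work of the ``agree'' invocations remains a genuine telescoping sum bounded by $\mathcal{O}(k/(\alpha B))$.
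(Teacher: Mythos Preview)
Your proposal is correct and follows essentially the same route as the paper: combine \Cref{theorem:parallel-zt-k} with the PEM comparison oracle of \Cref{lem:pem-string-compare} for the $\mathcal{O}(k/(\alpha B)+\log_B n)$ I/O work on comparisons, observe that each node fits in $\mathcal{O}(1)$ blocks so branching costs $\mathcal{O}(1)$ I/Os per node, and handle the $m$-key output by spawning $\mathcal{O}(m/B)$ processors as in the PRAM range-query argument. The paper's own proof is considerably terser---it simply invokes these ingredients without unpacking the branching substeps or the chunked telescoping concern you flag at the end---so your version is a strictly more detailed execution of the same plan.
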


\begin{proof}
	It is possible to spend $\mathcal{O}(\log_B{n})$ I/O span and
	$\mathcal{O}(\frac{k}{\alpha B} + \log_B{n})$ I/O work on comparisons by
	combining the results of \Cref{theorem:parallel-zt-k} with the PEM-friendly
	string comparison oracle from \Cref{lem:pem-string-compare}.
	Any naive branching algorithm can be implemented using $\mathcal{O}(1)$ I/Os
	per node, obtaining the desired results.

	Range queries can be performed similarly to the PRAM model, spawning $k/B$
	processors instead of $k/\log{n}$ processors to traverse down to the leaf
	nodes between $u$ and $v$.
\end{proof}

\begin{corollary}
	Let $\ell$ be the length of the longest common prefix between a key $x$ and
	the stored keys in a parallel string B-tree $T$.
	$T$ can perform prefix search in $\log{\ell} + \mathcal{O}(\log_B{n})$ I/O
	span and $\mathcal{O}(\frac{\ell}{\alpha B} + \log_B{n})$ I/O work in the
	practical CRCW PEM model.
	Operations returning $m$ keys can be done in the same span and in
	$\mathcal{O}(\frac{m}{B})$ additional I/O work.
\end{corollary}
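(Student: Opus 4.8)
The plan is to follow the proof of \Cref{thm:string-b-tree-pem} almost verbatim, swapping the work-optimal but LCP-oblivious parallel comparison bound of \Cref{theorem:parallel-zt-k} for the LCP-aware bound of \Cref{theorem:parallel-zt-l}. First I would recall that each internal node of a string B-tree branches to $\Theta(B)$ children, so the tree has height $\Theta(\log_B n)$ and a root-to-leaf search invokes \textsc{k-Compare} exactly $A(n) = \Theta(\log_B n)$ times.

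Next I would instantiate the parallel \textsc{k-Compare} framework of \Cref{theorem:parallel-zt-l}, using the PEM string-comparison oracle of \Cref{lem:pem-string-compare} in the role of \textsc{Parallel-LCP}: packing a whole block of $\alpha B$ characters into a single ``item'', that oracle reports the first differing position over an $\ell'$-character prefix using $\mathcal{O}(1)$ I/O span and $\mathcal{O}(\ell'/(\alpha B))$ I/O work. The doubling/halving accounting of \Cref{theorem:parallel-zt-l} then bounds the total cost of all comparisons along the search path by $\mathcal{O}(\log(\ell/(\alpha B)) + A(n)) \le \log\ell + \mathcal{O}(\log_B n)$ I/O span and $\mathcal{O}(\ell/(\alpha B) + A(n)) = \mathcal{O}(\ell/(\alpha B) + \log_B n)$ I/O work. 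For branching, exactly as in \Cref{thm:string-b-tree-pem}, any naive branching rule (load the $\mathcal{O}(1)$ blocks making up the node and locate the child pointer from the stored per-node LCP and first-character metadata) costs $\mathcal{O}(1)$ I/Os, so over the $\mathcal{O}(\log_B n)$ visited nodes it adds only $\mathcal{O}(\log_B n)$ to both the I/O span and the I/O work --- here, unlike in \Cref{lem:branching}, we do not need the range-tree construction, since we are not insisting on $\mathcal{O}(B)$-work branching. Summing the comparison cost and the traversal/branching cost gives the claimed $\log\ell + \mathcal{O}(\log_B n)$ I/O span and $\mathcal{O}(\ell/(\alpha B) + \log_B n)$ I/O work in the practical CRCW PEM model.

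The range-query extension would go exactly as in \Cref{thm:string-b-tree-pem}: perform the above search for both endpoints $u$ and $v$ while recording the keys visited at each node, compute the answer size $m$, allocate an output array of $m$ pointers, spawn $\Theta(m/B)$ processors that each descend to their block of leaf keys between $u$ and $v$ in $\mathcal{O}(\log_B n)$ I/O span and work, and finally let those processors concurrently copy the $m$ key pointers into the array, for $\mathcal{O}(m/B)$ additional I/O work and no additional span. The only point that warrants care --- and the closest thing to an obstacle --- is verifying that the telescoping argument behind \Cref{theorem:parallel-zt-l} survives when an ``item'' is a full $B$-block rather than a single word: we must check that (i) the PEM oracle of \Cref{lem:pem-string-compare} has $\mathcal{O}(1)$ I/O span no matter how many blocks it is handed, and (ii) carrying the current doubling state $\delta$ across consecutive \textsc{k-Compare} calls touches only $\mathcal{O}(1)$ words of metadata and hence $\mathcal{O}(1)$ I/Os. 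Both hold immediately --- (i) is precisely the statement of \Cref{lem:pem-string-compare}, and the block-level doubling depth $\log(\ell/(\alpha B)) \le \log\ell$ is what drives the span --- so the remainder is the same bookkeeping as in the proof of \Cref{thm:string-b-tree-pem}.
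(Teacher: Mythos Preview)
Your proposal is correct and matches the paper's approach exactly: the paper's own proof is a single sentence stating that the corollary follows from the proof of \Cref{thm:string-b-tree-pem} by substituting the LCP-aware parallel \textsc{k-Compare} procedure of \Cref{theorem:parallel-zt-l} for that of \Cref{theorem:parallel-zt-k}. Your write-up is considerably more detailed than the paper's, including the explicit verification that the PEM oracle of \Cref{lem:pem-string-compare} and the block-level doubling state behave correctly, but the underlying argument is identical.
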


The proof of this corollary directly follows from the previous proof using the
LCP-aware parallel \textsc{k-Compare} procedure from
\Cref{theorem:parallel-zt-l}.

\end{document}